\documentclass{article}
\usepackage{lmodern}
\usepackage[a4paper, top=1.5cm, bottom=2cm, left=2.5cm, right=1.5cm]{geometry}
\usepackage[english]{babel}
\usepackage{graphicx}
\usepackage{onlyamsmath}
\usepackage{amsfonts}
\usepackage{amssymb}
\usepackage{ntheorem}
\usepackage{dutchcal}
\usepackage{authblk}

\usepackage{soul}
\setstcolor{red}
\usepackage
	{todonotes}

\usepackage{color}

\newcommand{\e}{\varepsilon}
\newcommand{\se}{\sqrt{\varepsilon}}
\newcommand{\E}{\mathbb{E}}
\newcommand{\Pro}{\mathbb{P}}
\newcommand{\R}{\mathbb{R}}

\newcommand{\CalA}{\mathcal{A}}
\newcommand{\CalI}{\mathcal{I}}
\newcommand{\CalD}{\mathcal{D}}
\newcommand{\CalC}{\mathcal{C}}
\newcommand{\CalK}{\mathcal{K}}
\newcommand{\CalT}{\mathcal{T}}
\newcommand{\CalR}{\mathcal{R}}
\newcommand{\CalV}{\mathcal{V}}
\newcommand{\CalS}{\mathcal{S}}
\newcommand{\CalO}{\mathcal{O}}
\newcommand{\CalJ}{\mathcal{J}}
\newcommand{\CalL}{\mathcal{L}}
\newcommand{\CalF}{\mathcal{F}}

\newcommand{\CalU}{\mathcal{U}}

\newcommand{\CalX}{X}

\newcommand{\br}{\mathbf{r}}

\newcommand{\bx}{\mathbf{x}}
\newcommand{\by}{\mathbf{y}}
\newcommand{\bk}{\mathbf{k}}

\newcommand{\bq}{\mathbf{q}}
\newcommand{\bp}{\mathbf{p}}

\newcommand{\bC}{\mathbf{C}}

\newcommand{\bcals}{\mathbcal{s}}

\newcommand{\hpi}{\hat{u}^{\e,inc}_\omega}

\newcommand{\hpr}{\hat{u}^{\e,ref}_\omega}

\newcommand{\hpt}{\hat{u}^{\e,tr}_\omega}

\newcommand{\hatr}{\hat{a}^{\e,tr}_\omega}

\newcommand{\hbr}{\hat{b}^{\e,ref}_\omega}

\newcommand{\hp}{\hat{u}^\e_\omega}

\newcommand{\hau}{\hat{a}^\e_{0,\omega}}
\newcommand{\hbu}{\hat{b}^\e_{0,\omega}}

\newcommand{\lwu}{\bcals^\e_{0}}

\newcommand{\lwd}{\bcals^\e_{1}}
\newcommand{\ku}{k_{0,\omega}}

\newcommand{\1}{\mathbf{1}}

\newcommand{\ba}{\begin{eqnarray}}
\newcommand{\ea}{\end{eqnarray}}
\newcommand{\ban}{\begin{eqnarray*}}
\newcommand{\ean}{\end{eqnarray*}}

\newtheorem{theorem}{Theorem}[section]

\newtheorem{proposition}{Proposition}[section]
\newtheorem{corollary}{Corollary}[section]
\newtheorem{lemma}{Lemma}[section]

\theoremstyle{nonumberplain}
\theorembodyfont{\upshape}
\newtheorem{proof}{Proof}

\title{Generalized Snell's laws for rough interfaces}

\author[1]{Christophe Gomez}
\author[2]{Knut S{\o}lna}

\affil[1]{Aix Marseille Univ, CNRS, I2M, Marseille, France}
\affil[2]{UC Irvine, Department of Mathematics, Irvine, CA, USA}

\date{} 

\begin{document}

\maketitle

\begin{abstract} 
In this paper, we consider the reflection and transmission problem of waves by a rapidly oscillating rough interface that exhibits general mixing properties. Using an asymptotic analysis based on a separation of scales, corresponding to a paraxial (parabolic) scaling regime, we precisely characterize the specular and speckle (diffusive) components of the reflected and transmitted fields. A \emph{critically scaled} interface is considered, in the sense that the amplitudes of the interface fluctuations and the central wavelength are of the same order. When the correlation length of the interface fluctuations is of the same order as the beam width, random specular components arise in both the reflected and transmitted waves, while no speckle component is observed. Equivalently, the reflected and transmitted fields are essentially confined to the cones formed by the specular components (specular cones) with directions given by the classical Snell's law of reflection and refraction. When the correlation length is smaller than the beam width, a specular homogenization regime emerges. In this case, the rough interface can be approximated by an effective flat interface, yielding deterministic specular reflected and transmitted cones. However, broader cones containing the specular cones appear, within which the wavefields form speckle patterns (speckle cones) whose total energy is of leading order. We provide the two-point correlation functions of these speckle patterns and establish a central-limit-theorem-type result, showing that they can be modeled as Gaussian random fields. These results enable the identification of generalized Snell's laws of reflection and transmission, which depend on an effective scattering operator at the interface.
\end{abstract}

\begin{flushleft}
\textbf{Keywords.} 
Wave propagation, Rough Surface, Random Medium, Wave Speckle, Snell's laws 
\end{flushleft}


\tableofcontents 

\section*{Introduction}

Wave scattering by rough surfaces lies at the heart of several branches of physics and engineering. Understanding these phenomena is crucial to a wide array of disciplines, including optics, solid-state physics, remote sensing, radar technology, environmental monitoring, communications, and non-destructive testing \cite{bass79, Darmon, ishimaru78, book, ogilvy91}. When waves impinge upon a rough surface, they undergo scattering due to irregularities in the surface profile, leading to changes in their direction, amplitude, and phase. The interactions between waves and rough surfaces give rise to a complex interplay of reflection, transmission, and diffraction. Consequently, the study of wave scattering by rough surfaces aims at unraveling the intricate mechanisms governing these interactions in order to predict and analyze wave behavior in practical scenarios. For radar systems, for instance, the theory of electromagnetic wave scattering from rough surfaces is essential to describe the effects of land or sea surface roughness. In optics, theories describing scattering from materials exhibiting angle-dependent reflectance play an important role. For acoustic waves, significant applications arise in ocean acoustic tomography or thermometry, where seabed and sea-surface roughness are critical, as well as in non-destructive testing using ultrasonic waves.

A number of works in the physical literature consider random interface problems in a perturbative approach, in which small interface variations give rise to relatively small corrections to the transmitted and reflected wavefields. In the present work, we model the interface fluctuations as a random field, which allows us to describe the transmitted and reflected fields even when they are modified at leading order by the presence of the random interface. Beam propagation involving reflection from, or transmission through, a rough interface requires specialized techniques that differ somewhat from those associated with bulk propagation \cite{fouque}.

\begin{figure}[tbh]
\centerline{\includegraphics[scale=0.3]{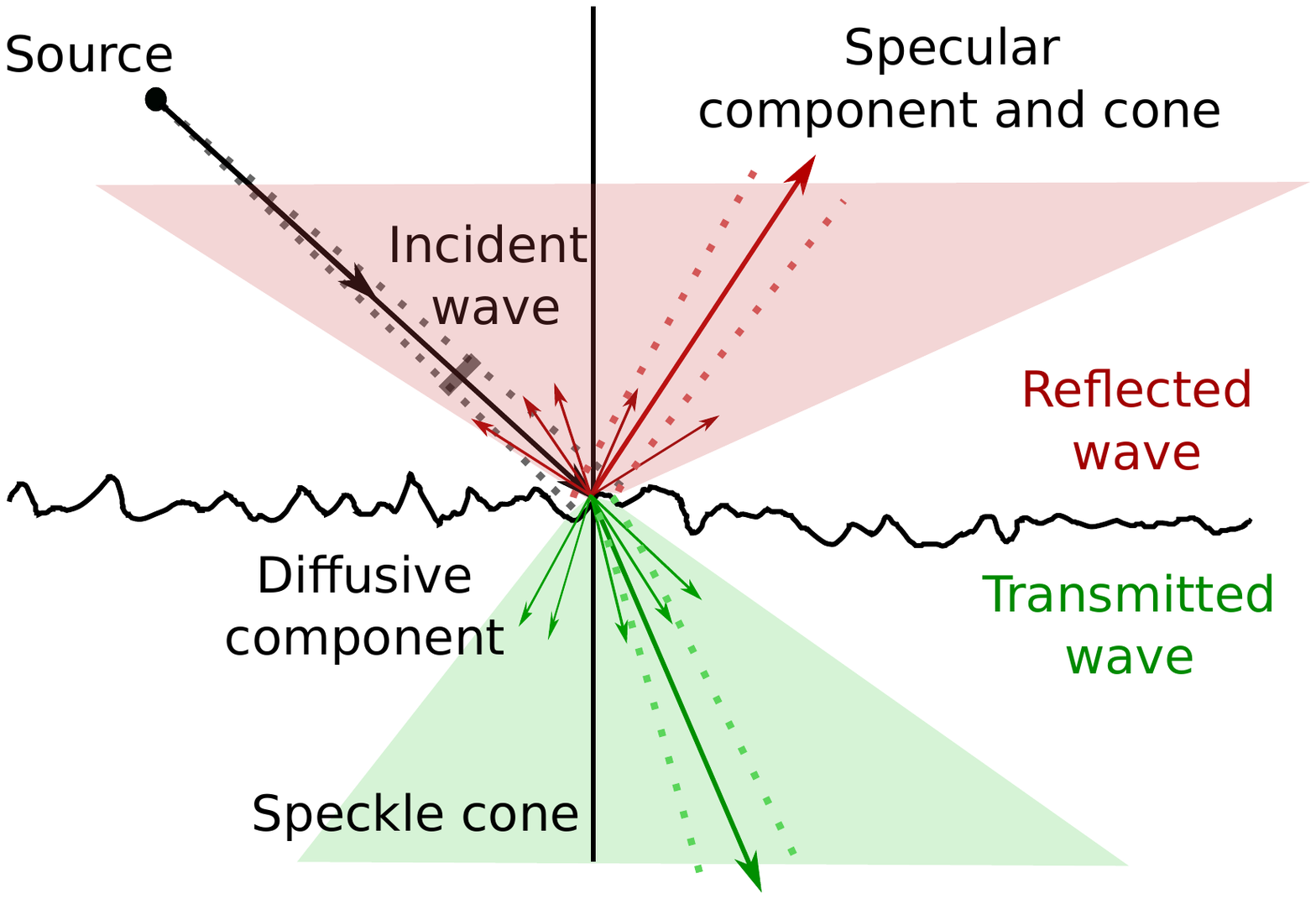} }
\caption{Illustration of the basic physical setup. A source illuminates (gray) a rough surface, producing a reflected wave (red) and a transmitted wave (green). The incident pulse is illustrated by a gray rectangle, and the gray dots represent the associated probing cone. Both the reflected and transmitted waves generally exhibit specular and diffusive components. The directions of the specular components, given by the classical Snell's law of reflection and refraction, correspond to the long arrows, and the associated specular cones are drawn with dotted lines. The diffusive components (speckles), whose directions are represented by small arrows, result from significant scattering of the incident wave by the rough interface. The reflected and transmitted speckle cones are illustrated as light red and green regions, respectively.}
\label{fig_0}
\end{figure}
The basic configuration we consider is illustrated in Figure~\ref{fig_0}. A beam-like wavefield, under the paraxial (parabolic) scaling, illuminates a rough interface separating two homogeneous media with different parameters above and below the interface. The incident wave gives rise to a reflected and a transmitted wave, each of which can be further decomposed into two main components: a specular component and a speckle component. The specular components correspond to the contributions associated with a flat interface and follow the classical Snell's laws. The speckle component, also referred to as the incoherent or diffuse component, is generated by scattering. The central questions we address are the following: (\emph{i}) how is the {\bf specular cone}  and wavefield therein modified by the presence of interface fluctuations? and (\emph{ii}) how can the \textbf{speckle cone} and the wavefield component therein be characterized? In particular, what determines the spatial support of the cone and the associated wavefield statistics? A key parameter distinguishing various canonical scattering regimes is the correlation length, or characteristic scale of variation, of the random interface relative to the beam width. In the critical scaling scenario considered here, the amplitude of the interface fluctuations is of the same order as the wavelength. We show in particular that when the interface spatial fluctuations occur on the scale of the beam width (or probing cone), the reflected and transmitted wave pulses exhibit random arrival-time properties. However, their main wave energy is confined to the cones formed by the specular components (specular cone, with directions given by the classical Snell's law of reflection and refractions), since the interface fluctuations do not induce strong coupling between modes with different lateral wavenumbers propagating in oblique directions. By contrast, when the interface spatial fluctuations occur on scales smaller than the width of the probing cone, such mode coupling becomes significant. This results in homogenized specular reflected and transmitted waves with frequency-dependent attenuation determined by the interface elevation statistics. The energy missing from these effective waves is transferred to broad speckle components, which exhibit Gaussian statistics and carry a total energy of relative order one. Although the interface roughness in our setting is not strong enough to generate enhanced backscattering effects \cite{ishimaru91}, the scattering operator we derive is similar to that obtained in the physical literature under the Born (single-scattering) approximation \cite{santenac09}, while no such approximation is made here. The detailed characterization of the speckle cones further enables the derivation of generalized Snell's laws of refraction and transmission, analogous to those reported in \cite{aieta, yu}. Overall, this work bridges two different approaches to wave scattering by rough interfaces within a unified framework, by rigorously deriving both the scattering operator at the interface and the associated generalized Snell's laws.

To summarize, our objective here is to characterize the specular and speckle cones and how the wave fields behave within them.  

There is a substantial literature addressing the important problem of wave scattering by rough surfaces or interfaces. Most of this literature deals with physically motivated expansions, such as perturbative approaches or Kirchhoff-type approximations, which impose strong conditions on the applicable scaling regimes \cite{ingve2,Darmon,TME,Sanchez,Li,book,ingve}. Integral equation formulations have also been extensively studied \cite{DeSanto,aa,shi}, in particular for the description of near-field scattering \cite{Sanchez2}. In addition, sophisticated numerical methods have been developed to analyze wave scattering from complex geometries \cite{rokhlin,bruno}. In this paper, we focus on situations in which such geometries are described only in a statistical sense. While homogenization techniques have been used to derive effective interface conditions for random interfaces \cite{gallas,nevard}, our emphasis here is also on the statistical characterization of the diffuse (speckle) components of the scattered fields. In our analysis, the total energy carried by the diffusive component may be comparable to that carried by the specular component.
 
Despite the importance of this problem and its wide range of applications, there exists only a limited number of rigorous results characterizing wavefields transmitted through or reflected from rough interfaces \cite{alonso12, bal99, gomez15, li, nevard}. In the high-frequency regime we consider a family of characteristic scaling regimes depending on the magnitude of the beam width relative to the correlation length of the interface fluctuations. A first technical challenge stems from limitations of the standard theory for stochastic partial differential equations (SPDEs) \cite{prato}. This difficulty arises because we assume stationarity of the interface height fluctuations, so that the noise term cannot be represented by a Hilbert--Schmidt covariance operator. From this perspective, the situation is similar to that considered in \cite{dawson1984random}, where well-posedness of the white-noise paraxial (It\^o--Schr\"odinger) equation is studied. However, in our setting the randomness is carried by the interface fluctuations rather than by bulk fluctuations, and we consider a scaling-limit regime rather than starting from a white-noise model. A second technical challenge arises from the need to describe not only the transmitted wavefield but also the reflected wave. This is achieved through an embedding-type formulation that parameterizes the wavefield in terms of a family of up- and down-propagating wave components. A third technical challenge concerns the nature of the random interface, which depends only on the lateral variable but not on the propagation (longitudinal) variable. As a consequence, the classical framework of diffusion approximation does not apply \cite{fouque}.

Our main objective is to develop a novel framework for the precise characterization of diffusely scattered waves (speckles) from a rough interface in terms of Gaussian random fields, together with the associated cones produced by these components, both in transmission and reflection (e.g.\ radar backscattering). Such analytical frameworks are promising for remote-sensing imaging problems, including (\emph{i}) the imaging of parameters of a rough (random) interface, and (\emph{ii}) the imaging of an object hidden behind the interface, for instance by computing the empirical spectrum of the reflected speckle field \cite{102,108,S2,S1}. This analysis can be further generalized to Synthetic Aperture Radar (SAR) imaging scenarios, in which the source cone moves and is typically mounted on an aircraft. In more general imaging contexts, one may aim to exploit the so-called memory effect, whereby the speckle pattern illuminating a hidden object is not completely altered but is instead shifted by a specific amount when the incident angle of the source is changed. We further note that, in the context of polarimetric imaging schemes, it is crucial to capture the coupling between different polarization modes at the interface and to understand how this coupling depends on the interface statistics \cite{bruno}. Generalizations to such imaging configurations, to the case of general hyperbolic systems \cite{85}, as well as to strongly fluctuating regimes in which the interface fluctuation amplitudes are large relative to the wavelength, will be addressed in future work.

The outline of the paper is as follows. First, in Section \ref{snell}, we summarize the main results of the paper, in particular the generalized Snell's laws. In Section \ref{sect1P2}, we describe the wave propagation scenario and the scaling regime under consideration. In this section, we also discuss mixing properties in the statistical modeling of the interface fluctuations. Section \ref{sec:flat_interface} presents the basic decomposition of the wavefield into lateral Fourier modes (lateral wavenumbers) and into up- and down-propagating components in the context of a flat interface. This decomposition allows us to derive the reflection and transmission conditions at the flat interface, which couple up- and down-propagating wave modes for each lateral wavenumber. In Section \ref{sec:randint}, this description is extended to the case of a random interface. This leads to the identification of a random scattering operator at the interface, capturing the coupling between up- and down-propagating wave modes at different lateral wavenumbers induced by the random interface fluctuations. 
 In Section \ref{sec:specint}, we discuss the regime in which the interface fluctuations decorrelate on a scale comparable to the width of the probing cone, leading to specular cones that carry essentially all of the energy. In Sections \ref{sec:randint1}--\ref{sec:randint3}, we consider the case where the interface fluctuations decorrelate on a scale that is small relative to the width of the probing cone, but large compared to the wavelength. In this regime, the interface generates both specular cones and broader speckle cones, leading to the identification of generalized Snell's laws. Finally, Sections \ref{sec:proof1} and \ref{sec:proof2}, together with Appendices \ref{sec:proof_continuity}, \ref{proof:lem_mix}, and \ref{sec:proof_lem_tech}, contain detailed proofs of the technical results.

\section{Summary of results and Generalized Snell's laws} \label{snell}  

The proposed analytical framework allows us to derive, from first principles of physics, generalized Snell's laws of reflection and refraction for rough interfaces, as illustrated in Figure~\ref{fig:ref_intro}. We consider a three-dimensional propagation medium, in which the reflection and transmission angles are defined with respect to the axis normal to the rough interface and are therefore nonnegative. In this setting, the incident wave may be reflected back toward the source location, 
and also transmitted through the interface, and with the scattering depending on the roughness of the interface.

Consider first the specular reflection and transmission cones illustrated in Figure \ref{fig_0}. These cones obey the classical Snell's law, with a reflection angle equal to the incident angle $\theta_{inc}$ and a transmission angle $\theta^0_{tr}$ defined as in the flat-interface case (see Figure \ref{fig:ref_intro}). If the interface fluctuations occur on a spatial scale that is small compared to the beam width, a homogenization result applies and the effect of the random fluctuations can be represented by a frequency-dependent damping factor. This damping factor is defined in terms of the marginal distribution of the interface height fluctuations. For a pulsed source, this results in a modification through convolution with a pulse-shaping function. When the interface fluctuations occur on the scale of the beam width, the leading-order field consists only of specular components, which become random. The randomness arises through a phase term reflecting the perturbation of the interface in the integral representation of the transmitted and reflected wavefields. These results are discussed in more detail in Sections \ref{sec:specint} and \ref{sec:randint1}

We next introduce the speckle component illustrated in Figure \ref{fig_0}. 
To motivate our main result regarding the generalized Snell's law, we first consider the case of a standard diffraction grating.  Consider an incoming plane wave with wavelength $\lambda$ propagating in a horizontal direction normal to the diffraction grooves and incident angle $\theta_{inc}$ relative to the grating normal. For $d$ the uniform spatial period of the grating, the reflected diffracted field exhibits maxima at diffraction angles $\theta_{ref}(m)$ given by the grating equation
\begin{equation}\label{grating}
\sin(\theta_{ref}(m)) = \sin(\theta_{inc}) + \frac{m\lambda}{d},
\qquad m \in \mathbb{Z}.
\end{equation}
We now ask how this picture is modified when the interface is not a perfect periodic grating but rather a rough random interface with isotropic random height fluctuations. The interface separates two half-spaces with propagation speeds $c_0$ and $c_1$, respectively (see Figure~\ref{fig:ref_intro}). In this setting the spatial scale $d$ is \emph{replaced} by $\ell_c$, the correlation length characterizing the scale of variation of the interface fluctuations. Moreover, instead of a discrete set of characteristic reflection angles, we obtain a continuum of reflected angles giving rise to a speckle cone. In this framework the index $m$ is \emph{replaced} by a relative scattering slowness vector $\bp \in \R^2$, whose distribution is described by the following effective frequency-dependent measure
\begin{equation}\label{def:A_intro}
\CalA(v,\omega,\bp) :=
\int \E\Big[ e^{i\omega v ( V(\by) - V(0)) } \Big] 
e^{-i\omega \bp \cdot\by} \, d\by,
\end{equation}
with $v = 2 \cos(\theta_{inc})/c_0$. According to the Bochner--Khinchin theorem, $\CalA$ defines a finite positive measure on $\R^2$ with respect to $\bp$, which we refer to as the \emph{scattering distribution} in this paper. For notational simplicity, even though $\CalA$ is a measure in $\bp$, we still use the notation $\CalA(v,\omega,\bp)$ throughout the paper. This measure involves the stationary random interface elevation $V$ through the characteristic function of the relative elevation $V(\by)-V(0)$. Note that in our formulation the scattering operator depends explicitly on the angular frequency $\omega$. This dependence captures the interaction between the frequency content of the pulse and the relative elevation of the interface.

A distribution of this form was previously derived in the physical literature \cite{santenac09} under the Born (single-scattering) approximation to describe the mean intensity of the diffusely reflected wavefield. Here we recover the same scattering operator without invoking this approximation. For a flat interface the distribution of $\bp$ is supported at the origin. In general, $\bp$ is determined by the observation point relative to the specular cone, and the intensity of the reflected field is proportional to $\CalA$. The position at which the speckle is evaluated relative to the specular component is given by \eqref{eq:off} below.

 \begin{figure}
\centerline{\includegraphics[scale=0.27]{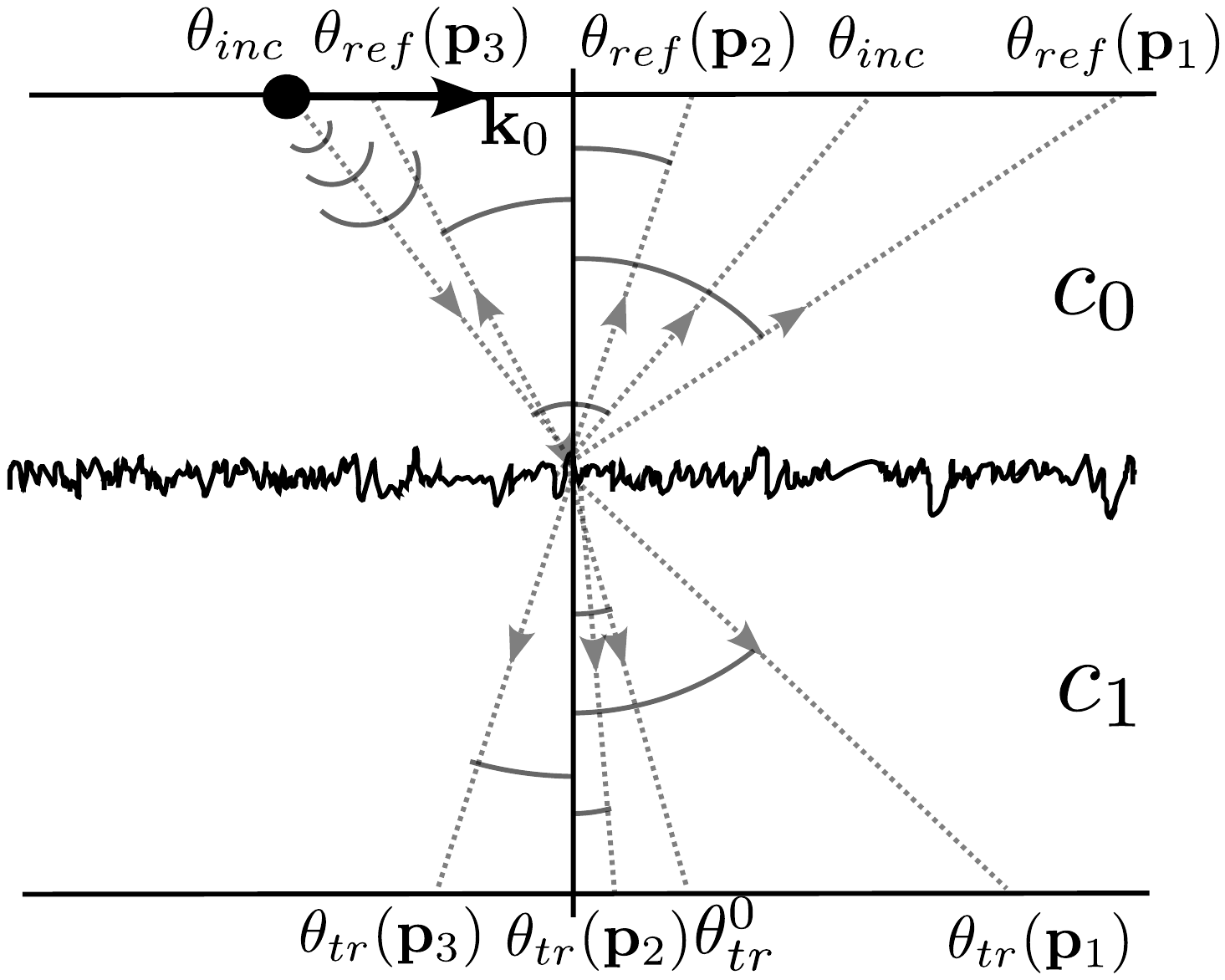} }
\caption{Illustration of the generalized Snell's laws of reflection and refraction. The reflection and transmission angles are parameterized by  the scattering slowness vector $\bp$.   }
\label{fig:ref_intro}
\end{figure}

We next introduce the generalized Snell’s law that associates a value of $\bp$ with a reflection angle. As remarked above, the reflected intensity in this direction depends on the scattering distribution $\CalA$.
Consider  an incident wave with angular frequency $\omega$ and  
 two-dimensional slowness vector $\bk_0$ so that 
 the  incident angle $\theta_{inc} > 0$ is given by
\[
  \sin( \theta_{inc} )   = c_0 |\bk_0| .
 \]
 see (\ref{eq:snell0}) below.
 The reflection angle $\theta_{ref}$ at scattering slowness vector  $\bp$  is then given by
\begin{equation}\label{eq:ref_formula_intro}
\frac{\sin(\theta_{ref}(\bp))}{c_0} = \frac{\sin(\theta_{inc})}{c_0} +  |\bk_0| \CalC(\bp;\bk_0,\theta_{inc},\theta_{inc}) ,  
\end{equation} 
with the speckle  modulation factor being 
\[
\CalC(\bp;\bk,\theta,\phi) =  \sqrt{\frac{\Xi(\bp;\bk,\theta,\phi)}{1+\sin^2(\theta)\Big( \Xi(\bp;\bk,\theta,\phi)-1\Big)}}-1 
\]
where
\[
\Xi(\bp;\bk,\theta,\phi) = \Big(1+\xi(\phi)\frac{\bp\cdot\bk}{\cos^2(\theta)}\Big)^2 + \xi^2(\phi)(\bp\cdot \bk^\perp)^2, \qquad 
\text{with}\qquad \xi(\phi)  =\frac{r^2_0}{\ell_c L}\frac{c_0^2}{\sin^2(\phi)},
\]
and with $\bk_0^\perp = (-\bk_{0,2}, \bk_{0,1})^T$.  
Here, $r_0$ denotes the beam width generated by the source term, $\ell_c$ the characteristic correlation length of the interface fluctuations, and $L$ is the typical propagation distance from the source plane to the interface. Note that for a flat interface with scattering distribution $\CalA$ supported in $\bp$ only at the origin, we obtain the classic Snell's law in refraction since $\CalC({\bf 0};\bk,\theta) =  0 $.

We next consider transmission through the interface and  the refraction angle. The generalized Snell's law for transmission reads
\begin{equation}\label{eq:tr_formula_intro}
\frac{\sin(\theta_{tr}(\bp))}{c_1} = \frac{\sin(\theta_{inc})}{c_0}+|\bk_0| \CalC(\bp; \bk_0, \theta^0_{tr},\theta_{inc}) . 
\end{equation}
Here, $\theta^0_{tr}$ denotes the specular refraction angle for a flat interface and is given by the standard Snell's law for refraction:
\[
\frac{\sin(\theta^0_{tr})}{c_1}=\frac{\sin(\theta_{inc})}{c_0}.
\]

The scattering distribution   $\CalA(v,\omega,\bp)$
in \eqref{def:A_intro} is here evaluated at  $v=\cos(\theta_{inc})/c_0 - \cos(\theta^0_{tr})/c_1$. In both relations \eqref{eq:ref_formula_intro} and \eqref{eq:tr_formula_intro}, the key parameter characterizing the size of the speckle (diffusive) cones, as well as the magnitude of the corrections relative to the classical Snell's laws, is the ratio $r_0^2/(\ell_c L)$. This parameter is analogous to the inverse of the Fresnel number with $\ell_c$ plays the role of the propagation distance in the usual definition of this number. In the paraxial regime considered here, 
we assume for  $\lambda$ is the central wavelength of the source term: 
\[
\frac{r^2_0}{\lambda} \sim L,
\]
corresponding to a Rayleigh length of the order of the typical propagation distance $L$. The Rayleigh length corresponds to the distance from the beam waist to the location where the cross-sectional area of the beam is doubled. In other words, in this scaling, the beam does not diverge significantly and remains mainly collimated over the typical propagation distance. In our situation, this scaling implies
\[
\frac{r^2_0}{\ell_c L} \sim \frac{\lambda }{ \ell_c }.
\] 
The  latter ratio characterizes the relative roughness of the interface. 
The interface is considered rough when $\lambda \sim \ell_c$, whereas the interface appears smooth to the wave when $\lambda \ll \ell_c$. In what follows the parameter $\lambda_c/\ell_c$ is referred to as the \emph{relative roughness parameter}, which is an analogous of the inverse Fresnel-type number $r^2_0/(\ell_c L)$.
 

The corrections in \eqref{eq:ref_formula_intro} and \eqref{eq:tr_formula_intro} are similar to those obtained in the physical literature \cite{aieta, yu} for relatively {\it smooth}  interfaces:
\begin{equation}\label{eq:correction}
|\bk_0| \CalC_{ref}(\bp)  \underset{\lambda\ll \ell_c}{\simeq} \frac{1}{n_0k_0} \frac{2\pi \bp \cdot \widehat \bk_0}{ c_0 \ell_c \cos^2(\theta_{inc})}
\qquad\text{and}
\qquad
|\bk_0| \CalC_{tr}(\bp) \underset{\lambda\ll \ell_c}{\simeq} \frac{1}{k_0} \frac{2 \pi \bp \cdot \widehat \bk_0}{\ell_c \cos^2(\theta^0_{tr})},
\end{equation}
with wavenumber $k_0 = 2\pi/\lambda$, refractive index $n_0 = 1/c_0$, and initial lateral propagation direction $\widehat{\bk}_0 = \bk_0/|\bk_0|$.
In contrast to \cite{aieta, yu}, where these corrections are expressed in terms of space-dependent phase shifts, they are here formulated in a probabilistic setting in terms of the scattering direction $\bp$, distributed according to the effective scattering distribution $\CalA$ defined by \eqref{def:A_intro} as it can be recast as a probability measure after as appropriate normalization.
From the generalized Snell's relation \eqref{eq:ref_formula_intro}, it follows that for  a relatively smooth interface ($\lambda\ll \ell_c$) the reflection angle can be approximated by
\begin{equation}\label{eq:correction_angle_ref}
\theta_{ref}(\bp) \underset{\lambda\ll \ell_c}{\simeq}  \theta_{inc} + \Big(\frac{\lambda}{ \ell_c} \Big) \frac{c_0 \bp\cdot \widehat \bk_0}{\cos^3(\theta_{inc})} 
   \end{equation}
yielding small deviations from the specular reflection   angle  of order $\lambda/\ell_c$. 
By comparison with diffraction grating \eqref{grating}, which in the relatively smooth case gives a relation of the form
\[ 
  \sin( \theta_{ref}(m) )  \simeq  \sin(\theta_{inc}) + \frac{m\lambda}{ \ell_c} \qquad\text{with}\qquad m=\frac{c_0 \bp\cdot \widehat \bk_0}{\cos^2(\theta_{inc})},
\]
we see that the parameter $d$ (characterizing the periodicity) in the correction is replaced by $\ell_c$ (the correlation length) describing the scale of variation of the interface fluctuations.

Similarly we have in view of  \eqref{eq:tr_formula_intro}  that  the   refraction angle can be approximated by 
\begin{equation}\label{eq:correction2}
 \theta_{tr}(\bp) \underset{\lambda\ll \ell_c}{\simeq}  \theta^0_{tr} +\Big(\frac{\lambda}{\ell_c}\Big)\frac{c_1 \bp\cdot \widehat \bk_0}{\cos^3(\theta^0_{tr})} ,
\end{equation}

Note also that  for normal incidence $\theta_{inc}=0$, these relations reduce to 
\[
\theta_{ref}(\bp) = \arctan\Big(\frac{\lambda c_0 |\bp|}{\ell_c }\Big)\qquad\text{and}\qquad \theta_{tr}(\bp) = \arctan\Big(\frac{\lambda c_1|\bp|}{ \ell_c }\Big);
\]
see (\ref{eq:tan1})  and (\ref{eq:tant}).

Finally, in the case of a source emitting a short pulse, the transmitted and reflected speckle (diffusive) components are shown to be {\it space--time Gaussian random fields}, whose covariance functions are characterized explicitly through the scattering distribution $\CalA$. These covariance functions show  that, when observed along horizontal planes, the random fields form ellipses characterized by appropriate dispersion matrices, which evolve in time (see Figure~\ref{fig:speckle_intro} for an illustration).

To summarize,  the important parameter governing  the speckle reflection and refraction angles is the inverse Fresnel-type number $r_0^2/\ell_c L$, of order the relative roughness parameter $\lambda/\ell_c$, while the important quantity governing  the statistical structure of the transmitted and reflected speckle fields is the  scattering distribution $\CalA$. 

All the analysis presented in this paper is carried out for a time-dependent source emitting a short pulse, so that the source term has a broad bandwidth. The time-harmonic case, where the source contains a single frequency, will be investigated in future work. We note, however, that the generalized Snell's law derived here remains valid in the time-harmonic regime.

\begin{figure}
\begin{center}
\includegraphics*[scale=0.25]{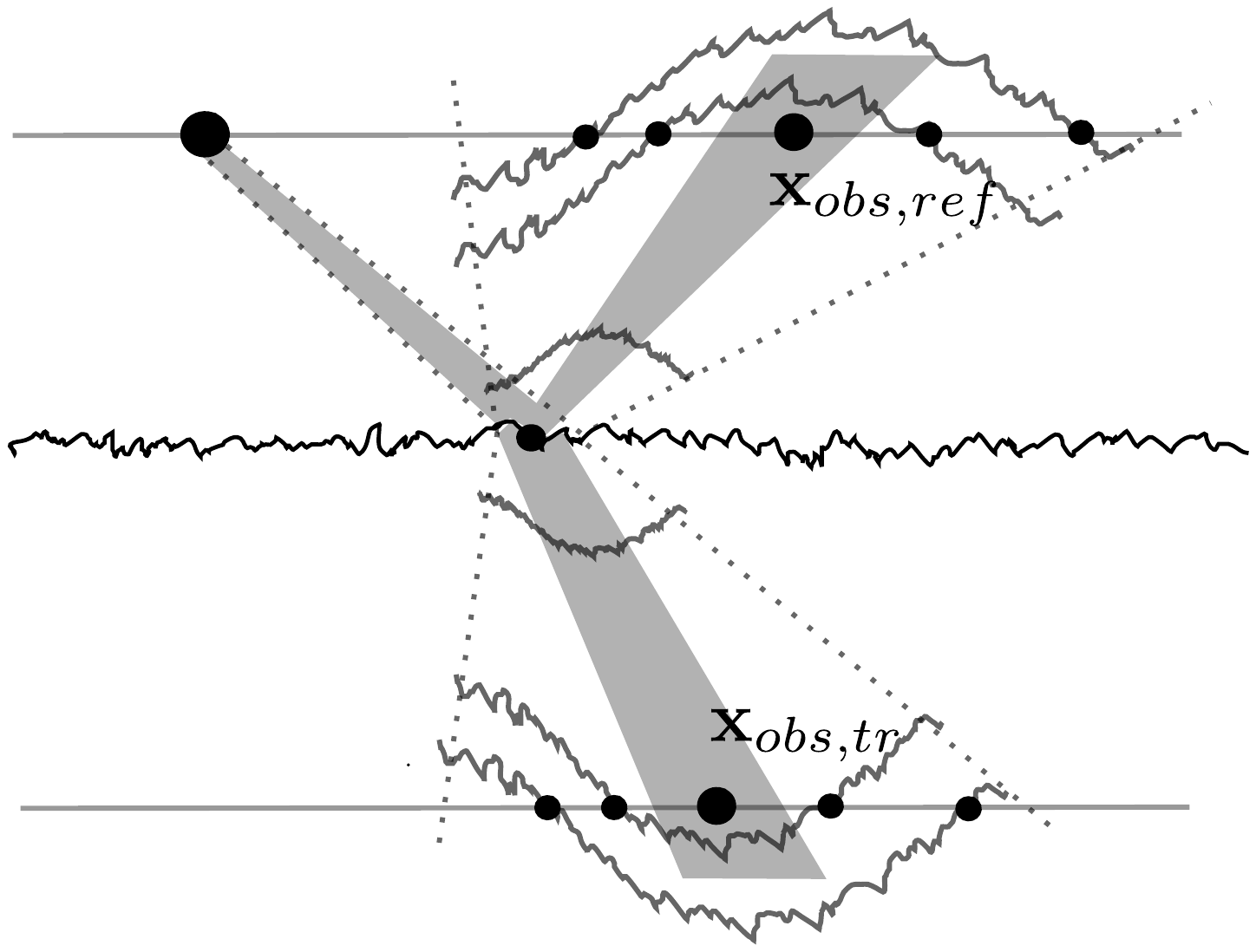} 
\includegraphics*[scale=0.35]{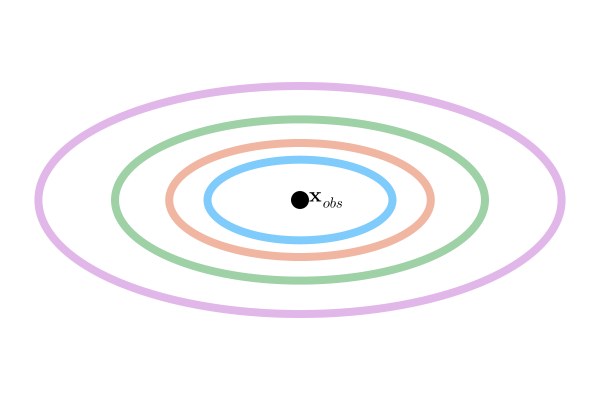} 
\end{center}
\caption{\label{fig:speckle_intro} The left panel illustrates the reflection and refraction of a pulse by a rough interface. The gray region corresponds to the cone formed by the incident wave, as well as the cones associated with the reflected and transmitted specular components observed at $\bx_{obs,ref}$ and $\bx_{obs,tr}$, respectively. When the speckle components intersect a horizontal plane, they form ellipses whose shape evolves in time, as illustrated in the right panel using different colors and centered at $\bx_{obs}$, which denotes either $\bx_{obs,ref}$ or $\bx_{obs,tr}$. The width of these ellipses is determined by the initial pulse duration.
}
\end{figure} 

\section{The physical model}\label{sect1P2}

\paragraph{The wave equation.}

In this paper, we consider three-dimensional linear wave propagation modeled by the scalar wave equation
\begin{equation}\label{eq:wave_equation}
\Delta u - \frac{1}{c^2(\bx,z)}\partial^2_{tt} u = \nabla \cdot {F}(t,\bx,z) \qquad (t,\bx,z)\in \R\times \R ^2 \times \R,
\end{equation}
equipped with zero initial conditions
\[
u(t=0, \bx, z) = \partial_t u (t=0, \bx, z)=0 \qquad (\bx,z)\in \R ^2 \times \R.
\]
The wave model \eqref{eq:wave_equation} implies continuity conditions at the interface, which will be specified below. The coordinate $z$ represents the main propagation axis, while $\bx$ denotes the transverse directions, and $F$ is the source term. Here, the Laplacian operator $\Delta=\Delta_\perp + \partial_{zz}^2$ acts on all spatial variables $\bx$ and $z$. The propagation medium consists of two homogeneous subdomains separated by a random  interface located near the plane $z=z_{int}>0$:
\begin{equation}\label{def:D0}
\CalD^0:=\{(\bx,z)\in \R^2\times \R \quad\text{s.t.}\quad z<z_{int}+\sigma V(\bx/\ell_c)\}
\end{equation}
and
\begin{equation}\label{def:D1}
\CalD^1:=\{(\bx,z)\in \R^2\times \R \quad\text{s.t.}\quad z>z_{int}+\sigma V(\bx/\ell_c)\}.
\end{equation}
We refer to Figure \ref{fig:setup} for an illustration of the physical setup.
\begin{figure}
\begin{center}
\includegraphics*[scale=0.25]{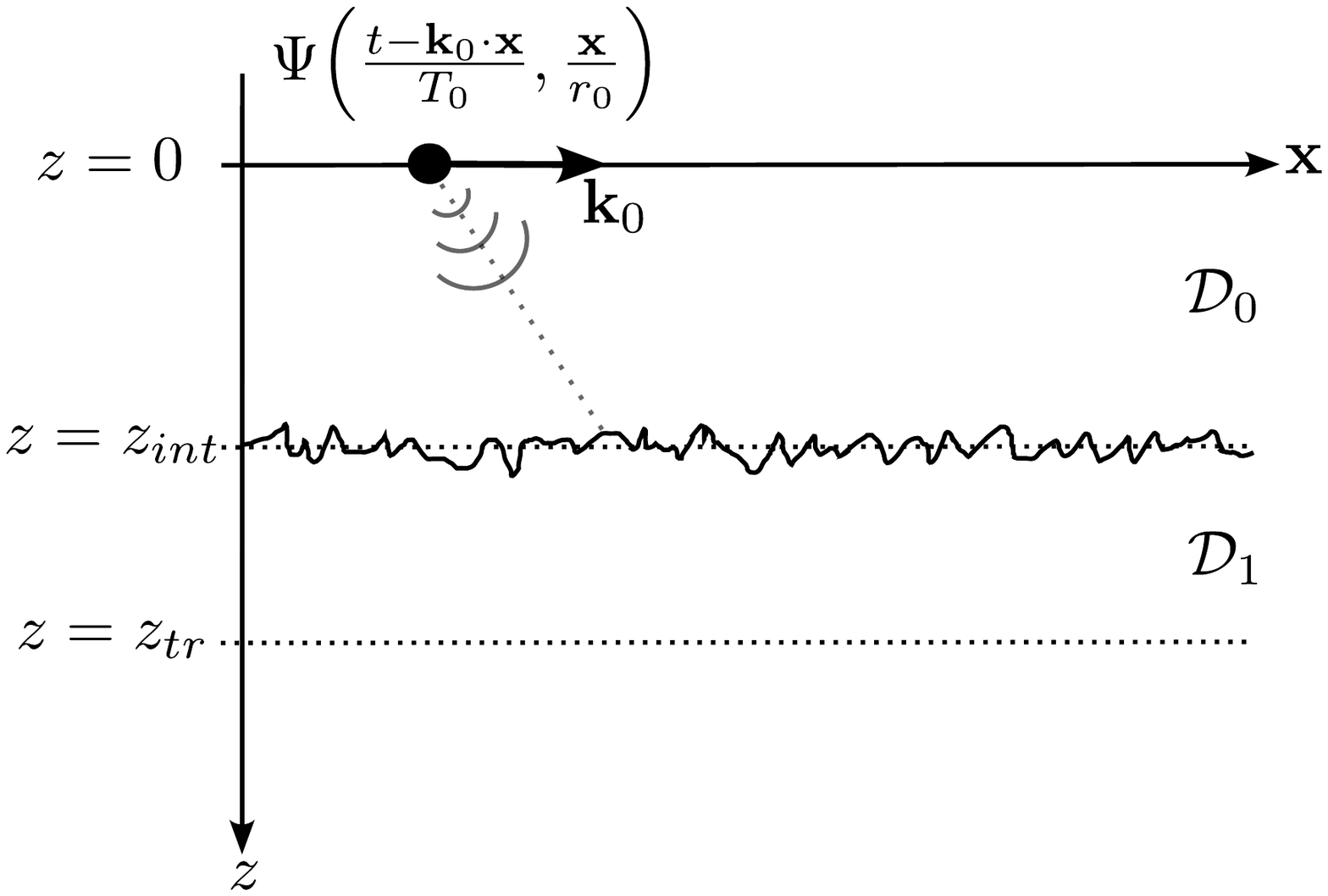} 
\end{center}
\caption{\label{fig:setup} Illustration of the physical setup. The plane $z=0$ contains the source location, while $z=z_{int}$ is the plane around which the rough interface separating $\CalD^0$ and $\CalD^1$ is located. The reflected wave is observed at $z=0$, whereas the transmitted wave is observed at $z=z_{tr}$.
}
\end{figure} 
The function $V$ denotes a mean-zero random field with second-order derivatives, modeling the interface fluctuations and characterizing its roughness. Away from the interface, the wave speed is given by
\begin{equation}\label{eq:wave_speed}
c(\bx,z):= \left\{ \begin{array}{ccl} 
c_0 & \text{ if }  & (\bx,z)\in \CalD^0, \\
c_1 & \text{ if }  &(\bx,z)\in \CalD^1.\\
\end{array} \right.
\end{equation}
The forcing term 
\begin{equation}\label{def:sourceterm}  {F} (t,\bx,z):=
\Psi\Big(\frac{t-\bk_0\cdot \bx}{T_0},\frac{\bx}{r_0}\Big)\delta(z)\mathbf{e}_z,
\end{equation}
where $\mathbf{e}_z$ denotes the unit vector in the $z$-direction, models a source located in the plane $z=0$ that emits a quasi-plane wave $\Psi$ in the direction $(\bk_0,\sqrt{c_0^{-2}-|\bk_0|^2})$ toward the random interface (we assume $|\bk_0|<c_0^{-1}$). The divergence form of the source term in \eqref{eq:wave_equation} is standard in linear acoustics, where $u$ represents the pressure field; see, for instance, \cite{fouque}. This form is adopted here for mathematical convenience, although other types of source terms could be treated similarly. The parameters $T_0$ and $r_0$ denote the temporal pulse width and the spatial beam width of the source, respectively.

The wave equation \eqref{eq:wave_equation} yields the following two continuity relations across the randomly perturbed interface:
\begin{equation}\label{eq:continuity_relation}
u(z=z_{int}(\bx)^+) = u(z=z_{int}(\bx)^-)\qquad\text{and}\qquad \partial_z u(z=z_{int}(\bx)^+) = \partial_z u(z=z_{int}(\bx)^-),
\end{equation}
where
\[
z_{int}(\bx):=z_{int}+\sigma V(\bx/\ell_c).
\]
The derivation of these relations is given in Appendix~\ref{sec:proof_continuity}. Finally, we assume that no waves enter the system from above the source plane or from below the rough interface; all propagating waves are generated by the source term.

\paragraph{The parameter scaling.} The approach considered in this paper is based on a separation of scales technique, in which the relevant length scales are as follows: the typical (reference) propagation distance, denoted by $L$; the central wavelength $\lambda$, related to the pulse duration by
\[
\lambda = c_0 T_0;
\]
the spatial beam radius $r_0$; the correlation length of the interface fluctuations $\ell_c$; and finally the amplitude $\sigma$ of the interface fluctuations $V$. We denote by $z_{int}$ the distance from the source to the interface, and by $z_{tr}$ the distance from the source to the plane where the transmitted wave is recorded (see Figure~\ref{fig:setup}). Both distances are of the order of the typical propagation distance $L$, and we introduce the dimensionless parameter
\[
\e := \frac{\lambda}{L} = \frac{c_0 T_0}{L} \ll 1.
\]  
We consider a high-frequency regime with $\e \ll 1$, together with a paraxial (or parabolic) scaling obtained by enforcing
\[\frac{r^2_0}{\lambda} \sim L,\]
which corresponds to a Rayleigh length of the order of the typical propagation distance. The Rayleigh length $L_R = r_0^2/\lambda$ is the distance from the beam waist to the location at which the cross-sectional area of the beam is doubled. The interface fluctuations are \emph{critically scaled}, meaning that the magnitude of the interface fluctuations and the central wavelength are of the same order,
\ba\label{eq:sl}
\sigma \sim \lambda.
\ea
Finally, let $\ell_c$ denote the correlation length of the (assumed isotropic) interface fluctuations. We consider situations in which this length scale ranges from the central wavelength to the beam width, namely
\ba\label{eq:lesss} 
 \lambda  \lesssim \ell_c  \lesssim  r_0 .
 \ea
The ratio $\lambda/\ell_c$ characterizes how rough the interface appears to the incident wave and determines the angular width of the reflected and transmitted speckle cones. An interface is considered rough when $\lambda \sim \ell_c$, and relatively smooth when $\lambda \ll \ell_c$. We also note that if $r_0 \ll \ell_c$, the situation simplifies and essentially corresponds to a planar interface with random travel-time fluctuations for the reflected and transmitted pulses. The case $\ell_c \ll \lambda$ corresponds to a degenerate regime in which the speckle (diffusive) components form wide cones, whose opening angle is $\pi/2$, and is not considered in the present work.

In the remainder of the paper, for simplicity, we adopt a dimensionless formulation and let  
\begin{equation}\label{eq:scaling_param}
L \sim   1, \qquad r_0/L\to r_0 =\se, \qquad \sigma \to \sigma /L = \e,\qquad \text{and} \qquad \ell_c/L\to \ell_c=\e^\gamma, 
\end{equation}
so that \eqref{eq:sl} is satisfied and in view of \eqref{eq:lesss} we have 
\[
\gamma \in [1/2,1] .
\]
Note that  then $T = L/c_0 = \lambda/(\e c_0) = T_0/\e $ is the characteristic propagation time 
 so that the pulse duration is short compared to the characteristic propagation time, consistent with a high-frequency regime. 
Moreover, the   Rayleigh length $L_R =  r_0^2/\lambda \sim 1$.

In the forthcoming analysis, we distinguish between the case $\gamma = 1/2$ (corresponding to $\ell_c \sim r_0$) and the case $1/2 < \gamma \leq 1$ (corresponding to $\lambda \lesssim \ell_c \ll r_0$). These two regimes lead to qualitatively different behaviors of the reflected and transmitted signals. In the former case, only random specular components are observed, whereas in the latter case, due to the rapid oscillations of the interface relative to the beam radius ($\ell_c \ll r_0$), a homogenization effect occurs for the specularly reflected and transmitted components, together with the emergence of random speckle fields over cones broader than those associated with the specular components.

\paragraph{Random fluctuations.} In \eqref{def:D0} and \eqref{def:D1}, the random fluctuations of the interface separating the two subdomains are described by a continuous, mean-zero stationary random field $V$. For simplicity, we assume that $V(0)$ admits a probability density and that the interface fluctuations satisfy \emph{mixing} properties, describing the loss of statistical dependence of $V$ along the interface. In our context, the basic idea of \emph{mixing} is as follows. For a given set of locations $\bx_1,\dots,\bx_n \in \R^2$, the corresponding values of the field $V(\bx_1),\dots,V(\bx_n)$ become independent as these locations become sufficiently far apart:
\begin{equation}\label{eq:mixing}
\Pro(V(\bx_1)\in A_1,\dots , V(\bx_n)\in A_n) \to \Pro(V(\bx_1)\in A_1)\cdots \Pro( V(\bx_n)\in A_n), 
\end{equation}
for any Borel set $A_1,\dots,A_n \subset \R$, as 
\[
\min_{j,l\in\{1,\dots,n\}} \vert \bx_j - \bx_l \vert \to \infty.
\]
This property is readily satisfied if $V$ is a stationary Gaussian random field with correlation function
\[
R(\bx) = \E[V(\bx + \by)V(\by)],
\] 
which decays to zero as $|\bx|\to \infty$. For more general random fields (not necessarily Gaussian), the property \eqref{eq:mixing} can be formalized through the notion of $\alpha$-mixing as follows. Define
\[
\alpha(r):= \sup_{\substack{S, S' \subset \R^2 \\ d(S,S')> r}} \sup_{\substack{A\in \sigma(V(\bx),\, \bx\in S) \\  B\in \sigma(V(\bx),\,\bx\in S') }} \vert \Pro(A\cap B) - \Pro(A) \Pro(B)\vert,
\]
where
\[d(S,S') = \inf_{\substack{ s\in S\\ s'\in S' }}\vert s- s'\vert\]
denotes the distance between two nonempty subsets $S$ and $S'$, and $\sigma(V(\bx),\, \bx \in S)$ is the $\sigma$-field generated by the family $V_{\vert S} := (V(\bx))_{\bx\in S}$. Informally, the quantity $\alpha(r)$ measures the degree of statistical dependence of the random field $V$ over pairs of regions separated by a distance at least $r$. The $\alpha$-mixing condition corresponds to assuming

\begin{equation}\label{def:alpha_mix}
\alpha(r) \to 0 \qquad \text{as} \qquad r \to \infty,
\end{equation}
implying vanishing statistical dependence between $V_{\vert S}$ and $V_{\vert S'}$ as the distance between $S$ and $S'$ tends to infinity. This notion of mixing plays a central role in this paper, as it allows the homogenization of the specular components, the self-averaging property of empirical two-point correlation functions of the speckle components, and the Gaussianity of the speckle patterns.
 
To quantify the effect of mixing on physical quantities and their covariation, we also introduce a related measure, \emph{$\rho$-mixing}, which captures coherence (or rather the lack thereof) between well-separated random variables. It is defined by 
\begin{equation}\label{def:rho}
\rho(r):=\sup_{\substack{S, S' \subset \R^2 \\ d(S,S')> r}} \sup_{\substack{V\in \CalL^2 ( \sigma(V(\bx),\, \bx\in S) ) \\  W \in \CalL^2(\sigma(V(\bx),\,\bx\in S')) }} 
\vert Corr(V,W) \vert ,
\end{equation}
where $\CalL^2(\CalF)$ denotes the set of $\CalF$-measurable random variables with finite second moments, and $Corr$ denotes the correlation coefficient,
\[
Corr(V,W) = \frac{Cov(V,W)}{\sqrt{Var(V)Var(W)}}.
\]
According to \cite[Theorem~1]{bradley}\footnote{The functions $\alpha$ and $\rho$ used here correspond to $\alpha^\ast$ and $\rho^\ast$ in \cite{bradley}}, one has
\begin{equation}\label{prop:equiv_mix}
\alpha(r) \leq \rho(r) \leq 2\pi\, \alpha(r), \qquad r>0,
\end{equation}
so that the notions of $\alpha$- and $\rho$-mixing are equivalent and indeed we have
$\rho(r) \to 0  ~  \text{as} ~  r \to \infty$. 
 Using the $\rho$-mixing property, the following lemma provides a precise formulation of \eqref{eq:mixing} that will be instrumental in the subsequent asymptotic analysis.
\begin{lemma}\label{lem:mixing}
Let $n\geq 1$ and $V$ be $\rho$-mixing and stationary,  we then have:
\begin{enumerate}
\item For $n$ bounded functions $f_1,\dots,f_n$: $\R \rightarrow \mathbb{C}$, and distinct $\bx_1,\dots,\bx_n\in \R^2$
 \begin{equation*}
\lim_{\eta \to 0}\E\Big[\prod_{j=1}^n f_j\Big(V\Big(\frac{\bx_j}{\eta}\Big)\Big) \Big] = \lim_{\eta \to 0}\prod_{j=1}^n \E\Big[ f_j\Big(V\Big(\frac{\bx_j}{\eta}\Big)\Big) \Big]
 =  \prod_{j=1}^n \E\Big[ f_j\big(V(0)\big) \Big];
\end{equation*}
\item For $n$ bounded functions $g_1,\dots,g_n$:  $\R^2 \rightarrow \mathbb{C}$, $\by_1,\dots,\by_n\in \R^2$, and distinct $\bx_1,\dots,\bx_n\in \R^2$,
\begin{align*}
\lim_{\eta \to 0}\E\Big[\prod_{j=1}^n g_j\Big(V\Big(\frac{\bx_j}{\eta}+\frac{\by_j}{2}\Big),V\Big(\frac{\bx_j}{\eta}-\frac{\by_j}{2}\Big)\Big) \Big] & = \lim_{\eta \to 0}\prod_{j=1}^n \E\Big[ g_j\Big(V\Big(\frac{\bx_j}{\eta}+\frac{\by_j}{2}\Big),V\Big(\frac{\bx_j}{\eta}-\frac{\by_j}{2}\Big)\Big) \Big]\\
&=\prod_{j=1}^n \E\Big[ g_j\Big(V\Big(\frac{\by_j}{2}\Big),V\Big(-\frac{\by_j}{2}\Big)\Big) \Big].
\end{align*}
\end{enumerate}
\end{lemma}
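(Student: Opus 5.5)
We prove both items by induction on $n$, peeling off one factor at a time with the $\rho$-mixing covariance bound; stationarity then identifies the limiting factors. The basic tool is the definition \eqref{def:rho}. If $X$ is measurable with respect to $\sigma(V(\bx),\,\bx\in S)$ and $Y$ with respect to $\sigma(V(\bx),\,\bx\in S')$, both bounded (hence in $\CalL^2$), then
\[
\big|\E[XY]-\E[X]\E[Y]\big| \le \rho\big(d(S,S')^-\big)\,\sqrt{Var(X)Var(Y)} \le 4\,\rho\big(d(S,S')^-\big)\,\|X\|_\infty\|Y\|_\infty .
\]
This extends to complex-valued variables by splitting into real and imaginary parts, which costs only a constant factor. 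Here $\rho(d^-)$ means $\rho(r)$ for any $r<d$, so that the strict inequality in the definition is harmless.

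\textbf{Item 1.} We apply the bound with $X=f_1(V(\bx_1/\eta))$, $S=\{\bx_1/\eta\}$, and $Y=\prod_{j\ge2}f_j(V(\bx_j/\eta))$, $S'=\{\bx_j/\eta,\ j\ge 2\}$. Since the points are distinct, $\delta:=\min_{j\ge2}|\bx_1-\bx_j|>0$, so $d(S,S')=\delta/\eta\to\infty$. Using $\|Y\|_\infty\le\prod_{j\ge 2}\|f_j\|_\infty$, the error is at most $C\,\rho(\delta/(2\eta))\prod_j\|f_j\|_\infty\to0$ by \eqref{prop:equiv_mix} together with \eqref{def:alpha_mix}. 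Iterating on the remaining product (equivalently, inducting on $n$) gives
\[
\E\Big[\prod_j f_j\big(V(\bx_j/\eta)\big)\Big]-\prod_j\E\Big[f_j\big(V(\bx_j/\eta)\big)\Big]\to0 .
\]
We use a telescoping bound here, so that each step only needs sup-norm control of the remaining factors. Finally, stationarity gives $\E[f_j(V(\bx_j/\eta))]=\E[f_j(V(0))]$ for every $\eta$, so both limits exist and equal $\prod_j\E[f_j(V(0))]$.

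\textbf{Item 2.} The argument is the same, with $S_j=\{\bx_j/\eta\pm\by_j/2\}$. Now
\[
d(S_1,\cup_{j\ge2}S_j)\ \ge\ \delta/\eta-\max_j|\by_j|,
\]
which still tends to infinity as $\eta\to0$ because the $\by_j$ are fixed. Stationarity, applied by shifting by $-\bx_j/\eta$, gives the joint law of $(V(\bx_j/\eta+\by_j/2),V(\bx_j/\eta-\by_j/2))$ equal to that of $(V(\by_j/2),V(-\by_j/2))$. This yields the stated limit.

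The only delicate points are bookkeeping. First, the induction has to be set up so that the sets being separated are a single block versus the union of all the others, which is exactly the form the definition of $\rho$ handles. Second, the boundedness assumption is what makes the variables lie in $\CalL^2$, so the correlation bound converts into a covariance bound with uniform constants. There is no real analytic obstacle beyond these.
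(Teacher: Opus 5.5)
Your proposal is correct and follows essentially the same route as the paper's proof in Appendix~\ref{proof:lem_mix}. Both proceed by induction on $n$, splitting off one factor and bounding its covariance with the product of the remaining factors through $\rho$. The argument of $\rho$ grows like $\min_j|\bx_{n+1}-\bx_j|/\eta$ minus the fixed $\by$-offsets, the variances are controlled by boundedness of the $g_j$, and stationarity identifies the limit.

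Your treatment is slightly more careful than the paper's in two places, without changing the argument. You handle complex-valued factors explicitly, whereas the paper applies $Corr$ directly to complex products. You also account for the strict inequality $d(S,S')>r$ in the definition of $\rho$.
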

The detailed proof of this lemma is given in Appendix \ref{proof:lem_mix}.

\section{Reflection and transmission for an unperturbed interface}
\label{sec:flat_interface}

Before turning to the analysis of wave scattering by the random interface, we first describe in this section the reflection and transmission mechanisms for an 
\emph{unperturbed or flat interface}. This discussion serves in particular to introduce the main wave quantities and associated terminology in a simple setting, before examining how these notions generalize to the random case. This section is divided into four parts, which address respectively the incident wave at the interface, the transmission and reflection conditions at the unperturbed interface, and finally the description of the reflected and transmitted wave cones.

To study the reflection and transmission of the incident wave, we introduce the following scaled Fourier transform:
\begin{equation*}
\hat{f}^\e(\omega,\bk) :=  \iint f(t,\bx)e^{i\omega(t/\e-\bk \cdot \bx/\se)} dt d\bx, \qquad\text{and}\qquad f(t,\bx)= \frac{1}{(2\pi)^3 \e^2}\iint \hat{f}_\e(\omega,\bk)e^{-i\omega(t/\e-\bk \cdot \bx/\se)}  \omega^2 d\omega d\bk,
\end{equation*}
which is scaled according to the source profile (see \eqref{def:sourceterm} and \eqref{eq:scaling_param}). In the Fourier domain, the wave equation \eqref{eq:wave_equation}, restricted to $z \in (-\infty, z_{int})$, reduces to the Helmholtz equation
\begin{equation}\label{eq:Helmholtz}
\partial^2_z \hp(\bk,z) + \frac{\omega^2}{\e^{2}c_0^2}\Big(1-\e c_0^2 |\bk|^2\Big)\hp(\bk,z)= \e^2 \hat{\Psi}\Big(\omega,\bk-\frac{\bk_0}{\se}\Big)\delta'(z),
\end{equation}
where
\begin{equation}\label{def:fourier_unscale}
\hat{\Psi}(\omega,\bk) :=  \iint \Psi(t,\bx)e^{i\omega(t-\bk \cdot \bx)} dt d\bx
\end{equation}
denotes the unscaled Fourier transform of the source profile $\Psi$.

In what follows, $\bk$-modes satisfying $\se\, c_0 |\bk| < 1$ are referred to as \emph{propagating modes}, while those satisfying $\se\, c_0 |\bk| > 1$ are referred to as \emph{evanescent modes}. To avoid unnecessary treatment of evanescent modes during transmission at the interface, we assume for convenience that
\[c_1 < c_0.\] 
This assumption is used only to simplify the presentation. After minor adaptations, all the results presented in this paper remain valid for the case $c_0 < c_1$, provided that $\bk_0$ satisfies $|\bk_0| < c_1^{-1}$ in order to avoid the critical transmission angle.

Under suitable assumptions on the source profile $\hat{\Psi}$ (compact support in both variables and boundedness away from zero with respect to $\omega$) the source term generates only propagating modes in the medium for sufficiently small $\e$. Indeed, introducing the change of variables
\begin{equation*}
\bk = \bq + \frac{\bk_0}{\se},
\end{equation*}
so that $\bq$ lies in the support of $\hat{\Psi}$, we obtain
\begin{equation}\label{eq:no_eva}
\se\,c_0 |\bk| \leq \se \, c_0 |\bq| + c_0 |\bk_0|< 1,
\end{equation}
for $\e$ sufficiently small, depending only on the support of $\hat{\Psi}$.

For $z \in (z_{int}, z_{tr})$, the wave equation \eqref{eq:wave_equation} in the Fourier domain reduces to the following Helmholtz equation:
\begin{equation}\label{eq:Helmholtz1}
\partial^2_{zz} \hp(\bk,z) + \frac{\omega^2}{\e^{2}c_1^2}\Big(1-\e c_1^2 |\bk|^2\Big)\hp(\bk,z)= 0.
\end{equation}
Throughout this paper, we denote by   
\begin{equation}\label{def_lambda}
\bcals_j^\e(\bk):=\frac{1}{c_j}\sqrt{ 1-\e c_j^2 |\bk|^2},\qquad\text{for}\qquad \se \, c_j \vert \bk \vert < 1,\quad j=0,1,
\end{equation}
the vertical slowness associated with \eqref{eq:Helmholtz} and \eqref{eq:Helmholtz1}. Note that the condition on $\bk$ in \eqref{def_lambda} is satisfied thanks to \eqref{eq:no_eva} and the assumption $c_1 < c_0$.

In the forthcoming analysis, the following expansion  will be used:
\begin{equation}\label{eq:exp_lambda}
\frac{1}{\e}\bcals_j^\e\Big(\bq + \frac{\bk_0}{\se}\Big) = \frac{\bcals_j}{\e} - \frac{\bk_0\cdot \bq}{\se\,\bcals_j} - \frac{c_j}{2} \bq^T A_j\bq + \CalO(\se)\qquad j=0,1,
\end{equation}
where $\bq^T $ stands for the transpose of $\bq$, and
\begin{equation}\label{def:Aj}
A_j:= \frac{1}{(1-c_j^2|\bk_0|^2)^{3/2}}\begin{pmatrix} 
1-c_j^2 \bk^2_{0,2} &  c_j^2 \bk_{0,1}\bk_{0,2}\\
c_j^2 \bk_{0,1}\bk_{0,2} &  1-c_j^2 \bk^2_{0,1}
 \end{pmatrix}= \frac{1}{c_j^3 \bcals_j^3}\big(\mathbf{I}_2 - c_j^2 \,  \bk_0^\perp  \otimes \bk_0^\perp\big).
\end{equation}
Here, the remainder term $\CalO(\se)$ holds uniformly with respect to $\bq$ in the support of $\hat{\Psi}$, and
\begin{equation}\label{def:tauj}
\bcals_j:=\frac{\sqrt{1-c_j^2|\bk_0|^2}}{c_j}\qquad j=0,1,
\end{equation}
denotes the vertical slowness above ($j=0$) and below ($j=1$) the interface. Moreover, $\mathbf{I}_2$ denotes the $2\times 2$ identity matrix, and $\bk_0^\perp := (-\bk_{0,2}, \bk_{0,1})^T$.

\subsection{Reflection and transmission at the interface}

Introducing an up-going and down-going modal decomposition with respect to the $z$-direction, and recalling that no waves are assumed to come from above the source plane or from below the interface, the wavefield reads
\begin{align*}
\hp(\bk,z)&=  \frac{\hbu(\bk)}{\sqrt{\omega \lwu(\bk)}}e^{- i \omega \lwu(\bk)z/\e}   \,\1_{(-\infty,0)}(z)\\
&+\Big(\frac{\hau(\bk)}{ \sqrt{\omega \lwu(\bk)}}e^{ i \omega \lwu(\bk)(z-z_{int})/\e}+\frac{\hbr (\bk)}{\sqrt{\omega\lwu(\bk)}}e^{-i\omega \lwu(\bk)(z-z_{int})/\e}  \Big) \1_{(0,z_{int})}(z)\\
&+ \frac{\hatr (\bk)}{\sqrt{\omega\lwd(\bk)}}e^{i\omega \lwd(\bk)(z-z_{int})/\e}\1_{(z_{int},\infty)}(z).
\end{align*} 
Here, $\hbu(\bk)$ denotes the amplitude of the up-going modes for $z<0$, $\hau(\bk)$ the amplitude of the down-going incident modes impinging on the interface, $\hbr(\bk)$ the amplitude of the reflected modes, and $\hatr(\bk)$ the amplitude of the transmitted modes (see Figure \ref{fig:modes}).
\begin{figure}
\begin{center}
\includegraphics*[scale=0.25]{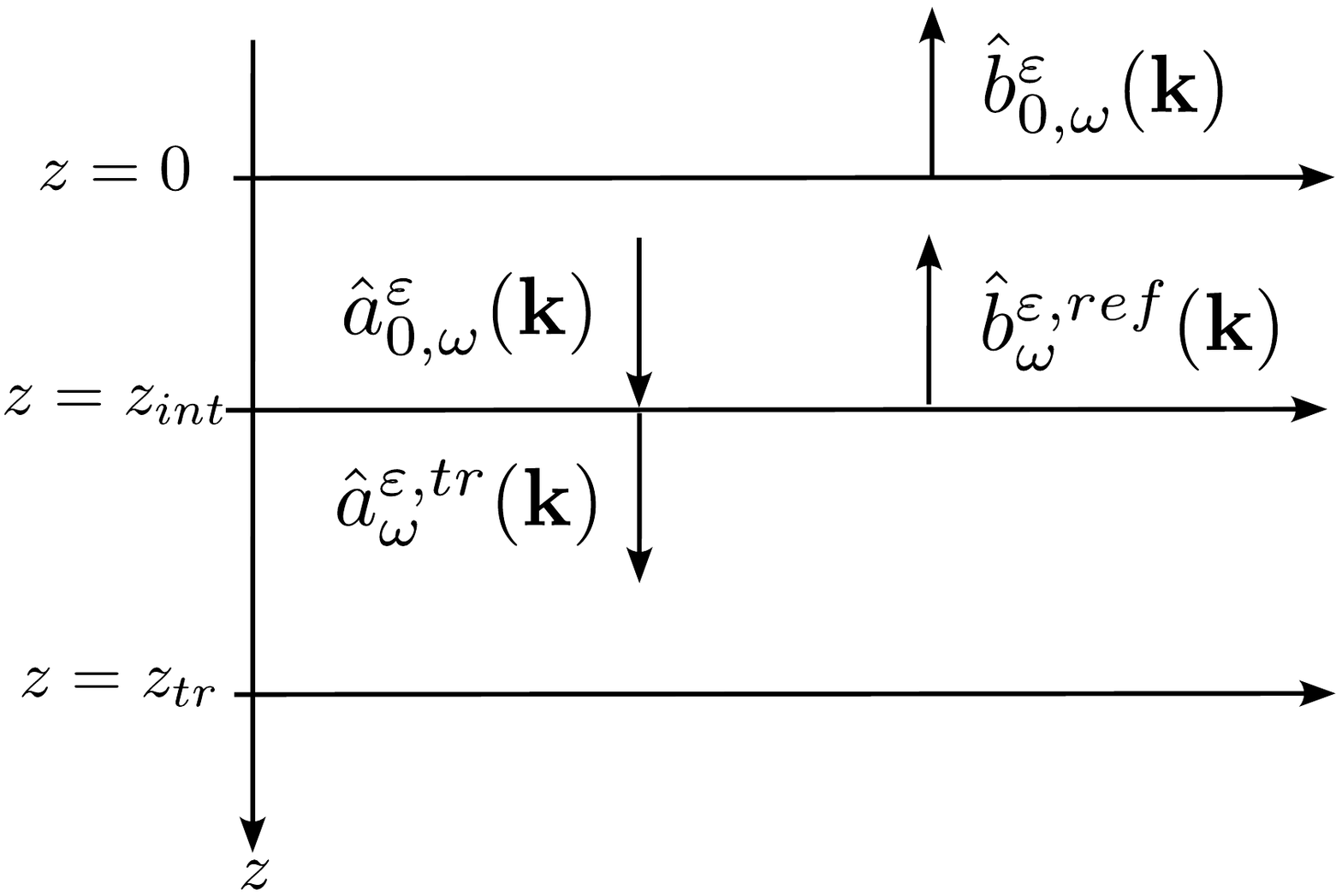} 
\end{center}
\caption{\label{fig:modes} Illustration of the up- and down-going mode amplitudes.
}
\end{figure}

The amplitudes of the incident modes are determined by the following jump conditions across the plane $z=0$. These conditions are induced by the source term in \eqref{eq:Helmholtz} and follow from \eqref{eq:jump_cond} in Appendix \ref{sec:proof_continuity}: 
\begin{align*}
\hp (\bk,z=0^+)- \hp (\bk,z=0^-)&= \e^2 \hat{\Psi}\Big(\omega,\bk-\frac{\bk_0}{\se}\Big), \\
\partial_z \hp (\bk,z=0^+)-\partial_z \hp (\bk,z=0^-) &= 0,
\end{align*}
which yields
\begin{eqnarray}\label{def:a0}
\hau (\bk) &=& \frac{\e^2\sqrt{\omega \lwu(\bk)}}{2}\hat{\Psi}\Big(\omega,\bk-\frac{\bk_0}{\se}\Big)e^{i \omega \lwu(\bk)z_{int}/\e}, \\
 \hbu(\bk)  &=&  - \frac{\e^2\sqrt{\omega \lwu(\bk)}}{2}\hat{\Psi}\Big(\omega,\bk-\frac{\bk_0}{\se}\Big) + \hbr (\bk) e^{ i \omega \lwu(\bk)z_{int}/\e}. \nonumber
\end{eqnarray} 
Therefore, the up-going modes $\hbu(\bk)$ consist of a direct contribution from the source together with the reflected component from the interface.

Denote now
\[
\hpi(\bk, z) := \frac{\hau(\bk)}{ \sqrt{\omega \lwu(\bk)}}  e^{i\omega \lwu(\bk)(z-z_{int})/\e}
= \frac{\e^2}{2}\hat{\Psi}\Big(\omega,\bk-\frac{\bk_0}{\se}\Big)e^{ i \omega \lwu(\bk)z/\e},
\]
the incident wavefield,
 \[
\hpr(\bk,z) := \frac{\hbr (\bk)}{\sqrt{\omega\lwu(\bk)}}e^{-i\omega \lwu(\bk)(z-z_{int})/\e} \qquad 0<z<z_{int},
\]
the reflected wavefield, and 
\[\hpt(\bx,z) := \frac{\hatr (\bk)}{\sqrt{\omega\lwd(\bk)}}e^{i\omega \lwd(\bk)(z-z_{int})/\e} \qquad z>z_{int},\]
the transmitted wavefield. For a flat interface ($V\equiv 0$), the continuity conditions \eqref{eq:continuity_relation} at $z=z_{int}$ yield
\begin{align*}
\hpi( \bx,z_{int}) + \hpr(\bx,z_{int})&=\hpt(\bx,z_{int}),\\
\partial_z \hpi(\bx,z_{int}) + \partial_z \hpr(\bx,z_{int}) &= \partial_z \hpt(\bx,z_{int}).
\end{align*}
These conditions lead to the system
\begin{align*}
 \frac{\hau(\bk)}{\sqrt{\lwu(\bk)}}+\frac{\hbr (\bk)}{\sqrt{\lwu(\bk)}} & =\frac{\hatr (\bk)}{\sqrt{\lwd(\bk)}},\\
\sqrt{\lwu(\bk)} \hau (\bk)-\sqrt{\lwu(\bk)}\hbr (\bk)&=\sqrt{\lwd(\bk)}\hatr (\bk),
\end{align*}
whose solution is
\[
\hatr (\bk)= \frac{1}{\tau^\e_+(\bk)}\hau(\bk)\qquad\text{and}\qquad\hbr (\bk)=\frac{\tau^\e_-(\bk)}{\tau^\e_+(\bk)} \hau(\bk),
\]
where 
\begin{equation*}
\tau^\e_\pm(\bk)=\frac{1}{2}\left(    \sqrt{\frac{\lwu(\bk)}{\lwd(\bk)}} \pm \sqrt{\frac{\lwd(\bk)}{\lwu(\bk)}}\right),
\end{equation*}
yielding the energy conservation relation
\begin{equation*}
 |\hau(\bk)|^2     =   |\hatr (\bk)|^2 + |\hbr (\bk)|^2 .
\end{equation*}

\subsection{The reflected wave}\label{sec:rec_reflect}

From the above analysis, the reflected wave observed in the plane $z=0$ can be expressed explicitly as
\begin{align*}
u^\e(t,\bx,z=0^-)&=\frac{1}{(2\pi)^3 \e^2}\iint \hpr(\bk,z=0) e^{-i\omega( t /\e - \bx\cdot \bk/\se)} \omega^2 d\omega d\bk \\
& = \frac{1}{(2\pi)^3 \e^2} \iint e^{i\omega \lwu(\bk) z_{int}/\e} \frac{\tau^\e_-(\bk)\hau(\bk)}{\tau^\e_+(\bk)\sqrt{\omega\lwu(\bk)}} e^{-i\omega( t /\e - \bx\cdot \bk/\se)} \omega^2 d\omega d\bk\\
&= \frac{1}{2(2\pi)^3} \iint e^{2i\omega \lwu(\bq+\bk_0/\se) z_{int}/\e}\frac{\tau^\e_-(\bq+ \bk_0/\se)}{\tau^\e_+(\bq+ \bk_0/\se)} e^{-i\omega( t /\e - \bx\cdot \bk_0/\e - \bx\cdot \bq/\se)} \hat \Psi(\omega,\bq) \omega^2 d\omega d\bq,
\end{align*}
where the last line follows from the change of variables $\bk = \bq + \bk_0/\se$, consistent with the argument of $\hat \Psi$ in \eqref{def:a0}.
 
Using the expansion \eqref{eq:exp_lambda}, we introduce the following observation position and time:
\begin{equation}\label{def:pos_time}
\bx = \bx_{obs,ref} +\se\, \by
\qquad\text{ and }\qquad
t=t^\e_{obs,ref}(\by):= t_{obs,ref}+\se\, \bk_0\cdot \by,
\end{equation}
where
\begin{equation}\label{def:pos_time_ref}
\bx_{obs,ref} := \frac{2\bk_0 z_{int}}{\bcals_0}\qquad\text{and}\qquad t_{obs,ref}:=\frac{2 z_{int}}{c^2_0\bcals_0},
\end{equation}
with $\bcals_0$ defined in \eqref{def:tauj}. With this choice, we obtain a nontrivial asymptotic expression for the specularly reflected wave:
\begin{align}\label{def:Uref}
U^{ref}(s,\by) & := \lim_{\e\to 0}u^\e\big(t^\e_{obs,ref}(\by)+\e s,\bx_{obs,ref} +\se\, \by,z=0\big) \nonumber\\
& =\frac{\CalR}{2(2\pi)^3} \int e^{-i\omega (s-\by \cdot \bq)}  \hat \CalU_0(\omega, \bq, 2z_{int})\hat \Psi(\omega,\bq) \omega^2  d\omega d\bq.
\end{align}
Here, $U^{ref}$ represents the asymptotic wave front observed in the reference frame of the source at position $\bx_{obs,ref}$ and time $t^\e_{obs,ref}(\by)$. The dependence of the travel time on the offset $\by$ reflects the fact that the beam width $\se$ is larger than the pulse width $\e$, combined with the oblique propagation of the wave front relative to the vertical direction (see Figure~\ref{fig:scaling1}). 
\begin{figure}
\begin{center}
\includegraphics*[scale=0.25]{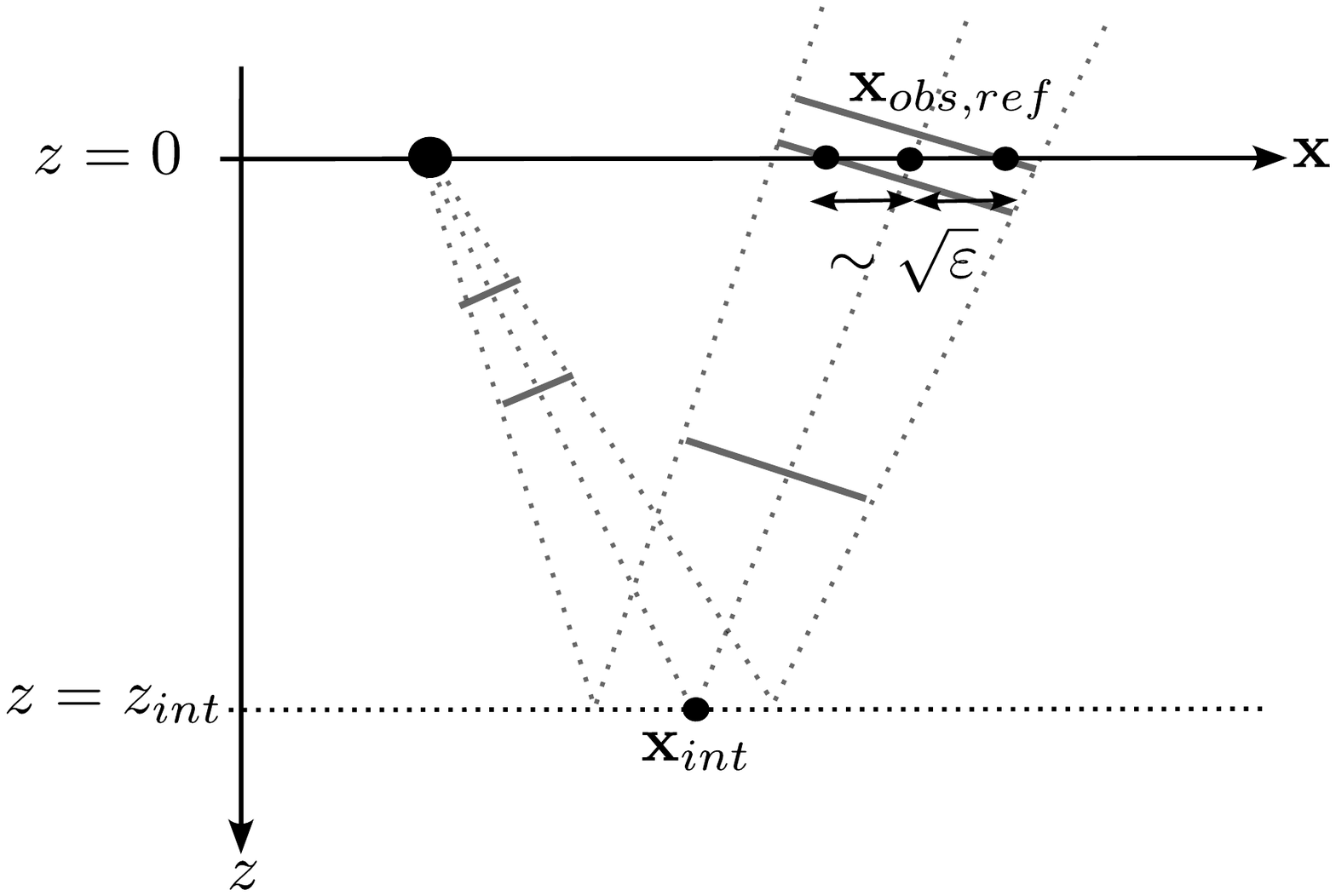} 
\end{center}
\caption{\label{fig:scaling1} Illustration of the correction at the scale of the beam width $\se$ in \eqref{def:pos_time}, which is larger than the pulse width $\e$, represented by the width of the straight gray lines within the probing and specular reflected cones. This $\se$ correction needs to be taken into account in the observation time to accurately describe the specularly reflected wave across the beam width.
}
\end{figure} 
Note that $\bx_{obs,ref}$ is twice the lateral position 
\begin{equation}\label{def:x_int}
\bx_{int} := \frac{\bk_0 z_{int}}{\bcals_0},
\end{equation}
at which the incident pulse impinges on the interface. Consequently, the standard reflection relation reads
\begin{equation}\label{def:angle_inc_ref}
\theta_{inc} = \theta^0_{ref} := \arctan\Big(\frac{ \vert \bk_0 \vert}{\bcals_0}\Big),
\end{equation}
where the incident and reflected angle are equal. We refer to Figure~\ref{fig:ref} for an illustration of the geometric properties of reflection.

\begin{figure}
\begin{center}
\includegraphics*[scale=0.25]{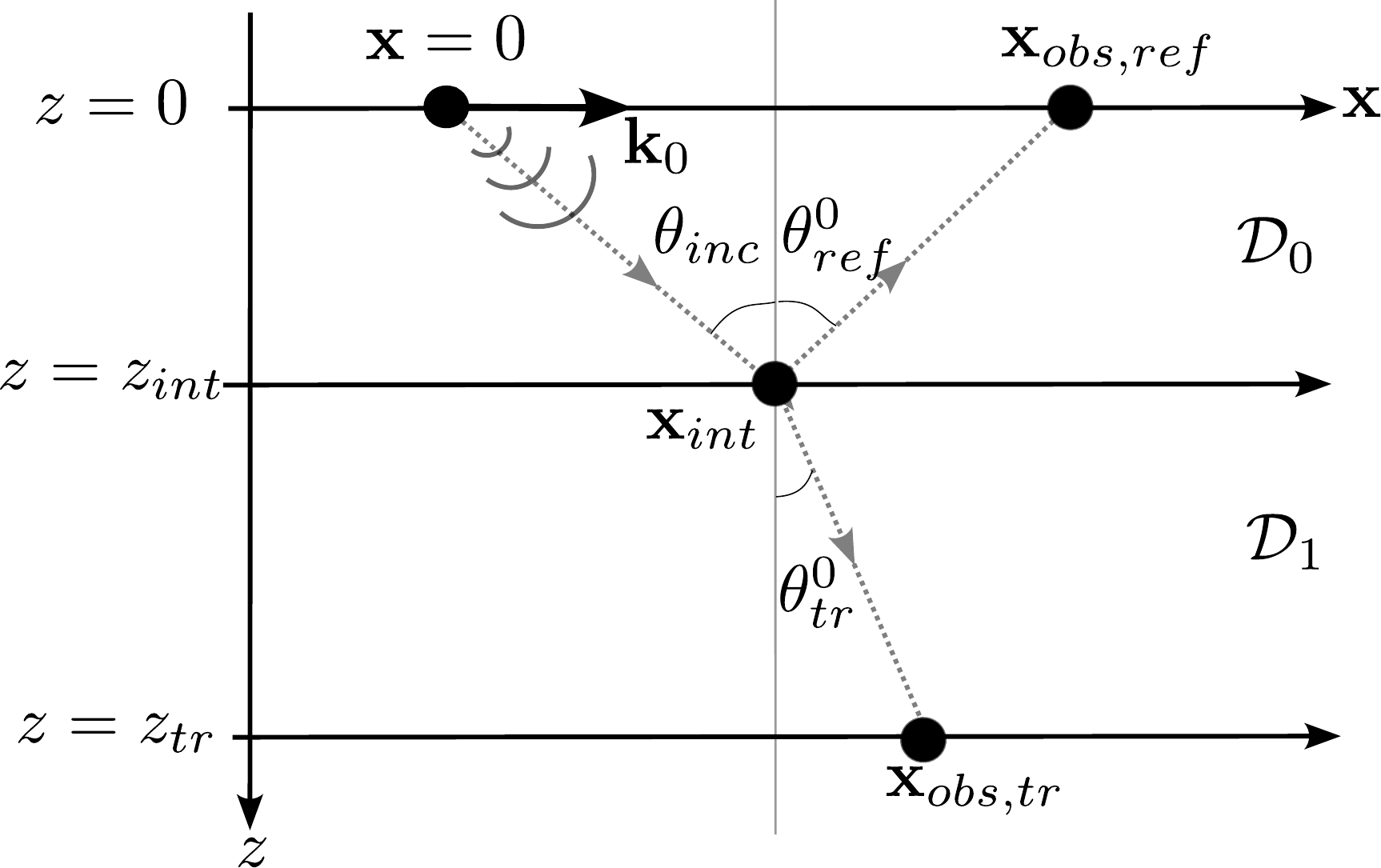} 
\end{center}
\caption{\label{fig:ref} Illustration of reflection and transmission for a flat interface located at $z=z_{int}$. The source is located at $\bx=0$ in the plane $z=0$, and the emitted wave impinges on the interface at $\bx=\bx_{int}$ with incident angle $\theta_{inc}$. The reflected wave, with angle $\theta^0_{ref}$, is observed in the plane $z=0$ at $\bx=\bx_{obs,ref}$, while the transmitted wave, with angle $\theta^0_{tr}$, is observed in the plane $z=z_{tr}$ at $\bx=\bx_{obs,tr}$.
}
\end{figure} 

Regarding the pulse profile of the reflected wave, the reflection coefficient is given by
\begin{equation}\label{def:coef_ref}
\CalR:=\frac{\bcals_0-\bcals_1}{\bcals_0+\bcals_1},
\end{equation}
and
\begin{equation}\label{def:hCalU0}
\hat \CalU_0(\omega,\bq,z) := e^{-i\omega z c_0 \bq^T A_0\bq /2 },
\end{equation}
where $A_0$ is defined in \eqref{def:Aj}. The term $\hat \CalU_0$ generates the following homogeneous semigroup:
\begin{equation*}
\check \CalU_0(\omega,\by,z) := \frac{\omega^2}{(2\pi)^2} \int e^{i\omega \by \cdot \bq} e^{-i\omega z c_0 \bq^T A_0\bq /2 }d\bq,
\end{equation*}
satisfying the Schr\"odinger type equation
\begin{equation}\label{eq:schro0}
 i \partial_z \check \CalU_0 (\omega, \by, z) + \frac{1}{2\ku} \nabla_{\by} \cdot \Big(A_0 \nabla_{\by} \check \CalU_0 \Big)(\omega,\by,z)=0\qquad z> 0,
\end{equation}
where
\[\check \CalU_0(\omega, \by, z=0)=\delta(\by)\qquad\text{and}\qquad \ku = \omega/c_0, \] 
which is characteristic of the paraxial approximation. Owing to the nonzero initial lateral direction $\bk_0$, the equation does not reduce to the standard Schr\"odinger equation with a Laplacian operator. However, the standard Laplacian structure is recovered in the limit $\bk_0 \to 0$.

In the time domain, the function 
\begin{equation}\label{def:CalU0}
\CalU_0(s, \by, z):=\frac{1}{2(2\pi)^3} \int e^{-i\omega (s-\by\cdot\bq)} \hat \CalU_0(\omega,\bq,z)\hat \Psi(\omega,\bq) \omega^2 d\omega d\bq,
\end{equation}
which corresponds to the pulse profile in \eqref{def:Uref}, satisfies the paraxial wave equation
\begin{equation}\label{eq:paraxial}
\partial^2_{s z} \CalU_0 - \frac{c_0}{2}\nabla_{\by} \cdot \Big(A_0 \nabla_{\by} \CalU_0 \Big) =0,\qquad\text{with}\qquad \CalU_0(s, z=0, \by)=\frac{1}{2} \Psi(s,\by).
\end{equation}
As a result, the wave reflected by a flat (unperturbed) interface can be written as
\begin{equation*}
U^{ref}(s,\by) = \CalR \, \CalU_0(s, \by, 2z_{int}). 
\end{equation*} 
Within the paraxial wave model, this expression corresponds to the propagation of the emitted pulse from the source location to the interface and back to the source plane, corresponding to a total propagation distance $2z_{int}$. The factor $\CalR$ accounts for reflection at the interface.

\subsection{The transmitted wave}\label{sec:rec_trans}

Following the same strategy as for the reflected wave, the transmitted wave observed at $z=z_{tr}$ can be written as
\begin{align*}
u^\e(t,\bx,z=z_{tr})&=\frac{1}{(2\pi)^3 \e^2}\iint \hpt(\bk,z=z_{tr}) e^{-i\omega( t /\e - \bx\cdot \bk/\se)} \omega^2 d\omega d\bk \\
& = \frac{1}{(2\pi)^3 \e^2}  \iint \frac{\hau(\bk)}{\tau^\e_+(\bk) \sqrt{\omega\lwd(\bk)}}e^{ i \omega \lwd(\bk)(z_{tr}-z_{int})/\e}  e^{-i\omega( t /\e - \bx\cdot \bk/\se)} \omega^2 d\omega d\bk\\
&= \frac{1}{2(2\pi)^3} \iint e^{ i \omega (\lwu(\bq+ \bk_0/\se)z_{int}+\lwd(\bq+ \bk_0/\se)(z_{tr}-z_{int}))/\e} \frac{\sqrt{\lwu(\bq+ \bk_0/\se)}}{\tau^\e_+(\bq+ \bk_0/\se)\sqrt{\lwd(\bq+ \bk_0/\se)}} \\
& \hspace{7cm} \times e^{-i\omega( t /\e - \bx\cdot \bk_0/\e - \bx\cdot \bq/\se)} \hat \Psi(\omega,\bq) \omega^2 d\omega d\bq.
\end{align*}
Using the expansion \eqref{eq:exp_lambda}, we obtain the following asymptotic expression for the specularly transmitted wave:
\begin{align}\label{def:Utr}
U^{tr}(s,\by) & := \lim_{\e\to 0}u^\e\big(t^\e_{obs,tr}(\by)+\e s,\bx_{obs,tr} + \se \by,z=z_{tr}\big) \\
& = \frac{\CalT }{2(2\pi)^3 } \sqrt{\frac{\bcals_0}{\bcals_1}} \int e^{-i\omega (s-\by\cdot \bq)}  \hat \CalU_1(\omega,\bq,z_{tr}-z_{int}) \hat \CalU_0(\omega,\bq,z_{int})  \hat\Psi(\omega,\bq) \omega^2 d\omega d\bq,\nonumber
\end{align}
with
\begin{equation}\label{def:pos_tr}
\bx_{obs,tr} := \bx_{int} + \bx_{tr} :=\bx_{int} + \frac{\bk_0(z_{tr}- z_{int})}{\bcals_1},
\end{equation}
where $\bx_{int}$ is given by \eqref{def:x_int}, and the vertical slownesses $\bcals_j$ by \eqref{def:tauj}, and The corresponding observation time by
\begin{equation}\label{def:time_tr}
t^\e_{obs,tr}(\by):=t_{obs,tr} + \se \, \bk_0 \cdot \by \qquad\text{with}\qquad t_{obs,tr}:=\frac{z_{int}}{c^2_0\bcals_0}+\frac{z_{tr}-z_{int}}{c^2_1\bcals_1}.
\end{equation}
Here, the quantity $U^{tr}$ represents the wave front observed in the reference frame of the source, at position $\bx_{obs,tr}$ and time $t^\e_{obs,tr}(\by)$. The arrival time $t_{obs,tr}$ is the sum of the travel time from the source to the interface and from the interface to the plane $z=z_{tr}$. As in the reflected case, the $\se$ correction in $t^\e_{obs,tr}(\by)$ accounts for the lateral offset of the observation point, which is large compared to the pulse width, combined with the oblique propagation of the wave front relative to the vertical direction.

The lateral position $\bx_{obs,tr}$ corresponds to the sum of two contributions: the lateral position where the pulse impinges on the interface and the additional lateral displacement accumulated as the pulse propagates through the second medium to the plane $z=z_{tr}$. We refer to Figure~\ref{fig:ref} for an illustration of the geometric properties of transmission. From this formulation, we recover the Snell’s law:
\ba\label{eq:snell0} 
\frac{\sin(\theta_{inc})}{c_0} = \frac{\sin(\theta^0_{tr})}{c_1} = |\bk_0|,
\ea
with $\theta_{inc}$ defined by \eqref{def:angle_inc_ref}, and
\begin{equation}\label{def:angle_tr}
\theta^0_{tr}:=\arctan\Big(\frac{\vert \bk_0 \vert}{\bcals_1}\Big).
\end{equation}

Regarding the pulse profile, the transmission coefficient is defined as
\begin{equation}\label{def:coef_tr}
\CalT := \frac{2 \sqrt{\bcals_0 \bcals_1}}{\bcals_0+\bcals_1},
\end{equation}
so that the conservation relation $\CalR^2 + \CalT^2 = 1$ holds. The function $\hat\CalU_0$ is defined in \eqref{def:hCalU0}, while $\hat\CalU_1$ is defined analogously by
\begin{equation}\label{def:hCalU1}
\hat \CalU_1(\omega,\bq,z) := e^{-i\omega z c_1 \bq^T A_1\bq / 2 },
\end{equation}
where $A_1$ is given by \eqref{def:Aj}. Denoting by $\CalU_1$ the analogue of \eqref{def:CalU0} with $c_1$ and $A_1$ in place of $c_0$ and $A_0$, which satisfies the corresponding paraxial wave equation \eqref{eq:paraxial}, the expression \eqref{def:Utr} describes the transmission of the emitted pulse from the source to the interface and then from the interface to the plane $z=z_{tr}$ within the paraxial approximation. The factor $\CalT$ accounts for transmission at the interface.

\section{The reflected and transmitted waves for a random interface}\label{sec:randint}

In this section, we generalize the results of the previous section to the case of a randomly fluctuating interface. We rely on the modal decomposition introduced earlier to derive expressions for the reflected and transmitted waves.

Recall that the interface is defined by   
\[
z = z_{int}^\e (\bx) := z_{int} + \e \, V\Big(\frac{\bx}{\e^\gamma}\Big) .
\]
As above we express the incident, reflected, and transmitted waves as 
\begin{align*}
u^{\e,inc}(t,\bx, z) & = \frac{1}{(2\pi)^3 \e^{2} }\iint e^{-i\omega( t /\e - \bx\cdot \bk/\se)} \frac{\hau(\bk)}{\sqrt{\omega \lwu(\bk)}} e^{i\omega\lwu(\bk)(z-z_{int})/\e}\omega^2 d\omega d\bk\qquad 0<z<z_{int}^\e (\bx) , \\
u^{\e,ref}(t,\bx, z) & = \frac{1}{(2\pi)^3 \e^{2} }\iint e^{-i\omega( t /\e - \bx\cdot \bk/\se)} \frac{\hbr(\bk)}{\sqrt{\omega \lwu(\bk)}} e^{-i\omega\lwu(\bk)(z-z_{int})/\e}\omega^2 d\omega d\bk\qquad 0<z<z_{int}^\e (\bx), \\
u^{\e,tr}(t,\bx, z) & = \frac{1}{(2\pi)^3 \e^{2} }\iint e^{-i\omega( t /\e - \bx\cdot \bk/\se)} \frac{\hatr(\bk)}{\sqrt{\omega \lwd(\bk)}} e^{i\omega\lwd(\bk) (z-z_{int})/\e} \omega^2 d\omega d\bk\qquad z>z_{int}^\e (\bx).
\end{align*}
These expressions yield, by \eqref{eq:continuity_relation}, the two continuity relations along the random interface $z=z_{int}^\e(\bx)$ that encode the transmission and reflection mechanisms:
\begin{align}\label{eq:continuity}
u^{\e,inc}(t, \bx,{z_{int}^\e (\bx)}^-) + u^{\e,ref}(t, \bx, {z_{int}^\e (\bx)}^-) &= u^{\e,tr}(t, \bx, {z_{int}^\e (\bx)}^+),\\
\partial_z u^{\e,inc}(t, \bx, {z_{int}^\e (\bx)}^-) + \partial_z u^{\e,ref}(t, \bx, {z_{int}^\e (\bx)}^-) &= \partial_z u^{\e,tr}(t, \bx, {z_{int}^\e (\bx)}^+).\nonumber
\end{align}
According to these relations, the observed reflected and transmitted waves can be written in a form that makes explicit the role of the interface fluctuations and which is convenient for the subsequent analysis.
\begin{lemma}\label{lem:tech}
At leading order in $\e$, the reflected and transmitted waves can be written as
\begin{align} \label{eq:ue_ref}
u^{\e,ref}(t,\bx, z=0) & \simeq \frac{\CalR}{2(2\pi)^3  }\iiint e^{-i\omega(t -2\bcals_0 z_{int}-\bx\cdot \bk_0 ) /\e} 
     e^{i \omega \bq \cdot (\bx  - \bx_{obs,ref})/\se}  \nonumber   \\
& \hspace{.4cm} \times
K^\e(2\bcals_0, \omega,  \bq, \bq')  
  \hat \CalU_0(\omega,\bq,z_{int}) \hat \CalU_0(\omega,\bq',z_{int}) \hat \Psi(\omega, \bq') \omega^2 d\omega  d\bq' d\bq, 
\end{align}
and
\begin{align}\label{eq:ue_tr}
u^{\e,tr}(t,\bx, z=z_{tr}) & \simeq \frac{\CalT}{2(2\pi)^3  } \sqrt{\frac{\bcals_0}{\bcals_1}} \iiint e^{-i\omega(t -\bcals_0 z_{int}-\bcals_1(z_{tr}-z_{int})-\bx\cdot \bk_0 ) /\e}    e^{i \omega \bq \cdot (\bx - \bx_{obs,tr})/\se} \nonumber \\
& \hspace{.4cm} \times   K^\e(\bcals_0-\bcals_1, \omega,  \bq, \bq')  
\hat \CalU_1(\omega,\bq,z_{tr}-z_{int}) \hat \CalU_0(\omega,\bq',z_{int})\hat \Psi(\omega, \bq') \omega^2 d\omega  d\bq' d\bq .   
\end{align}
The scattering operator $K^\e$ associated with the random interface admits the following Fourier representation:
\begin{equation}\label{def:scat_op}
K^\e(\tau,\omega,\bq,\bq') = \frac{\omega^2}{ \e (2\pi)^{2}} \int e^{i\omega (\bq' - \bq)\cdot(\bx'-\bx_{int})/\se}e^{i\omega \tau V(\bx'/\e^\gamma)} d\bx'.
\end{equation}
The vertical slownesses $\bcals_j$ are defined by \eqref{def:tauj}, the functions $\hat \CalU_j$ by \eqref{def:hCalU0} and \eqref{def:hCalU1}, and $\bx_{int}$ (resp. $\bx_{obs,ref}$ and $\bx_{obs,tr}$), representing the lateral position where the incident pulse impinges on the interface (resp. the lateral positions where the reflected and transmitted waves are observed), are defined by \eqref{def:x_int}, (resp. \eqref{def:pos_time_ref}  and \eqref{def:pos_tr}).
\end{lemma}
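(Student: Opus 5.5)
The plan is to rewrite the continuity relations \eqref{eq:continuity} in the lateral Fourier domain and solve them at leading order through a local, tangent-plane type argument. The interface displacement $\e V(\bx/\e^\gamma)$ is of the order of the wavelength. Hence every vertical phase factor $e^{\pm i\omega \bcals^\e_j(\bk)(z-z_{int})/\e}$, evaluated at $z=z^\e_{int}(\bx)$, becomes an $\CalO(1)$ random phase $e^{\pm i\omega \bcals^\e_j(\bk) V(\bx/\e^\gamma)}$, and this phase is exactly what will generate the kernel \eqref{def:scat_op}.

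\textbf{Step 1 (frozen slownesses in physical space).} For the relevant modes $\bk=\bq+\bk_0/\se$ with $\bq$ bounded, the expansion \eqref{eq:exp_lambda} gives $\bcals^\e_j(\bk)=\bcals_j+\CalO(\se)$. Consequently $e^{\pm i\omega\bcals^\e_j(\bk)V}=e^{\pm i\omega\bcals_j V}(1+\CalO(\se))$. The amplitude factors $\sqrt{\lwu}$, $\sqrt{\lwd}$ and the $\partial_z$-multipliers $\pm i\omega\bcals^\e_j(\bk)/\e$ may also be frozen at $\bcals_j$, with relative errors $\CalO(\se)$.

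Define the physical-space amplitudes $a(\bx)$, $b(\bx)$, $a^{tr}(\bx)$ as the inverse (scaled) Fourier transforms of $\hau/\sqrt{\omega\lwu}$, $\hbr/\sqrt{\omega\lwu}$ and $\hatr/\sqrt{\omega\lwd}$. The two relations in \eqref{eq:continuity} then read, at leading order and pointwise in $\bx$,
\[
a\,e^{i\omega\bcals_0 V}+b\,e^{-i\omega\bcals_0 V}=a^{tr}e^{i\omega\bcals_1 V},\qquad
\bcals_0\big(a\,e^{i\omega\bcals_0 V}-b\,e^{-i\omega\bcals_0 V}\big)=\bcals_1\,a^{tr}e^{i\omega\bcals_1 V},
\]
with $V=V(\bx/\e^\gamma)$.

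\textbf{Step 2 (local Fresnel solution).} This $2\times2$ system is the flat-interface system of Section~\ref{sec:flat_interface} with modified phases. Its solution is
\[
b(\bx)=\CalR\, e^{2i\omega\bcals_0V(\bx/\e^\gamma)}a(\bx),\qquad a^{tr}(\bx)=(1+\CalR)\,e^{i\omega(\bcals_0-\bcals_1)V(\bx/\e^\gamma)}a(\bx).
\]
After restoring the normalisation $\sqrt{\bcals_0/\bcals_1}$ one has $1+\CalR=\CalT\sqrt{\bcals_0/\bcals_1}$, which is where the constant in front of \eqref{eq:ue_tr} comes from.

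\textbf{Step 3 (back to Fourier space).} Fourier transforming these products turns each multiplication into a convolution in $\bk$. Substituting $\bk=\bq+\bk_0/\se$ and $\bk'=\bq'+\bk_0/\se$, and inserting $\hau$ from \eqref{def:a0}, produces an integral over $\bx'$ of $e^{i\omega(\bq'-\bq)\cdot\bx'/\se}e^{i\omega\tau V(\bx'/\e^\gamma)}$, with $\tau=2\bcals_0$ for reflection and $\tau=\bcals_0-\bcals_1$ for transmission.

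\textbf{Step 4 (phases and prefactors).} Expand $e^{i\omega\lwu(\bk')z_{int}/\e}$ from the incident leg, together with the return phase $e^{i\omega\lwu(\bk)z_{int}/\e}$ (or the transmitted phase $e^{i\omega\lwd(\bk)(z_{tr}-z_{int})/\e}$), using \eqref{eq:exp_lambda}:
\begin{itemize}
\item the $1/\e$ terms give $e^{i\omega(2\bcals_0 z_{int})/\e}$, respectively $e^{i\omega(\bcals_0 z_{int}+\bcals_1(z_{tr}-z_{int}))/\e}$;
\item the $\bk_0\cdot\bq/(\se\,\bcals_j)$ terms are absorbed by shifting $\bx'\to\bx'-\bx_{int}$ and $\bx\to\bx-\bx_{obs}$, using \eqref{def:x_int}, \eqref{def:pos_time_ref} and \eqref{def:pos_tr};
\item the quadratic terms give $\hat\CalU_0(\omega,\bq',z_{int})$ and $\hat\CalU_0(\omega,\bq,z_{int})$, respectively $\hat\CalU_1(\omega,\bq,z_{tr}-z_{int})$.
\end{itemize}
The powers of $\e$ and $\omega$, namely the $\e^{-1}$ and $\omega^2/(2\pi)^2$ in \eqref{def:scat_op}, follow from the Jacobians of the scaled Fourier transform.

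\textbf{Main obstacle.} The delicate point is the justification of Step 1 for the scattered modes. The kernel $K^\e$ spreads energy over $\bq$ of order $\e^{1/2-\gamma}$. This spread is $\CalO(1)$ only when $\gamma=1/2$; for $\gamma>1/2$ it is unbounded and may reach evanescent wavenumbers when $\gamma$ is close to $1$. There the frozen slowness $\bcals_j$ and the quadratic expansion \eqref{eq:exp_lambda} are no longer uniformly valid, and the phase error $\CalO(\se|\bq|)V$ is not small. A second, related point is that the local system uses $\partial_z$ rather than the normal derivative, which is consistent with \eqref{eq:continuity_relation}. Terms involving $\nabla V$ never appear directly, but the $\bk$-dependence of $\bcals^\e_j$ is what encodes slope effects.

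I would handle this by proving the identity with exact slownesses $\bcals^\e_j(\bk)$, keeping the operator-valued form of Step 2. I would then show that the difference from the frozen version is negligible, using a bound on $\E|K^\e|^2$ obtained from Lemma~\ref{lem:mixing}, which controls the mass of $K^\e$ at large $|\bq-\bq'|$. The statement should be understood in this leading-order sense, which is what the $\simeq$ in the lemma indicates.
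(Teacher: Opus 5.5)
Your Steps 1--4 are exactly the paper's proof in Appendix~\ref{sec:proof_lem_tech}: Fourier-transform the continuity relations, freeze the vertical slownesses via \eqref{eq:approx_lambda}, solve the resulting pointwise $2\times 2$ system to get the random factors $\CalR e^{2i\omega\bcals_0 V}$ and $\CalT e^{i\omega(\bcals_0-\bcals_1)V}$ (your $1+\CalR=\CalT\sqrt{\bcals_0/\bcals_1}$ only reflects normalizing the amplitudes by $\sqrt{\omega\bcals^\e_j}$ instead of the paper's $\sqrt{\omega}$), then convolve back and expand the phases to produce $\bx_{int}$, $\bx_{obs,ref}$, $\bx_{obs,tr}$ and the $\hat\CalU_j$. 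The paper does not address the uniformity issue in your last paragraph at all (it applies the expansion formally to the scattered modes $\bq=\bp/\e^{\gamma-1/2}$ as well), so that part goes beyond the paper; note, however, that your remedy cannot work as stated, because at $\gamma=1$ the frozen-slowness error is $\CalO(1)$ on the speckle part, which carries $\CalO(1)$ energy, and a bound on $\E|K^\e|^2$ does not control the deterministic remainder $\CalO(\e^{2-3\gamma}|\bp|^3)$ (unbounded for $\gamma>2/3$) that the quadratic expansion \eqref{eq:exp_lambda} leaves in the propagation phases.
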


The scattering operator $K^\e$ is a central quantity  that encapsulates the effects of interface randomness. It is parameterized by the angular frequency $\omega$, is centered at $\bx_{int}$, and has an effective support in $\bq-\bq'$ comparable to the beam width $\se$. This operator describes how an incident plane wave with lateral direction $\bq'$, supported by the source term $\hat \Psi$, is scattered into a direction $\bq$. 

For both the reflected \eqref{eq:ue_ref} and transmitted \eqref{eq:ue_tr} waves, the incident plane wave with direction $\bq'$ propagates according to $\hat \CalU_0(\bq')$ under the paraxial approximation until it reaches the rough interface. At the interface, the operator $K^\e$ accounts for scattering and mode coupling. After interaction with the interface, the reflected component propagates according to $\hat \CalU_0(\bq)$ in the opposite $z$-direction, while the transmitted component propagates according to $\hat \CalU_1(\bq)$.

Note that for $V\equiv 0$ the scattering operator reduces to the identity, and we recover 
\begin{align*}
\lim_{\e\to 0} u^{\e,ref}(t^\e_{obs,ref}(\by)+\e s,\bx_{obs,ref} +\se\, \by, z=0) &  =   U^{ref}(s,\by) , \\\
\lim_{\e\to 0} u^{\e,tr}(t^\e_{obs,tr}(\by)+\e s,\bx_{obs,tr} + \se \by,z=z_{tr}) &   =  U^{tr}(s,\by)  , 
\end{align*}
with $U^{ref}$ and $U^{tr}$ defined in \eqref{def:Uref} and \eqref{def:Utr}, respectively.
 
In the remainder of the paper, we further analyze the effective statistical behavior of the operator $K^\e$ in the high-frequency limit $\e \to 0$. The two cases
\[
\gamma=1/2\qquad\text{and}\qquad \gamma >1/2,
\] 
are analyzed separately as they lead to qualitatively different statistical behaviors for the reflected and transmitted waves.

\section{Specular cone  in the case $\gamma=1/2$}\label{sec:specint}

The case $\gamma = 1/2$ corresponds to a situation where the correlation length of the interface fluctuations and the beam width are of the same order. Since the wavelength is also assumed to be of the same order as the amplitude of the fluctuations, it is natural to expect random arrival times for the reflected and transmitted waves, but no homogenization effect (see Figure \ref{fig:ref_rand} for an illustration).
\begin{figure}
\begin{center}
\includegraphics*[scale=0.25]{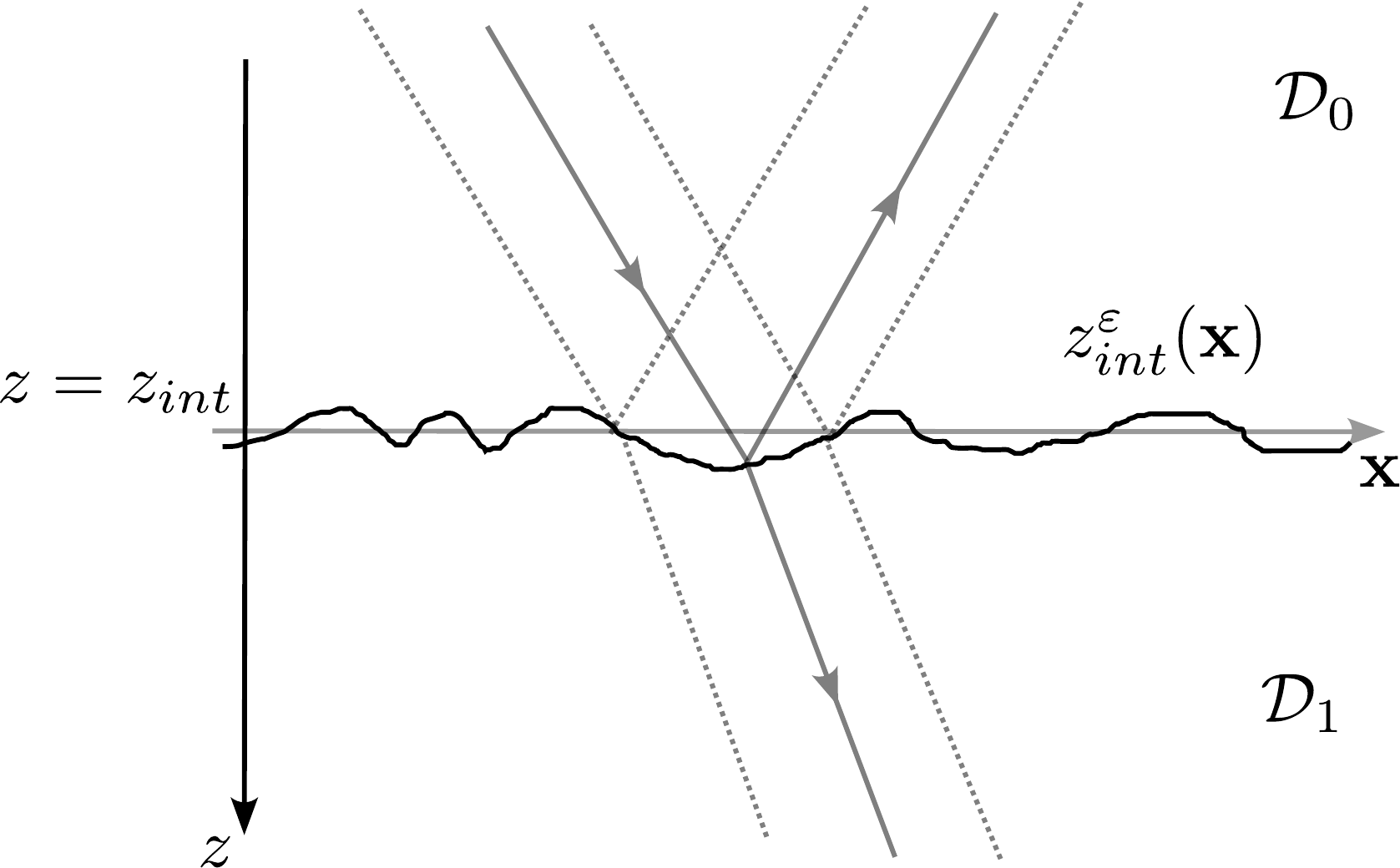} 
\end{center}
\caption{\label{fig:ref_rand} Illustration of reflection and transmission for a random interface with $\gamma = 1/2$
so that the width of the beam is on the spatial scale of variations in the random interface.  The travel times of the reflected and transmitted waves are affected by the fluctuations $V$.
}
\end{figure} 
To describe this phenomenon mathematically for the reflected wave, we perform the change of variables
\[\bx' \to \bx_{int} + \se \, \by'\]
in \eqref{eq:ue_ref}, so that, at leading order, the observed reflected wavefield reads
\begin{align*}
u^{\e,ref}(t,\bx, z=0) & \simeq \frac{\CalR}{2(2\pi)^5}\iiiint e^{-i\omega(t -2\bcals_0 z_{int}-\bx\cdot \bk_0 ) /\e} e^{i \omega (\bq'- \bq) \cdot \by'} e^{2i\omega \bcals_0 V(\bx_{int}/\se + \by')} \\
&\hspace{2.5cm} \times e^{i \omega \bq \cdot (\bx - \bx_{obs,ref})/\se} \hat \CalU_0(\omega,\bq,z_{int}) \hat \CalU_0(\omega, \bq',z_{int}) \\
& \hspace{2.5cm} \times \hat \Psi(\omega, \bq') \omega^4 d\omega d\by' d\bq' d\bq.
\end{align*}
Here, the lateral position $\bx_{obs,ref}$ is the same as in the flat-interface case (see \eqref{def:pos_time_ref}), $\bcals_0$ is defined by \eqref{def:tauj}, and $\hat \CalU_0$ by \eqref{def:hCalU0}. The presence of fast oscillatory components suggests introducing
\[
U^{\e,ref}(s, \by) := u^{\e,ref}\big(t^\e_{obs,ref}(\by) + \e s,\bx_{obs,ref} +\se \by, z=0\big)\qquad (s, \by)\in \R \times \R^2,
\]
where $t^\e_{obs,ref}(\by)$ is defined by \eqref{def:pos_time} and corresponds to the expected travel time for observing the reflected wave in the plane $z=0$. The wave front $U^{\e,ref}$ is then observed at the same lateral position and in the same moving time frame as the specularly reflected wave in the flat-interface case. 
We remark that hroughout the remaining of the paper, for notational simplicity, we denote
\[\CalX = \R \times \R^2.\]
For the reflected wave front we then have the following result. 
\begin{proposition}\label{prop:random_travel_time}
The family $(U^{\e,ref})_\e$ converges in law in $L^2(\CalX)$ to 
\begin{align*} 
U^{ref}(s, \by) & := \frac{\CalR}{2(2\pi)^5}\iiiint e^{-i\omega (s- 2\bcals_0 \CalV(\by') - \bq \cdot \by)} e^{i \omega (\bq'- \bq) \cdot \by'} \\
& \hspace{3cm} \times \hat \CalU_0(\omega,\bq,z_{int})\hat \CalU_0(\omega,\bq',z_{int}) \hat \Psi(\omega, \bq') \omega^4 d\omega d\by' d\bq' d\bq,
\end{align*}
where $\CalV$ is a random field with the same law as $V$.
\end{proposition}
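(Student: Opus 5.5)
Let me write $U^{\e,ref} = \Phi(V_\e)$, where $V_\e(\by') := V(\bx_{int}/\se + \by')$ and
\[
\Phi(W)(s,\by) := \frac{\CalR}{2(2\pi)^5}\iiiint e^{-i\omega (s- 2\bcals_0 W(\by') - \bq \cdot \by)} e^{i \omega (\bq'- \bq) \cdot \by'}\hat \CalU_0(\omega,\bq,z_{int})\hat \CalU_0(\omega,\bq',z_{int}) \hat \Psi(\omega, \bq') \omega^4 d\omega\, d\by'\, d\bq'\, d\bq ,
\]
up to an error term $R^\e$ coming from the "$\simeq$" in Lemma~\ref{lem:tech}. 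Indeed, substituting $t=t^\e_{obs,ref}(\by)+\e s$ and $\bx=\bx_{obs,ref}+\se\by$, the fast phase $-i\omega(t-2\bcals_0 z_{int}-\bx\cdot\bk_0)/\e$ reduces to $-i\omega s$. This uses $t_{obs,ref}-2\bcals_0 z_{int}-\bx_{obs,ref}\cdot\bk_0 = 2z_{int}/(c_0^2\bcals_0) - 2z_{int}(\bcals_0^2+|\bk_0|^2)/\bcals_0=0$, since $\bcals_0^2+|\bk_0|^2=c_0^{-2}$, together with the cancellation of the $\se\,\bk_0\cdot\by$ terms. The factor $e^{i\omega\bq\cdot(\bx-\bx_{obs,ref})/\se}$ becomes $e^{i\omega\bq\cdot\by}$. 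By stationarity, $V_\e$ has the same law as $V$ for every $\e$, so $\Phi(V_\e)$ has the same law as $\Phi(\CalV)$ for every $\e$. The proposition then follows from two facts: first, $\Phi$ is a measurable, indeed continuous, map from the path space of $V$ (continuous functions with the topology of local uniform convergence) into $L^2(\CalX)$; second, $\|R^\e\|_{L^2(\CalX)}\to0$ in probability. Given these, Slutsky's lemma yields convergence in law.

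\textbf{Well-posedness of $\Phi$.} First I would perform the $\bq'$ integral, giving a function $\check\Psi_0(\omega,\by')$, namely the inverse Fourier transform in $\bq'$ of $\hat\CalU_0\hat\Psi$. This function is smooth and decays rapidly in $\by'$, because $\hat\Psi$ is compactly supported and smooth and $\hat\CalU_0$ is a smooth phase. Next I would perform the $\by'$ integral: it is absolutely convergent since $|e^{2i\omega\bcals_0W}|=1$. Its output $G_W(\omega,\bq)=\int e^{-i\omega\bq\cdot\by'}e^{2i\omega\bcals_0 W(\by')}\check\Psi_0(\omega,\by')\,d\by'$ is a Fourier transform of an $L^2$ function. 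By Plancherel in $(s,\by)\leftrightarrow(\omega,\bq)$ (the Jacobian involves $\omega^2$, which is bounded on the compact $\omega$-support and bounded away from $0$), we get $\|\Phi(W)\|_{L^2}^2\lesssim\int\|\check\Psi_0(\omega,\cdot)\|^2_{L^2}d\omega$, uniformly in $W$. Continuity follows the same way: $\|\Phi(W)-\Phi(W')\|^2\lesssim\iint |e^{2i\omega\bcals_0W}-e^{2i\omega\bcals_0W'}|^2|\check\Psi_0|^2$. By dominated convergence and the decay of $\check\Psi_0$, this tends to $0$ when $W'\to W$ locally uniformly.

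\textbf{Remainder estimate (main obstacle).} The delicate point is controlling, in $L^2(\CalX)$ and uniformly in the randomness, the errors hidden in Lemma~\ref{lem:tech} and in the expansion \eqref{eq:exp_lambda}. These include the $\CalO(\se)$ phase remainders, the $\bk$-dependence of $\tau^\e_\pm$ and $\sqrt{\lwu/\lwd}$ replaced by $\CalR$, and the linearization of the continuity conditions on the curved interface (the tangential derivative terms $\e\nabla V(\bx/\se)/\se$ are of order $\se$ times bounded quantities). I would first try to bound each such error pointwise by $C\se(1+\|V_\e\|_{C^2(\text{local})})$ times an integrand with the same Plancherel structure as above. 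Since the law of $\|V_\e\|_{C^2(B_R)}$ does not depend on $\e$, such a bound gives $R^\e\to0$ in probability. To handle the unbounded $\by'$-domain, I would truncate to $|\by'|\le R$ using the decay of $\check\Psi_0$, control the tail deterministically, and let $\e\to0$ then $R\to\infty$.
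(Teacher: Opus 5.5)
Your proof is correct, and it reaches the result by a genuinely different and shorter route than the paper. Both arguments rest on the same input, stationarity of $V$; the difference is how you get from there to convergence in law in $L^2(\CalX)$.

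The paper works only with linear functionals, in three steps:
\begin{enumerate}
\item It proves convergence in law in $\CalS'(\CalX)$ via Mitoma's criterion. Tightness comes from a uniform bound on $\langle U^{\e,ref},\varphi\rangle$. Finite-dimensional distributions converge because stationarity removes the shift $\bx_{int}/\se$.
\item It uses Skorohod's representation and density of $\CalS(\CalX)$ to get almost sure weak convergence in $L^2(\CalX)$.
\item It upgrades to strong convergence with the norm identity \eqref{eq:lim_conservation}.
\end{enumerate}

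You apply stationarity once, to the whole path. The field $V(\bx_{int}/\se+\cdot)$ has the law of $V$ as a random element of $C(\R^2)$, and $W\mapsto\Phi(W)$ is continuous into $L^2(\CalX)$. Hence the leading-order wave front has exactly the law of $U^{ref}$ for every $\e$. This gives equality in law, not just convergence. Your Plancherel identity, which shows that $\|\Phi(W)\|_{L^2(\CalX)}$ does not depend on $W$, is precisely the paper's \eqref{eq:lim_conservation}.

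What the paper's heavier template buys is robustness. It would still apply if the prelimit laws genuinely depended on $\e$ and only finite-dimensional and norm convergence were available. Your argument instead uses continuity of the sample paths of $V$ to place $V$ in a path space; that continuity is part of the paper's standing assumptions.

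On $R^\e$: the paper does not estimate it. Its proof works directly with the leading-order expression of Lemma~\ref{lem:tech}, as does its definition of $U^{\e,tr}$. For that expression $R^\e\equiv 0$, and your argument is complete.

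Your remainder paragraph goes beyond the paper, but as written it is only a plan. The leading-order formula is obtained by freezing $\lwu(\bk)$ at $\bcals_0$ in the interface phases and then solving the continuity relations pointwise in $\bx$. The exact amplitudes $\hbr$ and $\hatr$, however, solve a system that is nonlocal in $\bk$. So pointwise bounds on the residual would have to be combined with a stability estimate for that system before they give $\|R^\e\|_{L^2(\CalX)}\to0$.

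Some error terms are easier than your plan suggests. For terms that vanish for each fixed realization, such as the $\CalO(\se)$ propagation-phase remainders, no $C^2$ control of $V$ is needed. Your own stationarity device plus dominated convergence already sends them to zero in probability.
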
 

The reflected wave front $U^{ref}$ consists in a superposition of contribution observed at point $\by$. Each of these contributions result in waves emitted with initial lateral direction $\bq'$, propagating according to $\hat \CalU_0(\bq')$, and scattered into direction $\bq$ at point $\by'$ before propagating back according to $\hat \CalU_0(\bq)$ (see Figure \ref{fig:ref_wave_rand} for an illustration).
\begin{figure}
\begin{center}
\includegraphics*[scale=0.25]{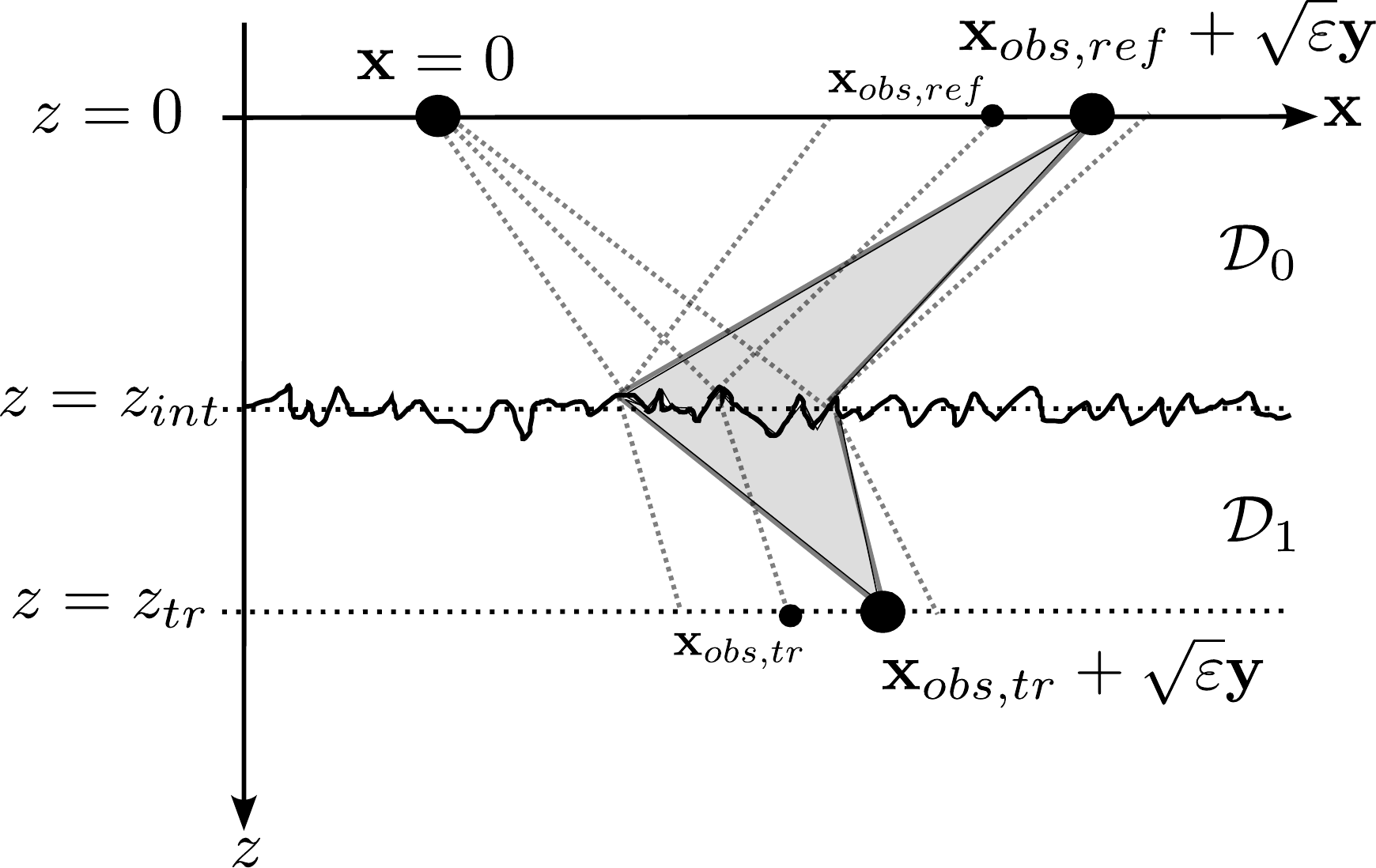} 
\end{center}
\caption{\label{fig:ref_wave_rand} Illustration of the contributions to the  reflected and transmitted waves observed at points $\bx_{obs,ref}+\se\by$ and $\bx_{obs,tr}+\se\by$ respectively.
}
\end{figure} 
Note that this reflected wave front still has a beam width of order $\se$, and the observation point remains $\bx_{obs,ref}$. In other words, $U^{ref}$ corresponds to the specular reflected component: there is no significant perturbation of the specular reflected cone and of the angle $\theta^0_{ref}$ defined in \eqref{def:angle_inc_ref}. Finally, at $z=0$, the random phase shift  $2\bcals_0 \CalV(\by')$ appears in $U^{ref}$ through contributions scattered at the relative position $\by'$. 

Moreover, $U^{ref}$ can be rewritten as
\begin{align*}
U^{ref}(s, \by) & = \frac{\CalR}{2(2\pi)^3}\iiiint  e^{-i\omega (s - \bq \cdot \by)} K(2\bcals_0,\omega,\bq',\bq) \\
& \hspace{3cm} \times \hat \CalU_0(\omega,\bq,z_{int})\hat \CalU_0(\omega,\bq',z_{int}) \hat \Psi(\omega, \bq') \omega^2 d\omega d\by' d\bq' d\bq,
\end{align*}
where the scattering operator 
\begin{equation}\label{def:scat_op_ref}
K(\tau,\omega,\bq',\bq):=
\frac{\omega^2}{(2\pi)^2}
\int e^{i\omega(\bq'-\bq)\cdot \by'}e^{i\omega \tau \CalV(\by')}d\by'
\end{equation}
has the same law as $K^\e$ defined in \eqref{def:scat_op}, by stationarity of $V$: for any $\varphi,\psi \in \CalS(\R^2)$ (where $\CalS(\R^2)$ denotes the Schwartz class), the law of the random variable 
\[
K^\e(\tau,\omega,\varphi,\psi) := \iint K^\e(\tau,\omega,\bq',\bq) d\bq d\bq'
\]
coincides with that of 
\[
K(\tau,\omega,\varphi,\psi) := \iint K(\tau,\omega,\bq',\bq) d\bq d\bq',
\]
for any $\e$.
\begin{proof}[of Proposition \ref{prop:random_travel_time}]
The proof is in three steps. First, we prove convergence in $\CalS'(\CalX)$ (the space of tempered distributions), and then in $L^2(\CalX)$ equipped with the weak topology, using a density argument. Finally, convergence in $L^2(\CalX)$ equipped with the strong topology follows from an energy conservation relation.

For the first step, $(U^{\e,ref})_\e$ is viewed as a family of tempered distributions. At leading order in $\e$, the wave front $U^{\e,ref}$ reads
\begin{align*}
U^{\e,ref}(s, \by) & \simeq \frac{\CalR}{2(2\pi)^5}\iiiint e^{-i\omega (s- \bq \cdot \by)} e^{i \omega (\bq'- \bq) \cdot \by'}  e^{2i\omega \bcals_0 V(\bx_{int}/\se + \by')} \\
& \hspace{3cm} \times \hat \CalU_0(\omega,\bq,z_{int})\hat \CalU_0(\omega,\bq',z_{int}) \hat \Psi(\omega, \bq') \omega^4 d\omega d\by' d\bq' d\bq,
\end{align*}
for which we have
\begin{equation}\label{eq:lim_conservation}
\lim_{\e \to 0}\|U^{\e,ref}\|_{L^2(\CalX)} = \|U^{ref}\|_{L^2(\CalX)} = \frac{\CalR}{2} \|\Psi\|_{L^2(\CalX)}. 
\end{equation}
Using \eqref{eq:lim_conservation} together with \cite{mitoma}, tightness in $\CalS'(\CalX)$ follows from
\begin{equation}\label{eq:limsup_Uref}
\limsup_{\e \to 0} \big| \big\langle U^{\e,ref},\varphi \rangle_{L^2(\CalX)} \big| \leq \frac{\CalR}{2} \|\Psi\|_{L^2(\CalX)}\|\varphi\|_{L^2(\CalX)},
\end{equation}
for any $\varphi \in \CalS(\CalX)$. The convergence of finite-dimensional distributions follows from stationarity of $V$, which allows us to remove the shift $\bx_{int}/\se$, yielding for any $n\geq 1$ and $\varphi_1,\dots,\varphi_n \in \CalS(\CalX)$:
\[
\lim_{\e\to 0}\E\Big[  \prod_{j=1}^n \big\langle U^{\e,ref},\varphi_j \rangle_{L^2(\CalX)} \Big]  = \E\Big[  \prod_{j=1}^n \big\langle U^{ref},\varphi_j \rangle_{L^2(\CalX)} \Big]. 
\]
Therefore, $(U^{\e,ref})_\e$ is tight in $\CalS'(\CalX)$ and its finite-dimensional distributions converge to the appropriate limit; by \cite{mitoma}, $(U^{\e,ref})_\e$ converges in law in $\CalS'(\CalX)$.

To obtain convergence in $L^2(\CalX)$, we use Skorohod’s representation theorem \cite[Theorem 6.7, p.~70]{billingsley}: there exists a probability space on which $(\widetilde U^{\e,ref})_\e$ and $\widetilde U^{ref}$ are defined, with the same laws as $(U^{\e,ref})_\e$ and $U^{ref}$, respectively, and such that $(\widetilde U^{\e,ref})_\e$ converges almost surely to $\widetilde U^{ref}$ in $\CalS'(\CalX)$. Since $\CalS(\CalX)$ is dense in $L^2(\CalX)$, let $\varphi \in L^2(\CalX)$ and let $(\varphi_n)_n$ be a sequence of functions in $\CalS(\CalX)$ converging to $\varphi$ in $L^2(\CalX)$. Following the same lines as in \eqref{eq:limsup_Uref}, we obtain
\begin{align*}
\limsup_{\e\to 0} \big| \big\langle \widetilde U^{\e,ref}-\widetilde U^{ref},\varphi \rangle_{L^2(\CalX)} \big| & \leq \CalR\|\Psi\|_{L^2(\CalX)}\|\varphi-\varphi_n\|_{L^2(\CalX)} \\
& + \underbrace{\limsup_{\e\to 0} \big| \big\langle \widetilde U^{\e,ref}-\widetilde U^{ref},\varphi_n \rangle_{L^2(\CalX)}\big|}_{=0\text{ with probability 1}}.
\end{align*}
Passing to the limit in $n$ shows that $ (\big\langle \widetilde U^{\e,ref},\varphi \rangle_{L^2(\CalX)})_\e$ converges to $\big\langle U^{ref},\varphi \rangle_{L^2(\CalX)}$ with probability one. This implies almost sure convergence of $(\widetilde U^{\e,ref})_\e$ to $\widetilde U^{ref}$ in the separable space $L^2(\CalX)$ equipped with the weak topology. Finally, strong convergence follows from \eqref{eq:lim_conservation} (which holds on the new probability space) together with \cite[Theorem 3.32, p.~78]{brezis}. Since almost sure convergence implies convergence in law, we conclude that $(U^{\e,ref})_\e$ converges in law to $U^{ref}$ in $L^2(\CalX)$ equipped with the strong topology. \hfill $\square$
\end{proof}

The transmitted wave, observed at the same lateral position and in the same moving time frame as the specularly transmitted wave for a flat interface, is defined by
\[
U^{\e,tr}(s, \by) := u^{\e,tr}\big(t^\e_{obs,tr}(\by) + \e s,\bx_{obs,tr} +\se \by, z=z_{tr}\big)\qquad (s, \by)\in \CalX,
\]
where $\bx_{obs,tr}$ and $t^\e_{obs,tr}(\by)$ are defined by \eqref{def:pos_tr} and \eqref{def:time_tr}, respectively, and $u^{\e,tr}$ is given by the leading-order expression \eqref{eq:ue_tr}. Its asymptotic behavior is characterized by the following result, whose proof follows the same lines as for the reflected wave and is therefore omitted.
\begin{proposition} 
The family $(U^{\e,tr})_\e$ converges in law in $L^2(\CalX)$ to 
\begin{align*}
U^{tr}(s, \by) & := \frac{\CalT}{2(2\pi)^5}\sqrt{\frac{\bcals_0}{\bcals_1}}\iiiint e^{-i\omega (s- (\bcals_0-\bcals_1) \CalV(\by') - \bq \cdot \by)} e^{i \omega (\bq'- \bq) \cdot \by'} \\
& \hspace{3cm} \times \hat \CalU_1(\omega,\bq,z_{tr}-z_{int})\hat \CalU_0(\omega,\bq',z_{int}) \hat \Psi(\omega, \bq') \omega^4 d\omega d\by' d\bq' d\bq,
\end{align*}
where $\CalV$ is a random field with the same law as $V$.
\end{proposition}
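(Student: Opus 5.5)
The argument runs along the same three steps as the proof of Proposition \ref{prop:random_travel_time}, adapted to the transmitted geometry. First, in the leading-order representation \eqref{eq:ue_tr} of Lemma \ref{lem:tech}, I substitute $t = t^\e_{obs,tr}(\by)+\e s$ and $\bx = \bx_{obs,tr}+\se\,\by$. I then change variables $\bx' \to \bx_{int}+\se\,\by'$ in the scattering operator \eqref{def:scat_op} with $\tau=\bcals_0-\bcals_1$. Using \eqref{def:time_tr} and \eqref{def:pos_tr}, the fast phase $e^{-i\omega(t-\bcals_0 z_{int}-\bcals_1(z_{tr}-z_{int})-\bx\cdot\bk_0)/\e}$ reduces to $e^{-i\omega s}$. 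This uses the identity $t_{obs,tr}=\bcals_0 z_{int}+\bcals_1(z_{tr}-z_{int})+\bx_{obs,tr}\cdot\bk_0$, which follows from $c_j^{-2}=\bcals_j^2+|\bk_0|^2$, and the fact that the $\se\,\bk_0\cdot\by$ terms cancel. The factor $\e^{-1}$ in $K^\e$ is absorbed by the Jacobian $\e$. This gives
\[
U^{\e,tr}(s,\by)\simeq \frac{\CalT}{2(2\pi)^5}\sqrt{\frac{\bcals_0}{\bcals_1}}\iiiint e^{-i\omega(s-\bq\cdot\by)}e^{i\omega(\bq'-\bq)\cdot\by'}e^{i\omega(\bcals_0-\bcals_1)V(\bx_{int}/\se+\by')}\hat\CalU_1(\omega,\bq,z_{tr}-z_{int})\hat\CalU_0(\omega,\bq',z_{int})\hat\Psi(\omega,\bq')\omega^4\,d\omega\, d\by'\, d\bq'\, d\bq .
\]
Since $\gamma=1/2$, $V$ is evaluated at $\bx_{int}/\se+\by'$. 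By stationarity, $V(\bx_{int}/\se+\cdot)$ has the law of $\CalV$ for every $\e$, so $U^{\e,tr}$ equals $U^{tr}$ in law up to the $o(1)$ remainder coming from \eqref{eq:exp_lambda}.

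Second, I establish the energy identity playing the role of \eqref{eq:lim_conservation}, namely $\|U^{\e,tr}\|_{L^2(\CalX)}\to\|U^{tr}\|_{L^2(\CalX)}=\frac{\CalT}{2}\sqrt{\bcals_0/\bcals_1}\,\|\Psi\|_{L^2(\CalX)}$ almost surely, with the appropriate frequency weight. To prove it, apply Plancherel in $(s,\by)$. The propagators $\hat\CalU_0$ and $\hat\CalU_1$ are unimodular phases and drop out. The $\by'$ integral defines, for each fixed $\omega$, the unitary operator on $L^2(\R^2_{\bq})$ given by conjugating multiplication by the unimodular function $e^{i\omega\tau\CalV(\by')}$ with the Fourier transform. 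The norm is therefore that of $\hat\Psi$, independent of the realization of $\CalV$. This deterministic bound also yields the analogue of \eqref{eq:limsup_Uref}, $\limsup_\e|\langle U^{\e,tr},\varphi\rangle|\le C\|\Psi\|\|\varphi\|$, and hence tightness in $\CalS'(\CalX)$ via \cite{mitoma}.

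Third, convergence of finite-dimensional distributions $\E[\prod_j\langle U^{\e,tr},\varphi_j\rangle]$ follows from the stationarity remark above together with control of the remainder, which is bounded by the same unitary estimate. Convergence in law in $\CalS'(\CalX)$ then follows. Next, Skorohod's representation and a density argument give almost sure weak convergence in $L^2(\CalX)$. Finally, the norm convergence upgrades this to strong convergence by \cite[Theorem 3.32]{brezis}, exactly as in the reflected case.

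I expect the main obstacle to be justifying that the $\CalO(\se)$ corrections neglected in \eqref{eq:ue_tr} and in the expansion \eqref{eq:exp_lambda} vanish in $L^2(\CalX)$ uniformly over realizations of $V$. I would handle this by dominated convergence on the compact support of $\hat\Psi$, combined with the unitarity bound, which makes every estimate pathwise and independent of the roughness.
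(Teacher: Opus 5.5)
Your proposal is correct and follows the same route as the paper, which omits this proof as a transplant of the three-step argument of Proposition \ref{prop:random_travel_time}: the shift $\bx_{int}/\se$ is removed by stationarity, the $L^2$ norm is deterministic, tightness comes from Mitoma, Skorohod plus density gives weak convergence, and Brezis' norm argument upgrades it to strong convergence. One small overstatement is that the remainder is controlled uniformly over realizations: the outgoing variable $\bq$ is not confined to the support of $\hat\Psi$, so the correction in $\hat\CalU_1(\omega,\bq,z_{tr}-z_{int})$ is not pathwise uniform, but applying dominated convergence to the fixed field $\CalV$ (via equality in law) sends it to zero in probability, which suffices for convergence in law and goes beyond the paper, which only works at leading order.
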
 

The transmitted wave front $U^{tr}$ also consists of a superposition of diffracted contributions observed at the point $\by$. Each contribution corresponds to waves emitted with initial lateral direction $\bq'$, propagating according to $\hat \CalU_0$, scattered into direction $\bq$ at the relative point $\by'$ as they cross the interface, and then propagating according to $\hat \CalU_1$ until reaching $z=z_{tr}$ (see Figure \ref{fig:ref_wave_rand}). As in the reflected case, the beam width remains of order $\se$, and the observation point remains $\bx_{obs,tr}$, so that there is no significant perturbation of the specular transmitted cone and of the angle $\theta^0_{tr}$ defined by \eqref{def:angle_tr}. Moreover, the random phase shift $(\bcals_0-\bcals_1)\CalV(\by')$ appears in $U^{tr}$ through contributions scattered at the relative position $\by'$.

The transmitted wave can also be rewritten in term of the scattering operator \eqref{def:scat_op_ref}  as
\begin{align*}
U^{tr}(s, \by) & := \frac{\CalT}{2(2\pi)^3}\sqrt{\frac{\bcals_0}{\bcals_1}}\iiiint e^{-i\omega (s- \bq \cdot \by)} K(\bcals_0-\bcals_1,\omega,\bq',\bq) \\
& \hspace{3cm} \times \hat \CalU_1(\omega,\bq,z_{tr}-z_{int})\hat \CalU_0(\omega,\bq',z_{int}) \hat \Psi(\omega, \bq') \omega^2 d\omega d\by' d\bq' d\bq,
\end{align*}
where $K$ is given by \eqref{def:scat_op_ref}.

\section{The case $\gamma > 1/2$: homogenized specular reflected and transmitted components}\label{sec:randint1}

The case $\gamma>1/2$ corresponds to a correlation length of the interface fluctuations that is smaller than the beam width. The incident wave is therefore strongly scattered and homogenization phenomena occur.

In this section, the following reflected wave front is considered
\ba\label{eq:Uref2}
U^{\e,ref}(s, \by, \widetilde\by) := u^{\e,ref}\big(t^\e_{obs,ref}(\by) + \e^\gamma \bk_0 \cdot\widetilde\by +\e s,\bx_{obs,ref} + \se \by + \e^\gamma \widetilde\by, z=0\big) \qquad (s, \by,\widetilde \by)\in \R \times \R^2\times \R^2,
\ea
where $u^{\e,ref}$ is given at leading order by \eqref{eq:ue_ref}, and $t^\e_{obs,ref}(\by)$ and $\bx_{obs,ref}$ are defined respectively by \eqref{def:pos_time} and \eqref{def:pos_time_ref}. We are therefore looking at the specular reflected component. The additional variable $\widetilde \by$ accounts here for variations at the scale of the correlation length. The asymptotic behavior of $U^{\e,ref}$ is described by the following result. For notational simplicity, we denote
\[\mathbb{X}=\R\times\R^2\times \R^2\]
in the remaining of the paper.
\begin{proposition}\label{prop:homog_ref}
The family $(U^{\e,ref})_\e$ converges in probability in $\CalS'(\mathbb{X})$ (the set of tempered distributions) to the deterministic pulse profile
\begin{align*}
U^{ref}(s, \by) & = \frac{\CalR}{2(2\pi)^3 }\iint e^{-i\omega(s-\bq\cdot \by)}   \hat \CalU_0(\omega, \bq,  2z_{int}) \phi_V (2\omega\bcals_0) \hat \Psi(\omega, \bq) \omega^2 d\omega  d\bq\\
& = \CalR \, \widetilde \CalU_0(s,\by,2z_{int}).
\end{align*}
Here,
\begin{equation}\label{def:charac_V}
\phi_V(u) := \E\big[e^{i u V(0)} \big]
\end{equation}
is the characteristic function of the random variable  $V(0)$, and $\hat \CalU_0$ is defined by \eqref{def:hCalU0}. 
The function $\widetilde \CalU_0$ satisfies \eqref{eq:paraxial} with initial condition
\[
\widetilde \CalU_0(s,\by,z=0)=\frac{1}{2} \Phi \ast_s \Psi(s,\by),
\]
where $\ast_s$ denotes convolution in the $s$-variable, and
\begin{equation}\label{eq:Phi}
 \Phi(s) := \frac{1}{2\pi}\int e^{-i\omega s}\phi_V(2\omega \bcals_0 )d\omega 
 = \frac{1}{2 \bcals_0}f_V\Big( \frac{s}{2 \bcals_0}\Big), 
\end{equation}
is a scaled version of the probability density function of $V(0)$, which we denote by $f_V$.
\end{proposition}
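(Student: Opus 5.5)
The plan is to show that for every test function $\varphi\in\CalS(\mathbb{X})$ the pairing $\langle U^{\e,ref},\varphi\rangle$ converges in $L^2(\Pro)$ to $\langle U^{ref},\varphi\rangle$. This means showing that the mean converges and that the variance vanishes. Convergence in probability in $\CalS'(\mathbb{X})$ then follows from convergence in probability of each pairing, together with a uniform $L^2$ bound of the type \eqref{eq:limsup_Uref}, which provides tightness via \cite{mitoma}. Since the limit is deterministic, convergence in law upgrades to convergence in probability.

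The first step is to insert the observation point \eqref{eq:Uref2} into \eqref{eq:ue_ref}. The fast phase $e^{-i\omega(t-2\bcals_0 z_{int}-\bx\cdot\bk_0)/\e}$ then cancels exactly with $t^\e_{obs,ref}$. The extra shift $\e^\gamma\bk_0\cdot\widetilde\by$ in time is designed to compensate the lateral shift $\e^\gamma\widetilde\by$, leaving only $e^{i\omega\bq\cdot\widetilde\by\,\e^{\gamma-1/2}}\to 1$. Next I change variables $\bx'=\bx_{int}+\se\by'$ in $K^\e$ from \eqref{def:scat_op}. This writes $U^{\e,ref}$ as a $(\omega,\by',\bq,\bq')$-integral against the random factor
\[
e^{2i\omega\bcals_0 V(\bx_{int}/\e^\gamma+\by'\e^{1/2-\gamma})}.
\]

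Taking expectations and using stationarity, this factor becomes $\phi_V(2\omega\bcals_0)$, independent of $\by'$. The $\by'$-integral then produces $(2\pi)^2\delta(\bq-\bq')/\omega^2$, which yields exactly the formula for $U^{ref}$, with $\hat\CalU_0(\bq,z_{int})^2=\hat\CalU_0(\bq,2z_{int})$. The identity for $\Phi$ in \eqref{eq:Phi} is then just Fourier inversion of the characteristic function of $V(0)$. The claim that $\widetilde\CalU_0$ satisfies \eqref{eq:paraxial} follows because multiplying by a function of $\omega$ commutes with the paraxial evolution.

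The main obstacle is the variance estimate. The second moment of the pairing involves two copies $(\by'_1,\omega_1)$ and $(\by'_2,\omega_2)$ and the expectation
\[
\E\big[e^{2i\bcals_0(\omega_1V(\by'_1\eta^{-1})-\omega_2V(\by'_2\eta^{-1}))}\big],\qquad \eta=\e^{\gamma-1/2}\to 0 .
\]
By Lemma~\ref{lem:mixing}, this factorizes to $\phi_V(2\omega_1\bcals_0)\overline{\phi_V(2\omega_2\bcals_0)}$ whenever $\by'_1\neq\by'_2$, and it is bounded by one. Dominated convergence should therefore show that the second moment tends to the square of the limit mean, so the variance vanishes. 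The difficulty is that the $\by'$ integrals are oscillatory and not absolutely convergent. To handle this, I would first integrate out $\bq,\bq'$ against the Schwartz test function and $\hat\Psi$. This produces integrable kernels in $\by'_j$, namely Fresnel-type kernels $\check\CalU_0$ convolved with $\Psi$ and $\varphi$, which have enough decay because they are Schwartz in $\by'$ after the $\omega$-integration over the compact support of $\hat\Psi$. Only then would I apply dominated convergence in $(\by'_1,\by'_2)$, where the diagonal is a null set.

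Finally, the $L^2$ bound needed for tightness holds uniformly in $\e$. The reason is that $|e^{i\omega\tau V}|=1$ makes $K^\e$ act as a unitary-type operator, as in the energy relation \eqref{eq:lim_conservation}.
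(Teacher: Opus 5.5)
Your proposal is correct and follows essentially the same route as the paper. You identify the limit from the first moment via stationarity, which produces $\phi_V$ and the Dirac mass $\delta(\bq-\bq')$. You then show via Lemma~\ref{lem:mixing} that the second moment of each pairing converges to the square of the limiting mean, and conclude with Chebyshev's inequality.

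Your extra steps supply rigor that the paper's proof leaves implicit. Integrating out $\bq,\bq'$ against the test function and $\hat\Psi$ first lets dominated convergence legitimately move the mixing limit inside the otherwise only oscillatory $\by'$-integrals. The tightness argument then upgrades pairing-wise convergence to convergence in $\CalS'(\mathbb{X})$.
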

We now comment on the homogenized specularly reflected wave front. Note first that the variable $\widetilde\by$  was introduced in \eqref{eq:Uref2} in order to capture variations
on the finest spatial scale of the problem  corresponding to the scale of the interface variations. In fact in the small $\e$ limit, the 
 asymptotic profile does not depend on $\widetilde\by$ and therefore does not vary at the scale of the interface fluctuations. This behavior reflects the fact that the interface fluctuates much faster than the beam width, so that the limit becomes deterministic. The leading-order effect of the random interface is then a smoothing in time, determined by the marginal distribution of the interface height.
 Indeed, through $\widetilde \CalU_0$, the effect of the interface fluctuations on the specular reflection can be interpreted as a reflection problem for a flat interface, where the effective scattering properties are captured by the convolution in the initial condition $\Phi \ast_s \Psi/2$.  Note that, 
 compared to the case $\gamma=1/2$, the scattering operator \eqref{def:scat_op_ref} is now homogenized:
\[
\E[K(2\bcals_0,\omega,\bq',\bq)] = \frac{\omega^2}{(2\pi)^2} 
\int e^{i\omega(\bq'-\bq)\cdot\by'}\E\Big[e^{2i\omega\bcals_0 \CalV(\by')}\Big]d\by' = 
\delta(\bq'-\bq)\phi_V(2\omega\bcals_0).
\]
This homogenized scattering operator acts as for a flat interface: it does not modify the incident direction $\bq'$. The net effect of the interface fluctuations is a frequency-dependent low-pass filtering through the characteristic function $\phi_V$. This characteristic function depends only on the one-point statistics of the interface elevation due to stationarity of $V$. In this regime, the specularly reflected component still has a beam width of order $\se$, and the observation point remains $\bx_{obs,ref}$. Therefore, there is still no significant perturbation of the specular reflected cone and of the angle $\theta^0_{ref}$ given by \eqref{def:angle_inc_ref}.
  
\begin{proof}[of Proposition \ref{prop:homog_ref}]
The proof consists of two steps. First, we compute the first-order moment of $U^{\e,ref}$ to identify the homogenized limit. Second, we compute the second-order moment to prove convergence in probability.

Before evaluating the moment, we make the change of variable in \eqref{eq:ue_ref}
\[\bx' \to \bx_{int} + \se \, \by' + \e^\gamma \widetilde \by,\]
yielding at the leading order
\begin{align*}
U^{\e,ref}(s, \by, \widetilde\by) & \simeq \frac{\CalR}{2(2\pi)^5}\iiiint e^{-i\omega (s - \bq \cdot \by)}   e^{2i\omega  \bcals_0 V(\bx_{int}/\e^\gamma + \by'/\e^{\gamma-1/2}+\widetilde \by)} \\
& \hspace{3cm} \times e^{i \omega (\bq'-\bq)\cdot \by'} e^{i \omega \bq' \cdot \widetilde \by \e^{\gamma-1/2}} \hat \CalU_0(\omega, \bq, z_{int}) \hat \CalU_0(\omega, \bq', z_{int})  \hat \Psi(\omega, \bq') \omega^4 d\omega d\by' d\bq' d\bq.\nonumber
\end{align*}

\paragraph{First order moment.}
Taking expectations of $U^{\e,ref}$ and using stationarity of the fluctuations yield
\begin{align*}
\lim_{\e\to 0}\E[U^{\e,ref}(s, \by, \widetilde\by)] & = \frac{\CalR}{2(2\pi)^5 }\iiiint e^{-i\omega(s-\bq\cdot \by)}  \E\big[e^{2i\omega\bcals_0 V(0)} \big]   e^{i \omega (\bq'-\bq)\cdot \by'}\\
& \hspace{3cm} \times \hat \CalU_0(\omega, \bq,  z_{int})\hat \CalU_0(\omega, \bq', z_{int}) \hat \Psi(\omega, \bq') \omega^4 d\omega d\by' d\bq' d\bq\\
& = \frac{\CalR}{2(2\pi)^3 }\iint e^{-i\omega(s-\bq\cdot \by)}  \E\big[e^{2i\omega\bcals_0 V(0)} \big] \\
& \hspace{3cm} \times \hat \CalU_0(\omega, \bq,  2z_{int}) \hat \Psi(\omega, \bq) \omega^2 d\omega  d\bq\\
& = \CalR \, \widetilde \CalU_0(s,\by,2z_{int}).
\end{align*}

\paragraph{Second order moment.}
At leading order, the second order moments  of $U^{\e,ref}$ reads
\begin{align*}
\E[U^{\e,ref}(s_1, \by_1, \widetilde\by_1)&U^{\e,ref}(s_2, \by_2, \widetilde\by_2)] \\
& = \frac{\CalR^2}{4(2\pi)^{10}}\int\dots\int e^{-i\omega_1(s_1 - \bq_1\cdot\by_1)} e^{-i\omega_2(s_2 - \bq_2\cdot \by_2 )} \\
& \hspace{3cm} \times \E\big[e^{2i\omega_1\bcals_0 V(\bx_{int}/\e^\gamma + \by'_1/\e^{\gamma-1/2} + \widetilde \by_1)}e^{2i\omega_2\bcals_0 V(\bx_{int}/\e^\gamma + \by'_2/\e^{\gamma-1/2} + \widetilde \by_2)}\big] \\
& \hspace{3cm} \times e^{i \omega_1 (\bq'_1 - \bq_1) \cdot \by'_1} e^{i \omega_2 (\bq'_2 - \bq_2) \cdot \by'_2}\\
& \hspace{3cm} \times \hat \CalU_0(\omega_1,\bq_1,z_{int}) \hat \CalU_0(\omega_1,\bq'_1,z_{int}) \hat \CalU_0(\omega_2,\bq_2,z_{int})\hat \CalU_0(\omega_2,\bq'_2,z_{int}) \\
& \hspace{3cm} \times \hat \Psi(\omega_1, \bq'_1) \overline{\hat \Psi(\omega_2, \bq'_2)} \omega^4_1 \omega^4_2 d\omega_1 d\omega_2 d\by'_1 d\by'_2 d\bq'_1 d\bq'_2 d\bq_1 d\bq_2.
\end{align*}
Using stationarity of the fluctuations together with Lemma \ref{lem:mixing} yields
\begin{align*}
\lim_{\e\to 0}\E\big[e^{2i\omega_1\bcals_0 V(\bx_{int}/\e^\gamma + \by'_1/\e^{\gamma-1/2} + \widetilde \by_1)} & e^{2i\omega_2\bcals_0 V(\bx_{int}/\e^\gamma + \by'_2/\e^{\gamma-1/2} + \widetilde \by_2)}\big]\\
& = \lim_{\e\to 0}\E\big[e^{2i\omega_1\bcals_0 V( \by'_1/\e^{\gamma-1/2} +\widetilde\by_1)}e^{2i\omega_2\bcals_0 V(\by'_2/\e^{\gamma-1/2}+\widetilde\by_2) }\big]\\
& = \lim_{\e\to 0}\E\big[e^{2i\omega_1\bcals_0 V( \by'_1/\e^{\gamma-1/2} +\widetilde\by_1)}\big] \E\big[e^{2i\omega_2\bcals_0 V(\by'_2/\e^{\gamma-1/2}+\widetilde\by_2) }\big]\\
& = \E\big[e^{2i\omega_1\bcals_0 V(0 )}\big] \E\big[e^{2i\omega_2\bcals_0 V(0) }\big],
\end{align*}
so that
\[
\lim_{\e\to 0}\E\big[\big \langle U^{\e,ref},\varphi\rangle_{\CalS',\CalS}^2] = \E\big[\big \langle U^{ref},\varphi\rangle_{\CalS',\CalS}]^2 = \big \langle U^{ref},\varphi\rangle^2_{\CalS',\CalS}.
\]
As a result, from the Markov inequality, for any $\eta>0$
\begin{align*}
\lim_{\e\to 0} \Pro\big(&|\big \langle U^{\e,ref},\varphi\rangle_{\CalS',\CalS}- \big \langle U^{ref},\varphi\rangle_{\CalS',\CalS}|>\eta\big) \\
& \leq \frac{1}{\eta^2} \lim_{\e\to 0} \E\big[|\big \langle U^{\e,ref},\varphi\rangle_{\CalS',\CalS}-\big \langle U^{ref},\varphi\rangle_{\CalS',\CalS}|^2\big] \\
& \leq \frac{1}{\eta^2} \Big(\lim_{\e\to 0}\E\big[\big \langle U^{\e,ref},\varphi\rangle^2_{\CalS',\CalS}] + \big \langle U^{ref},\varphi\rangle^2_{\CalS',\CalS}  - 2 \big \langle U^{ref},\varphi\rangle_{\CalS',\CalS}\lim_{\e\to 0}\E\big[\big \langle U^{\e,ref},\varphi\rangle_{\CalS',\CalS}]\Big) = 0,
\end{align*}
which concludes the proof. 
\hfill $\square$
\end{proof}

The strategy used above to characterize the specular reflected wave also applies to characterize the specular transmitted wave front:
\[
U^{\e,tr}(s, \by, \widetilde\by) := u^{\e,tr}\big(t^\e_{obs,tr}(\by) + \e s,\bx_{obs,tr} + \se \by + \e^\gamma \widetilde\by, z=0\big) \qquad (s, \by)\in \R \times \R^2\times \R^2,
\]
where $u^{\e,tr}$ is given at leading order by \eqref{eq:ue_tr}, and $\bx_{obs,tr}$ and $t^\e_{obs,tr}(\by)$ are defined in \eqref{def:pos_tr} and \eqref{def:time_tr}, respectively. The asymptotic behavior of $U^{\e,tr}$ is described by the following result. 

\begin{proposition}
The family $(U^{\e,tr})_\e$ converges in probability in $\CalS'(\mathbb{X})$ to the deterministic pulse profile

\begin{align*}
U^{tr}(s, \by) & = \frac{\CalT}{2(2\pi)^3}\sqrt{\frac{\bcals_0}{\bcals_1}}\iint e^{-i\omega(s - \bq \cdot \by)}   \hat \CalU_1(\omega,\bq,z_{tr}-z_{int}) \hat \CalU_0(\omega,\bq,z_{int}) \nonumber \\ 
& \hspace{3cm}\times  \phi_V(\omega(\bcals_0-\bcals_1))  \hat \Psi(\omega, \bq) \omega^2 d\omega d\bq,\nonumber
\end{align*}
where $\phi_V$ is defined by \eqref{def:charac_V}, and the functions $\hat \CalU_j$ are defined by \eqref{def:hCalU0} and \eqref{def:hCalU1}.
\end{proposition}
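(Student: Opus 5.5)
The argument mirrors the proof of Proposition \ref{prop:homog_ref}, with the reflection quantities replaced by their transmission counterparts. We start from the leading-order expression \eqref{eq:ue_tr} and substitute the observation point $t = t^\e_{obs,tr}(\by)+\e s$, $\bx = \bx_{obs,tr}+\se\by+\e^\gamma\widetilde\by$. We then change variables in the scattering operator \eqref{def:scat_op} as $\bx' \to \bx_{int}+\se\,\by'+\e^\gamma\widetilde\by$. Using $\bx_{obs,tr}-\bx_{int}=\bx_{tr}$ and the definition \eqref{def:time_tr} of $t_{obs,tr}$, the fast phases $\bcals_0 z_{int}+\bcals_1(z_{tr}-z_{int})+\bx\cdot\bk_0$ cancel against the observation time; this uses $c_j^2\bcals_j^2 + c_j^2|\bk_0|^2=1$, exactly as in the flat case of Section \ref{sec:rec_trans}. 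This leaves, at leading order,
\[
U^{\e,tr}(s,\by,\widetilde\by)\simeq \frac{\CalT}{2(2\pi)^5}\sqrt{\frac{\bcals_0}{\bcals_1}}\int e^{-i\omega(s-\bq\cdot\by)} e^{i\omega(\bcals_0-\bcals_1)V(\bx_{int}/\e^\gamma+\by'/\e^{\gamma-1/2}+\widetilde\by)} e^{i\omega(\bq'-\bq)\cdot\by'}e^{i\omega\bq'\cdot\widetilde\by\,\e^{\gamma-1/2}}\,\hat\CalU_1\hat\CalU_0\hat\Psi\,\omega^4,
\]
with the arguments of $\hat\CalU_1$, $\hat\CalU_0$ and $\hat\Psi$ as in \eqref{eq:ue_tr}. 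We take this as the starting representation.

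\emph{First moment.} By stationarity, $\E[e^{i\omega(\bcals_0-\bcals_1)V(\cdot)}]=\phi_V(\omega(\bcals_0-\bcals_1))$ for every $\by'$. The factor $e^{i\omega\bq'\cdot\widetilde\by\e^{\gamma-1/2}}$ tends to $1$ because $\gamma>1/2$. The $\by'$-integral then produces $(2\pi/\omega)^2\delta(\bq-\bq')$, which collapses the $\bq'$-integral and gives the announced $U^{tr}$. The limit is justified after pairing with a test function $\varphi\in\CalS(\mathbb{X})$, using dominated convergence and the compact support of $\hat\Psi$.

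\emph{Second moment.} We compute $\E[\langle U^{\e,tr},\varphi\rangle^2]$. It involves $\E[e^{i\omega_1(\bcals_0-\bcals_1)V(\by_1'/\e^{\gamma-1/2}+\widetilde\by_1)}e^{i\omega_2(\bcals_0-\bcals_1)V(\by_2'/\e^{\gamma-1/2}+\widetilde\by_2)}]$, where the shift $\bx_{int}/\e^\gamma$ has been removed by stationarity. By Lemma \ref{lem:mixing}(1) with $\eta=\e^{\gamma-1/2}\to0$, this factorizes into $\phi_V(\omega_1(\bcals_0-\bcals_1))\phi_V(\omega_2(\bcals_0-\bcals_1))$ for $\by_1'\neq\by_2'$, that is, off a null set. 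Hence $\lim\E[\langle U^{\e,tr},\varphi\rangle^2]=\langle U^{tr},\varphi\rangle^2$. Combined with the first-moment limit, the Markov/Chebyshev inequality gives convergence in probability of $\langle U^{\e,tr},\varphi\rangle$ for each $\varphi$. This is the sense of convergence in $\CalS'(\mathbb{X})$ used in Proposition \ref{prop:homog_ref}.

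\emph{Main obstacle.} The delicate step is interchanging the limit $\e\to0$ with the $\by'_1,\by'_2$ integrals in the second moment. The $\by'$-integrals are only oscillatory, so they are not absolutely convergent. We would first integrate against $\varphi$ in $(s,\by)$, which produces Schwartz decay in $\bq_j$. We would then integrate by parts in $\bq'_j$, or equivalently represent the $\by'$-integration through the smooth kernels $\check\CalU_0$ and $\check\CalU_1$ of \eqref{eq:schro0}. This yields an integrable envelope in $(\by'_1,\by'_2)$ that is independent of $\e$, since the random exponentials have modulus one. Dominated convergence then applies, with the pointwise limit supplied by Lemma \ref{lem:mixing}.
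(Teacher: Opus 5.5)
Your route is the one the paper intends. It omits this proof and defers to that of Proposition \ref{prop:homog_ref}: first moment by stationarity, factorization of the second moment by Lemma \ref{lem:mixing} with $\eta=\e^{\gamma-1/2}$, then Chebyshev. You need the lemma in the mildly extended form that allows fixed offsets $\widetilde\by_j$; the appendix proof covers this. Your domination argument (integrating by parts in $\bq'_j$ to get an $\e$-independent envelope in $(\by'_1,\by'_2)$) is a real addition, since the paper interchanges the limit with these oscillatory integrals without comment.

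One step is wrong as written, however: the fast phase does not fully cancel. Take $t=t^\e_{obs,tr}(\by)+\e s$ and $\bx=\bx_{obs,tr}+\se\by+\e^\gamma\widetilde\by$. The identity $c_j^2\bcals_j^2+c_j^2|\bk_0|^2=1$ removes the order-one part, and the $\se\bk_0\cdot\by$ terms cancel. But the lateral offset $\e^\gamma\widetilde\by$ is not compensated in time. The phase in \eqref{eq:ue_tr} therefore becomes $e^{-i\omega s}\,e^{i\omega\bk_0\cdot\widetilde\by/\e^{1-\gamma}}$, and the second factor is missing from your displayed representation.

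For $\bk_0\neq 0$ and $\gamma<1$ this factor oscillates rapidly in $\widetilde\by$. The expectations in your moment computations become independent of $\widetilde\by$ in the limit. So integrating against $\varphi\in\CalS(\mathbb{X})$ produces a partial Fourier transform of $\varphi$ at a frequency of order $\e^{\gamma-1}$, and $\langle U^{\e,tr},\varphi\rangle$ tends to $0$ in $L^2$ rather than to $\langle U^{tr},\varphi\rangle$. For $\gamma=1$ you would instead obtain the $\widetilde\by$-dependent profile $U^{tr}(s-\bk_0\cdot\widetilde\by,\by)$.

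The fix is to observe at time $t^\e_{obs,tr}(\by)+\e^\gamma\bk_0\cdot\widetilde\by+\e s$, exactly as in \eqref{eq:Uref2}. The paper's definition of $U^{\e,tr}$ drops this term and also writes $z=0$ for $z=z_{tr}$; these are evidently typos that you inherited. With that correction your starting representation is right and the rest of your argument goes through.
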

The proof of this result is omitted, as it follows the same lines as that of Proposition \ref{prop:homog_ref}. Note that the profile $U^{tr}$ is similar to the one obtained for a flat interface \eqref{def:Utr}, but with an  initial condition 
that is smoothed in time via the kernel $\phi_V$
reflecting the homogenization effects induced by the highly oscillatory rough interface.

As for the reflected wave, the effective scattering operator is given by
\[
\E[K(\bcals_0-\bcals_1,\omega,\bq',\bq)] = \frac{\omega^2}{(2\pi)^2} \int e^{i\omega(\bq'-\bq)\cdot\by'}\E\Big[e^{i\omega(\bcals_0-\bcals_1) \CalV(\by')}\Big]d\by' = \delta(\bq'-\bq)\phi_V((\bcals_0-\bcals_1)\omega)
\] 
as a result of the homogenization process. Thus, $\phi_V$ accounts for the homogenized diffraction effects in transmission by acting as a low-pass filter in frequency, while the Dirac mass indicates that the incident direction $\bq'$ is not affected. As for the reflected wave, the specular transmitted component observed at $\bx_{obs,tr}$ still has a beam width of order $\se$. Therefore, there is no significant perturbation of the specular transmitted cone or of the angle $\theta^0_{tr}$ given by \eqref{def:angle_tr}.

\section{The case $\gamma>1/2$: incoherent wave fluctuations for the nonspecular reflected contributions}

As a consequence of the homogenization phenomenon described in the previous section, an effective frequency-dependent attenuation is observed on the specular components through the factor $\phi_V$. The purpose of this section is to describe how the scattering phenomena generate incoherent wave fluctuations that account for the missing energy not carried by the specular components. 

In this section, we analyze the following reflected wavefield, referred to as the speckle profile:
\begin{equation}\label{def:Se_ref}
S^{\e,ref}(\bar s, \bar \by, \by, \widetilde s, \widetilde \by) := \e^{1-2\gamma} u^{\e,ref}(t^\e_{obs,ref}(\bar s,\bar \by, \by, \widetilde\by) + \e \widetilde s, \bx^\e_{obs, ref}(\bar \by, \by) + \e^\gamma \widetilde\by , z=0),
\end{equation}
where
\begin{equation}\label{def:x_obs_eps}
\bx^\e_{obs, ref}(\bar \by, \by) := \bx_{obs, ref} + \e^{1-\gamma} \bar \by + \se \, \by,
\end{equation}
and
\begin{equation}\label{def:t_obs_eps}
t^\e_{obs,ref}(\bar s, \bar \by, \by,\widetilde\by):= t_{obs,ref} + \e^{1-\gamma}\bk_0\cdot \bar \by + \se\, \bk_0\cdot \by + \e^\gamma \bk_0\cdot\widetilde \by + \e^{2(1-\gamma)} \bar s.
\end{equation}
The proof of Proposition \ref{prop:exp_corr} below is carried out in a way that justifies the scalings introduced in \eqref{def:Se_ref}–\eqref{def:t_obs_eps}. Tracking the relevant scales throughout this derivation allows us to provide a physical interpretation of these scalings and we do so below. 

Recalling \eqref{eq:scaling_param}, the wavefield $u^{\e,ref}$ is observed around the reference observation location $\bx_{obs,ref}$ (corresponding to the specular reflected component), but now over a neighborhood of size the relative roughness parameter
\begin{equation}\label{eq:rel_eps}
\e^{1-\gamma} = \frac{\lambda}{\ell_c} \sim \frac{r_0^2}{\ell_c L} 
\end{equation}
This neighborhood is parameterized by the variable $\bar{\by}$ and follows from the paraxial scaling relation $r_0^2 \sim \lambda L$. Recall that $r_0$ is the beam radius, $\ell_c$ the correlation length, and $\lambda$ the central wavelength. The radius of the region over which the reflected speckle can be observed is therefore of order $\lambda/\ell_c$. This region corresponds to the speckle cone and thus contains the specular cone as a smaller neighborhood around the reference observation point. Indeed, this neighborhood is larger than the beam width (of order $\se$) since $\gamma>1/2$, and can even be of order one when $\gamma=1$. We will see below that this radius is additionally proportional to the distance from the source plane to the interface, which is of order one in our scaling.

The variables $\by$ and $\widetilde \by$ account for variations at the scale of the beam width and at the scale of the correlation length respectively (see Figure \ref{fig:ref_wave_speckle} for an illustration).
 \begin{figure}
\begin{center}
\includegraphics*[scale=0.25]{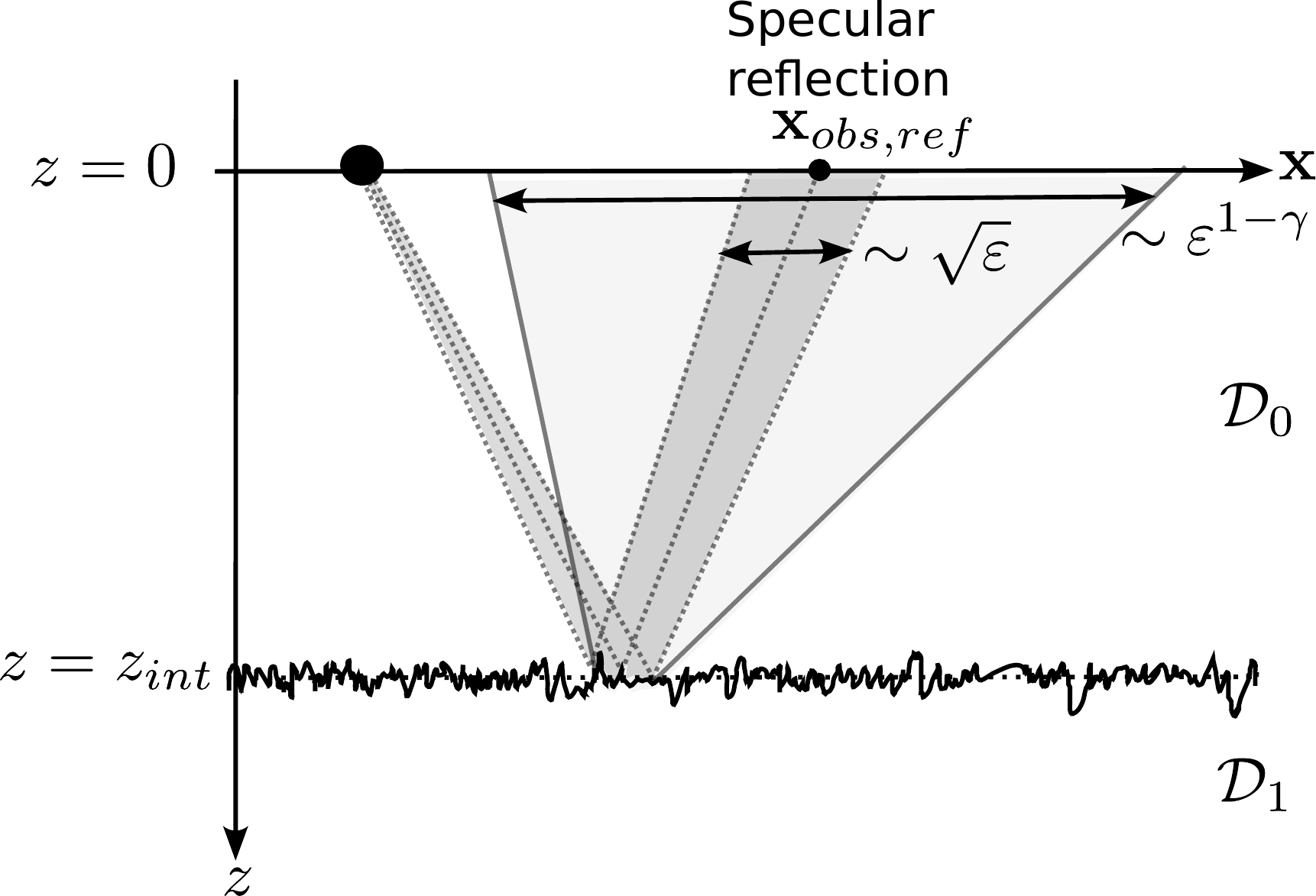} \hspace{1.5cm}
\includegraphics*[scale=0.25]{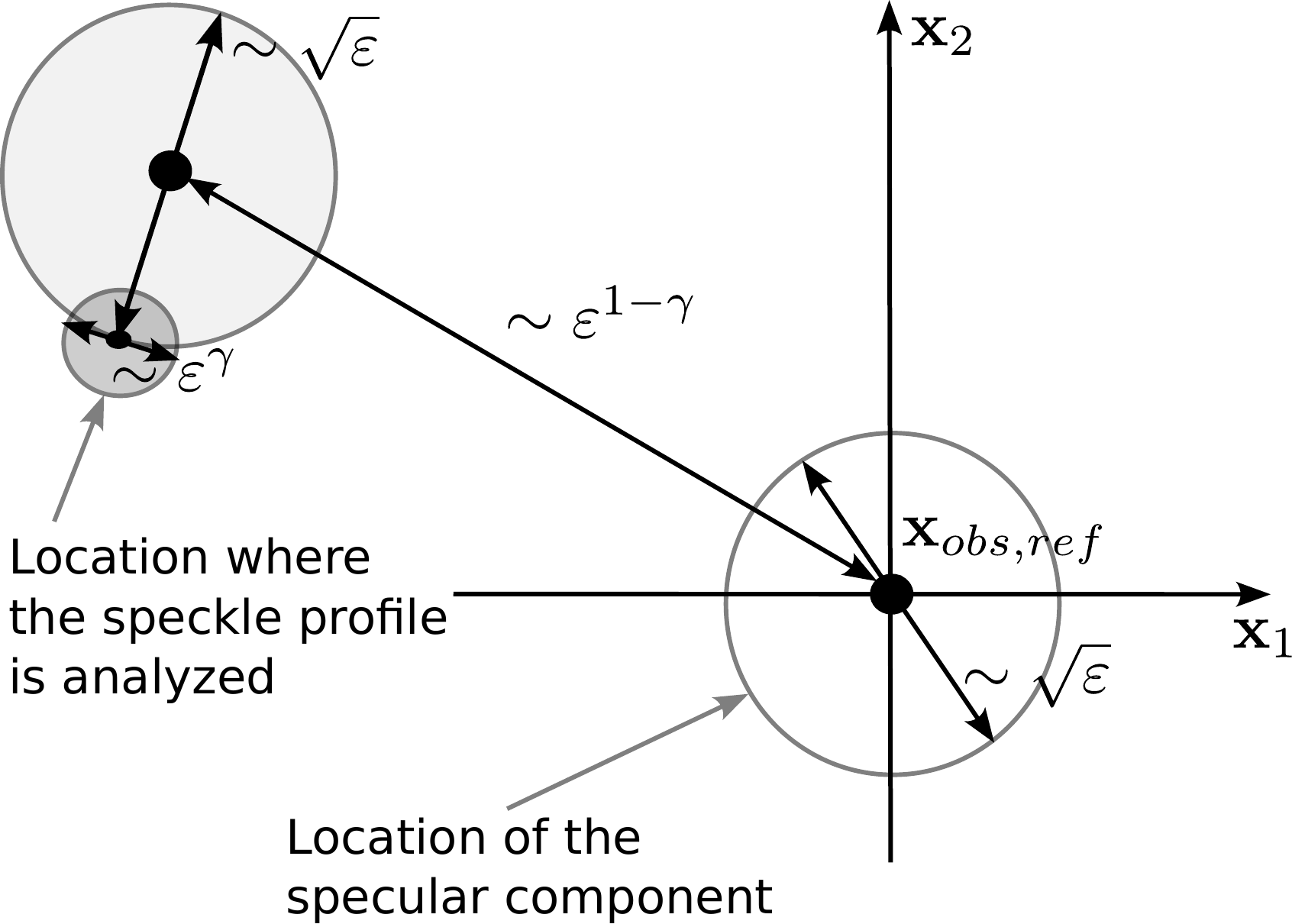} 
\end{center}
\caption{\label{fig:ref_wave_speckle} The left panel illustrates the region where the speckle profile can be observed (light gray) compared to the specular reflection component (gray). The right panel illustrates the spatial windows over which the speckle profile is analyzed.
}
\end{figure}
The relevant time scale is centered around the travel time $t_{obs,ref}$, with three corrections depending on the observation offsets $(\bar \by, \by, \widetilde \by)$ (see Figure \ref{fig:scaling2}), and lies within a time window larger than the pulse duration (of order $\e$).
\begin{figure}
\begin{center}
\includegraphics*[scale=0.18]{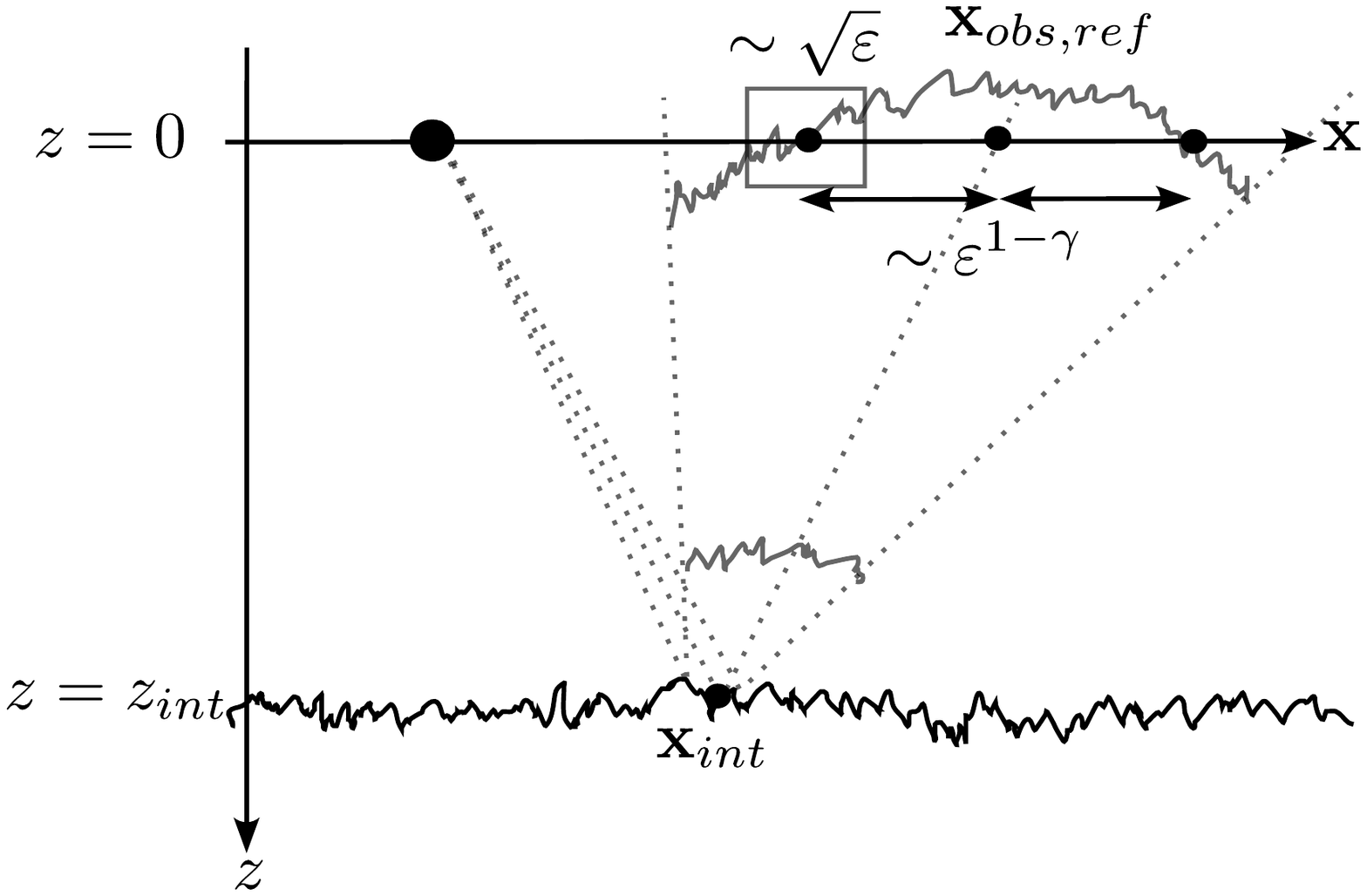} 
\includegraphics*[scale=0.18]{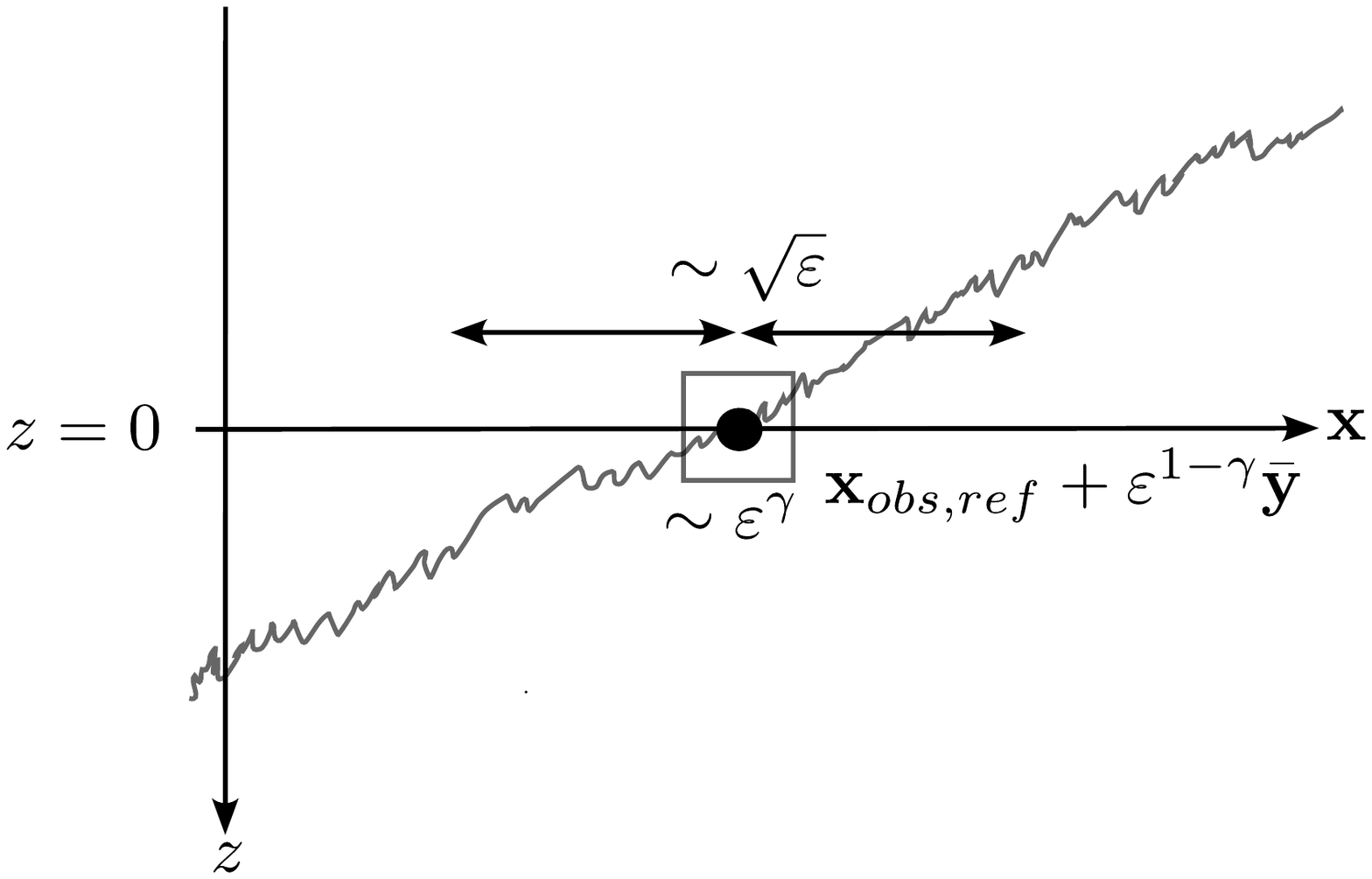} 
\includegraphics*[scale=0.18]{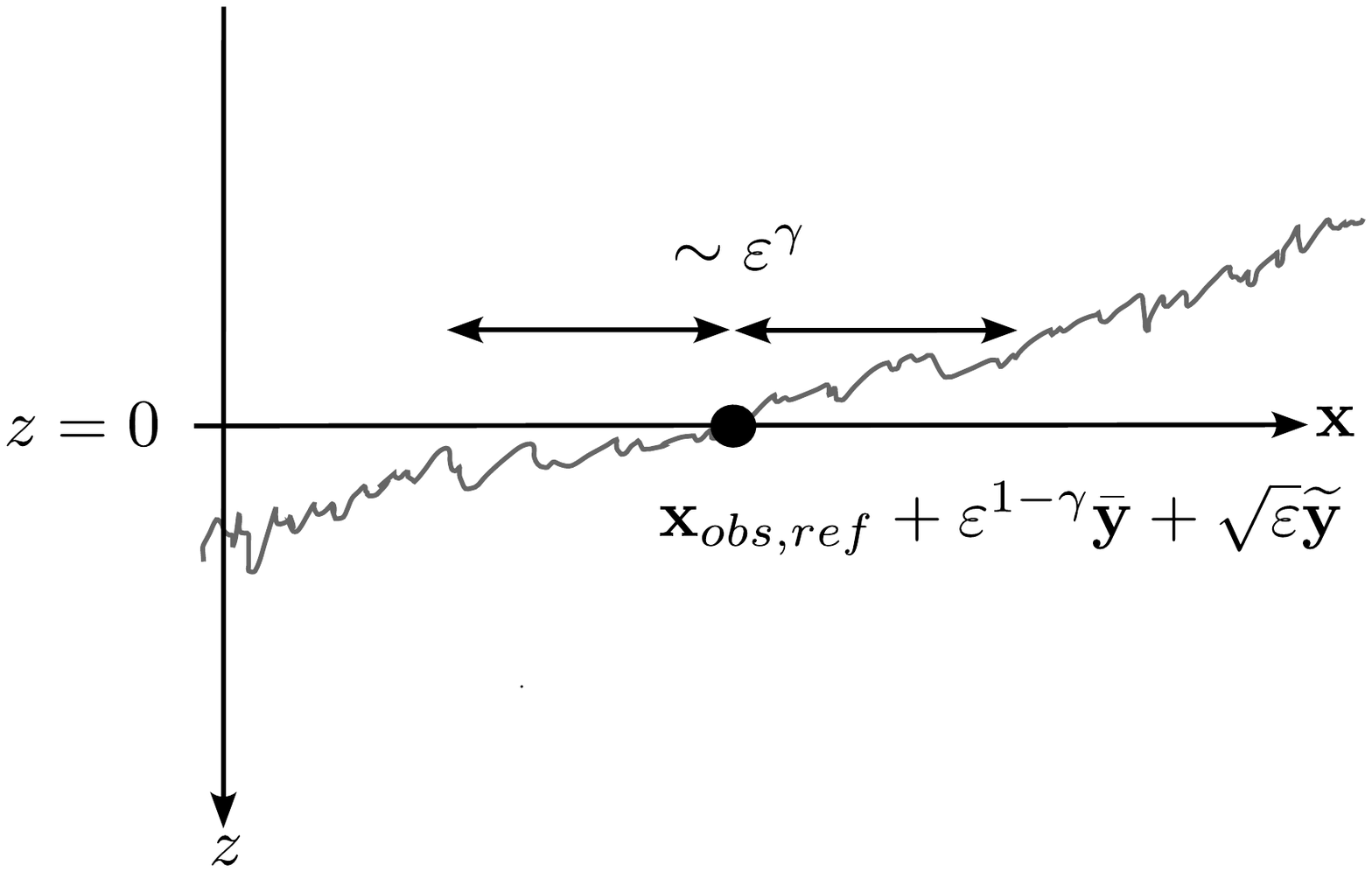} 
\end{center}
\caption{\label{fig:scaling2}
 The figure illustrates the terms in \eqref{def:t_obs_eps}. 
In the left panel we illustrate the speckle wavefront, which spreads at the scale of the relative roughness parameter $\e^{1-\gamma}$. 
The middle panel shows a zoom-in to the scale of the beam width $\se$, which corresponds to the scale of the specular cone, 
while the right panel provides a further zoom-in to the correlation length $\e^\gamma$, which is the scale of the variations 
of the speckle supported within the speckle cone. 
All three scales are larger than the pulse width $\e$, which is represented in the figure by the magnitude of the random
 fluctuations in the wavefront line. 
All these scalings must be accounted for in the observation time in order to accurately represent the incoherent wave fluctuations 
at different locations within the speckle cone.   
}
\end{figure} 
This time window, described by the variable $\bar s$, is of order
\[
\e^{2(1-\gamma)} = \frac{r_0^2 c_0 T_0}{\ell_c^2 L} \sim  \Big(\frac{\lambda}{\ell_c}\Big)^2,
\]
that is, the square of the relative roughness parameter, and corresponds to the order of magnitude of the speckle duration over the observation plane. Here, $T_0$ denotes the pulse duration. The variable $\widetilde s$ in \eqref{def:Se_ref} accounts for fluctuations at the scale of the pulse width. 

To summarize, for the specular component the temporal duration is $\CalO(\e)$ and the spatial width of the specular cone
is $\CalO(\se)$; these are also the coherence scales of the specular component. In contrast, the speckle component has a temporal duration of $\CalO(\e^{2(1-\gamma)})$ with a coherence time $\CalO(\e)$. 
Moreover, the width of the speckle front is $\CalO(\e^{1-\gamma})$ with a spatial coherence length $\CalO(\e^\gamma)$. Thus, the temporal support of the wavefield is enhanced by a factor $\e^{1-2\gamma}$, and the spatial support in each of the two lateral
directions is enhanced by a factor $\e^{(1-2\gamma)/2}$. From energy conservation considerations this corresponds to a wavefield amplitude scaling of
\[
\e^{-(1-2\gamma)}.
\]
This motivates the large amplitude prefactor $\e^{1-2\gamma}\gg 1$ in \eqref{def:Se_ref}. 
Note  that increasing $\gamma$,  so that the interface is rougher,  enlarges the spatial region around $\bx_{obs,ref}$ over which the speckle pattern can be observed, while reducing its magnitude. 

We next analyze the asymptotic behavior of the second-order statistics and the intensity of the speckle field $S^{\e,ref}$, as well as the self-averaging properties exhibited by these quantities. The expectation of $S^{\e,ref}$, which converges to zero, will be discussed in Section \ref{sec:stat_speckle}, where the asymptotic behavior of the speckle field itself is characterized as a mean-zero Gaussian random field.

\subsection{Correlation and intensity functions}

The above $\e$ scalings of the speckle profile $S^{\e,ref}$ can be understood  by a careful analysis of its correlation function. This correlation function is defined as the product of $S^{\e,ref}$ evaluated at two nearby points:
\begin{equation}\label{def:corr_func}
\CalC^{\e, ref}(\bar s,\bar \by, \by,\widetilde s_1,\widetilde \by_1, \widetilde s_2,\widetilde \by_2) : = S^{\e, ref}(\bar s,\bar \by, \by,\widetilde s_1,\widetilde \by_1)S^{\e, ref}(\bar s,\bar \by, \by,\widetilde s_2,\widetilde \by_2),
\end{equation}
with associated intensity
\[
\CalI^{\e, ref}(\bar s,\bar \by, \by,\widetilde s,\widetilde \by) : = \big\vert S^{\e, ref}(\bar s,\bar \by, \by,\widetilde s,\widetilde \by)\big\vert^2 = \CalC^{\e, ref}(\bar s,\bar \by, \by,\widetilde s,\widetilde \by, \widetilde s,\widetilde \by).
\]
The following result characterizes the asymptotic mean correlation function. Note that we use below the notation
\[
\CalX=\R\times\R^2\qquad\text{and}\qquad \mathbb{X} = \R\times\R^2\times \R^2.
\]

\begin{proposition}\label{prop:exp_corr}
For $\gamma \in (1/2,1]$ and $\bar \by \neq 0$, we have 
\begin{equation*}
\lim_{\e\to 0}\E\big[\CalC^{\e, ref}(\bar s,\bar \by, \by,\widetilde s_1,\widetilde \by_1, \widetilde s_2,\widetilde \by_2)\big] = 
\bC^{ref}(\bar s,\bar \by,\widetilde s_1-\widetilde s_2,\widetilde \by_1-\widetilde \by_2),
\end{equation*}
in $\CalS'(\mathbb{X}\times \CalX \times \CalX)$, where
\begin{align}
\label{def:Cor_func}
\bC^{ref}(\bar s,\bar \by, \widetilde s,\widetilde \by) := \frac{\CalR^2}{4(2\pi)^3}\iint & e^{-i\omega (\widetilde s - \bp\cdot\widetilde \by)}   \CalA(2\bcals_0,\omega,\bp) |\hat \Psi|^2_2(\omega) \\
& \times \delta(\bar s - z_{int} c_0 \bp^T A_0 \bp/2) \delta(\bar \by-z_{int}c_0 A_0 \bp) \omega^2 d\omega d\bp,\nonumber
\end{align}
with
\begin{equation}
\label{def:CalA} \CalA(v,\omega,d\bp) := \int \E\big[ e^{i v ( V(\by') - V(0)) } \big] e^{-i\omega \bp \cdot\by'} d\by',
\end{equation}
and
\begin{equation}
\label{def:hPsi2}
|\hat \Psi|^2_2(\omega) :=  \frac{\omega^2}{(2\pi)^2}\int | \hat \Psi(\omega , \bq)|^2  d\bq = \int \Big|\int e^{i\omega s} \Psi(s,\bx)ds \Big|^2 d\bx.
\end{equation}
\end{proposition}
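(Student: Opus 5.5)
We will start from the leading-order representation \eqref{eq:ue_ref} of Lemma \ref{lem:tech} and evaluate it at the observation point and time \eqref{def:x_obs_eps}--\eqref{def:t_obs_eps}. In the scattering operator \eqref{def:scat_op} we change variables $\bx'\to\bx_{int}+\e^\gamma\bx'$. The fast phase $e^{-i\omega(t-2\bcals_0 z_{int}-\bx\cdot\bk_0)/\e}$ combined with the choice of $t^\e_{obs,ref}$ cancels all terms of order larger than $\e$, leaving $e^{-i\omega\widetilde s}$. We then write the product $S^{\e,ref}(\cdot,\widetilde s_1,\widetilde\by_1)S^{\e,ref}(\cdot,\widetilde s_2,\widetilde\by_2)$ as a double integral over $(\omega_j,\bq_j,\bq'_j,\bx'_j)$, $j=1,2$. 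The second factor enters conjugated, so that the random part is $e^{2i\bcals_0(\omega_1V(\bx'_1)-\omega_2V(\bx'_2))}$ after a shift by $\bx_{int}/\e^\gamma$, which stationarity removes. Testing against $\varphi\in\CalS(\mathbb{X}\times\CalX\times\CalX)$, the $\bar s,\bar\by,\by$ integrations produce Fourier transforms of $\varphi$ evaluated at rescaled frequencies. In this way the dependence on $\e$ is moved into these arguments and into the phases $\hat\CalU_0(\omega,\bq,z_{int})$.

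The central step is to decompose the mean of the random factor. We write
\[
\E\big[e^{2i\bcals_0(\omega_1V(\bx'_1)-\omega_2V(\bx'_2))}\big]=\phi_V(2\bcals_0\omega_1)\overline{\phi_V(2\bcals_0\omega_2)}+R(\omega_1,\omega_2,\bx'_1-\bx'_2),
\]
where $R$ depends only on the difference by stationarity and tends to $0$ as $|\bx'_1-\bx'_2|\to\infty$, by Lemma \ref{lem:mixing} and the $\rho$-mixing bound \eqref{prop:equiv_mix}.

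The product term factorizes exactly into $\E[S^{\e,ref}]\,\overline{\E[S^{\e,ref}]}$. Each factor is, up to the prefactor $\e^{1-2\gamma}$, the homogenized specular profile of Proposition \ref{prop:homog_ref} evaluated at lateral offset $\e^{1-\gamma}\bar\by/\se\to\infty$ from the specular cone. For $\bar\by\neq0$ this contribution vanishes against test functions, since the specular profile decays rapidly outside a $\se$-neighborhood. This is exactly why the statement requires $\bar\by\ne0$.

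For the term $R$, we pass to the center/difference variables $\bx'=(\bx'_1+\bx'_2)/2$ and $\bw=\bx'_1-\bx'_2$. The integral over $\bx'$ at scale $\e^{1/2-\gamma}$ gives, after rescaling, a delta function forcing $\omega_1\bq_1-\omega_2\bq_2$ to be close to $\omega_1\bq'_1-\omega_2\bq'_2$. Setting $\bq_j=\bq'_j+\e^{1/2-\gamma}\bp$, the large parameter $\e^{1/2-\gamma}$ is absorbed by the $\bar\by$ scale $\e^{1-\gamma}$. The $\bw$ integral of $R$ against $e^{-i\omega\bp\cdot\bw}$ then yields $\CalA(2\bcals_0,\omega,\bp)$ once we also force $\omega_1=\omega_2$. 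The deterministic part of $\CalA$ (the product of characteristic functions) is again the specular Dirac at $\bp=0$, which is excluded by $\bar\by\ne0$.

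Frequency matching $\omega_1\simeq\omega_2$ and $\bq'_1\simeq\bq'_2$ will come from the fast time scale: the $\bar s$ window of size $\e^{2(1-\gamma)}$ and the pulse scale $\e$, together with the prefactor $\e^{2(1-2\gamma)}$, produce after the Jacobians exactly a delta function in $\omega_1-\omega_2$ and leave $|\hat\Psi|_2^2(\omega)$ after integrating $\bq'$. Expanding $\hat\CalU_0(\omega,\bq'+\e^{1/2-\gamma}\bp,z_{int})$ gives the phase $\e^{1-2\gamma}\omega z_{int}c_0(\bp^TA_0\bp/2+\bq'^TA_0\bp)$. Matched with the observation phases in $\bar s$ and $\bar\by$, this produces the two Dirac masses $\delta(\bar s-z_{int}c_0\bp^TA_0\bp/2)$ and $\delta(\bar\by-z_{int}c_0A_0\bp)$. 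The cross terms in $\bq'$ are exactly compensated by the $\se\,\by$ and $\se\,\bk_0\cdot\by$ corrections, which explains why the limit is independent of $\by$.

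The main obstacle is justifying these limits rigorously: the interchange of limit and integrals, and showing that $R$ has enough integrability in $\bw$ for dominated convergence, when only qualitative mixing is assumed. We would first work in the weak sense against Schwartz test functions and bound the integrand using the rapid decay of $\hat\Psi$ and of $\varphi$. Next, we would approximate $R$ by truncating $\bw$ to a large ball and control the tail through the oscillation produced by the $\bar\by$ Fourier variable. Finally, we would invoke Bochner's theorem to interpret $\CalA$ as a finite measure, so that the limit is understood as a measure in $\bp$.
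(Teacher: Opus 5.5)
Your outline is correct and is essentially the paper's proof. It uses centre/difference variables for the two interface points with stationarity, the Dirac mass from the centre integral, and the rescaling $\bq_j=\bq'_j+\e^{1/2-\gamma}\bp$ so that the difference integral yields $\CalA(2\bcals_0,\omega,\bp)$. It then uses the pairings $\omega_{1,2}=\omega\pm\e^{2\gamma-1}h/2$ and $\bq'_{1,2}=\bq\pm\e^{\gamma-1/2}\br/2$, whose conjugate variables meet the residual phases $e^{-ih\bar s}$ and $e^{i\omega\br\cdot\bar\by}$ to produce the two Dirac masses. Three bookkeeping slips need fixing: each factor keeps the fast phase $e^{-i\omega\bar s/\e^{2\gamma-1}}$ rather than reducing to $e^{-i\omega\widetilde s}$; the cross term in $\hat\CalU_0$ is of order $\e^{1/2-\gamma}$, not $\e^{1-2\gamma}$; and it is the $\bar\by$ window, not the time window, that pairs $\bq'_1$ with $\bq'_2$.

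The only real difference is your split of the two-point characteristic function into $\phi_V\overline{\phi_V}$ plus a decaying remainder. The paper keeps the full expectation, so its $\CalA$ carries an atom at $\bp=0$, which is precisely your specular product term, and it discards that atom only through the hypothesis $\bar\by\neq0$. Your version makes the role of that hypothesis explicit and isolates the integrability issue for the remainder, which the paper's formal leading-order computation also leaves unaddressed.
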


In \eqref{def:Cor_func}, the spatial window in $\bar \by$ is larger than the beam width, and the contribution of the source is integrated over all supported lateral directions (see \eqref{def:hPsi2}). The term $\CalA(\cdot,\omega,\bp)$ corresponds to the distribution of scattered directions $\bp$ produced by the random interface at frequency $\omega$, and is related to variations of the interface at the scale of the correlation length through the integral with respect to $\by'$ in \eqref{def:CalA}. According to the Bochner--Khinchin theorem, $\CalA$ defines a finite positive measure on $\R^2$ that we call \emph{scattering distribution} in this paper. Therefore, properly normalized $\CalA$ can be recast as a probability measure. For notational simplicity, even though $\CalA$ is a measure in $\bp$, we still use the notation $\CalA(v,\omega,\bp)$ throughout the remaining of this paper. Moreover, this distribution couples the variables $\widetilde s$ and $\widetilde \by$, which represent small-scale fluctuations generated by the random interface, evolving respectively at the scale of the pulse width (wavelength scale) and the correlation length of the interface. The proof of Proposition \ref{prop:exp_corr} is postponed to Section \ref{sec:proof_prop_exp_cor}.

We now provide two complementary interpretations of \eqref{def:Cor_func}. First, for a fixed offset position $\bar \by$, using the Dirac distributions in $\bp$, we obtain
\begin{equation}\label{def:Cor_func_bis}
\bC^{ref}(\bar s,\bar \by, \widetilde s,\widetilde \by) = \frac{\CalR^2c_0^2\bcals^4_0}{4(2\pi)^3 z^2_{int}}\delta\Big(\bar s - \frac{\bar \by^T  A_0^{-1}\bar \by}{2z_{int}c_0}\Big) \int  e^{-i\omega (\widetilde s - \bar \by ^T A_0^{-1} \widetilde \by/(z_{int}c_0))}  |\hat \Psi|^2_2(\omega)  \CalA\Big(2\bcals_0,\omega,\frac{A_0^{-1}\bar \by}{z_{int}c_0}\Big)  \omega^2 d\omega,
\end{equation}
with
\begin{equation}\label{def:A0_inv}
A_0^{-1} = c_0 \bcals_0(\mathbf{I}_2 - c_0^2 \bk_0 \otimes \bk_0) ,
\end{equation}
which is positive definite as we assume $|\bk_0|<c_0^{-1}$. As a result, for a given offset position  $\bar \by$, the relative  time delay  parameter at which the speckle is observed is explicitly given by
\[
\bar s = \frac{\bar \by^T A_0^{-1}\bar \by}{2z_{int}c_0} \ge 0.
\]
Conversely, for a given time $\bar s$, the incoherent wave fluctuations are supported over an ellipse defined by $\bar s = \bar \by^T  A_0^{-1}\bar \by/(2z_{int}c_0)$, see Figure \ref{fig:speckle} for an illustration.  
\begin{figure}
\begin{center}
\includegraphics*[scale=0.25]{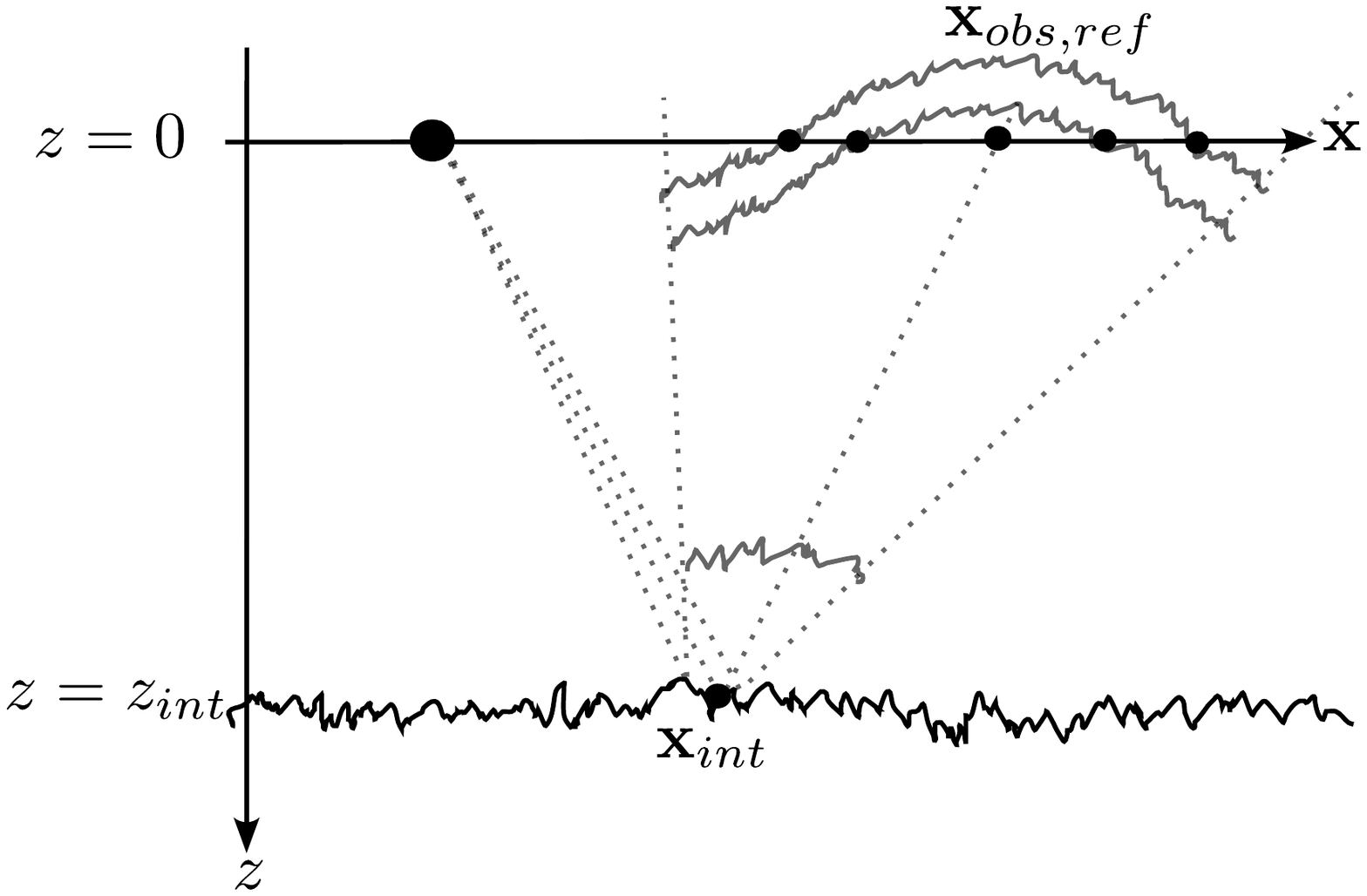} 
\includegraphics*[scale=0.35]{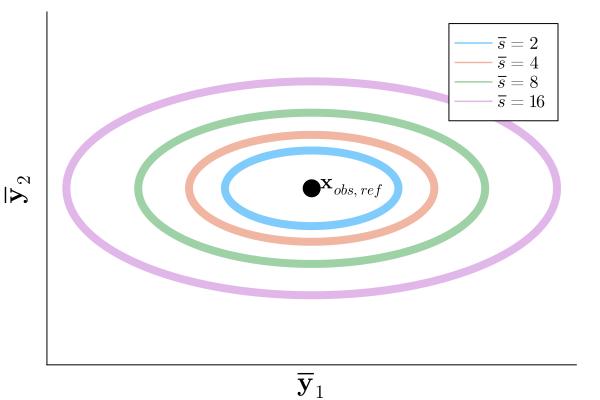} 
\end{center}
\caption{\label{fig:speckle} The left panel illustrates the reflection of a pulse by a rough interface. When the speckles cross a horizontal plane, they form ellipses that evolve in time, illustrated in the right panel for different values of $\bar s$, and centered around $\bx_{obs,ref}$. For the right panel we take $\bk_0=(0.9/c_0,0)^T$.
}
\end{figure} 

Since $\bC^{ref}$ does not depend on the variable $\by$ (associated with variations at the beam-width scale), the statistics of these ellipses are identical over regions of size comparable to the beam width for a given $\bar s$. However, as we will see in Corollary \ref{cor:vector_speckle}, when the speckle is observed at two distinct points separated at the beam-width scale (that is we consider two distinct points $\by_1$ and $\by_2$ instead of a single $\by$), their contributions are statistically independent. Moreover, the correlation function decays on a temporal scale corresponding to the pulse duration and on a spatial scale corresponding to the correlation length of the interface fluctuations, through the variables $\widetilde s$ and $\widetilde \by$, respectively. Thus, $\bC^{ref}(\bar s,\bar \by,\widetilde s_1-\widetilde s_2,\widetilde \by_1-\widetilde \by_2)$ represents the mean correlation function at two nearby points and exhibits asymptotic stationarity with respect to the small-scale fluctuation variables $\widetilde s$ and $\widetilde \by$. These two variables are coupled in \eqref{def:Cor_func_bis} through the complex exponential in the Fourier representation. The decay of correlations with respect to these variables is ensured by the frequency bandwidth of the source profile together with the integration over $\omega$, which reflects the frequency coupling induced by scattering at the interface.

Similarly, the mean intensity carried by the speckle profile is given by
\[
\lim_{\e\to 0}\E[\CalI^{\e, ref}(\bar s,\bar \by,\by,\widetilde s,\widetilde \by)] = \bC^{ref}(\bar s,\bar \by,0,0),
\] 
which does not depend on $\by$, nor on the small-scale fluctuation variables $\widetilde s$ and $\widetilde \by$. The intensity is therefore uniform over these scales and evolves only with respect to $\bar s$ and $\bar \by$.

Finally, we note that the specular component (corresponding to $\bar \by=0$) does not by itself contribute to $\bC^{ref}$. Indeed, the limit $\bC^{ref}$ is obtained in $\CalS'(\mathbb{X}\times \CalX \times \CalX)$, which involves integration over $\bar \by$ on regions much larger than the size of the specular cone. In this framework, the explicit contribution of the specular component can be shown to be negligible. However, in the direction of the specular cone  one will observe the specular  reflection as discussed in Sections 
\ref{sec:specint} and \ref{sec:randint1} and this component is indeed large  compared to the speckle component discussed here.

\paragraph{Generalized Snell's law of reflection.} 

The correlation function \eqref{def:Cor_func} allows for the derivation of a generalized Snell's law of reflection. From the Dirac mass in \eqref{def:Cor_func}, the resulting speckle can be observed, for a given direction $\bp$, at relative position
\ba\label{eq:off}
\bar \by = \by^{ref}_\bp := z_{int} c_0 A_0 \bp,
\ea
and at the corresponding time lag 
\[
\bar s= s^{ref}_\bp := \bp \cdot \by^{ref}_\bp /2 = z_{int} c_0 \bp^T A_0 \bp/2 \geq 0.
\]
These two quantities depend on the distance $z_{int}$ from the source to the interface. This distance therefore influences the size of the reflected speckle cone as well as the time window over which its evolution can be observed.

After some algebra, one can generalize the standard reflection relation between the incident and reflected angles \eqref{def:angle_inc_ref} as follows. For a given angular frequency $\omega$ and a nonzero slowness vector $\bk_0$ (so that $\theta_{inc}>0$), we have in view of the parameterization (\ref{def:x_obs_eps}) and  the expression  (\ref{eq:off})
 \begin{align}
\frac{\tan(\theta^\e_{ref}(\bp))}{\tan(\theta_{inc})} & = \frac{|\bx_{int} + \e^{1-\gamma}\by_\bp^{ref}|}{|\bx_{int}|} \nonumber \\
&  = \sqrt{\Big(1+ \xi_\e \frac{\bp\cdot \bk_0}{\cos^2(\theta_{inc})} \Big)^2+ \xi^2_\e (\bp\cdot \bk_0^{\perp})^2},
\label{eq:inc_ref_angle}
\end{align} 
where
\[
\xi_\e = \e^{1-\gamma}\frac{c_0^2}{\sin^2(\theta_{inc})}, \qquad \bk_0^\perp = (-\bk_{0,2}, \bk_{0,1})^T, 
\]
and for $\bp$ distributed according to the scattering distribution $\CalA(2\bcals_0,\omega,\cdot)$. 

Finally, we obtain
\begin{equation*}
\sin(\theta^\e_{ref}(\bp)) = \sin(\theta_{inc}) + \sin(\theta_{inc})  \left(\sqrt{\frac{\Xi^\e_{ref}(\bp)}{1+\sin^2(\theta_{inc})\Big( \Xi^\e_{ref}(\bp)-1\Big)}} -1\right),
\end{equation*} 
where 
\[
\Xi^\e_{ref}(\bp) = \Big(1+\xi_\e\frac{\bp\cdot\bk_0}{\cos^2(\theta_{inc})}\Big)^2 + \xi^2_\e(\bp\cdot \bk_0^\perp)^2 . 
\]
yielding \eqref{eq:ref_formula_intro} using \eqref{eq:rel_eps} and that $\sin(\theta_{inc})=c_0|\bk_0|$ (recalling \eqref{def:tauj} and  \eqref{def:angle_inc_ref}).
 
We refer to Figure \ref{fig:ref_angle} for an illustration of $\theta^\e_{ref}(\bp)$ when $\gamma=1$, i.e., in the rough-interface regime $\lambda \sim \ell_c$ for which the opening angle of the speckle cone is $\CalO(1)$.
\begin{figure}
\begin{center}
\includegraphics*[scale=0.35]{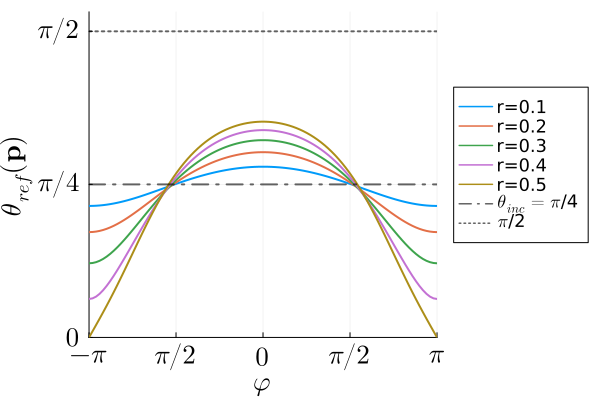} \hspace{1.5cm}
\includegraphics*[scale=0.35]{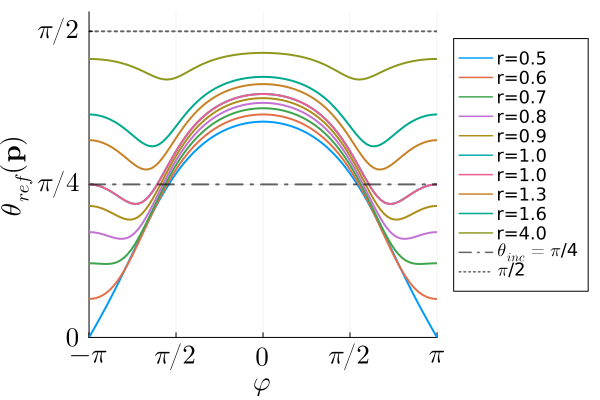} 
\end{center}
\caption{\label{fig:ref_angle} Illustration of the incident angle $\theta_{inc}$ (dash-dot line), describing the cones defined by the incoming beam and the reflected specular component, and the reflected azimuthal angle $\theta_{ref}(\bp)\in (0,\pi/2)$, for $\gamma=1$ and $\bp = \beta_1 \bk_0 + \beta_2 \bk_0^\perp$, with $(\beta_1,\beta_2)= r \, (\cos(\varphi),\sin(\varphi))$.
}
\end{figure}

The case $\gamma\in(1/2,1)$ leads to relatively small opening angles for the speckle cone and we have 
the expansion 
\begin{equation*}
\sin(\theta^\e_{ref}(\bp)) = \sin(\theta_{inc}) + \e^{1-\gamma}\frac{c_0 \bp\cdot \widehat \bk_0}{\cos^2(\theta_{inc})} + \CalO(\e^{2(1-\gamma)} \vert \bp \vert ^2),
\end{equation*} 
with $\widehat \bk_0 = \bk_0 /|\bk_0|$, which corresponds to \eqref{eq:correction} using again \eqref{eq:rel_eps}. Note that we also have by truncated expansion   
\begin{equation*}
\theta^\e_{ref}(\bp) = \theta_{inc} + \e^{1-\gamma}\frac{ c_0 \bp\cdot \widehat \bk_0}{\cos^3(\theta_{inc})} + \CalO(\e^{2(1-\gamma)} \vert \bp \vert ^2),
\end{equation*} 
corresponding to \eqref{eq:correction_angle_ref}. 

We remark that the lateral support  of the reflected speckle  component then is of order $\CalO(\e^{1-\gamma} L)$, where $L$ denotes the distance from the source to the interface.  This can be contrasted with the anomalous beam spreading in the case of the It\^o-Schr\"odinger equation with propagation through a random medium with bulk scattering for which the anomalous beam spreading corresponds to the width 
$\CalO(\e^{1/2} L^{3/2})$ \cite{51}.
 
For $\bk_0=0$ (so that $\theta_{inc}=0$), the reflection formula reduces to $\tan(\theta^\e_{ref}(\bp)) = \e^{1-\gamma} c_0 |\bp|$, that is,
\ba\label{eq:tan1}
\theta^\e_{ref}(\bp) = \arctan(\e^{1-\gamma} c_0 |\bp|).
\ea

The following result describes the statistical stabilization of $\CalC^{\e,ref}$ as $\e\to 0$. In other words, when correlating the speckle profile at two nearby points with respect to the variables $\widetilde s$ and $\widetilde \by$, its limit in the high-frequency regime $\e \to 0$ becomes deterministic and coincides with the mean correlation function given in Proposition~\ref{prop:exp_corr}.
\begin{proposition}\label{prop:cv_corr}
We have
\[
\lim_{\e\to 0}\CalC^{\e,ref}(\bar s,\bar \by, \by,\widetilde s_1,\widetilde \by_1, \widetilde s_2,\widetilde \by_2) = \bC^{ref}(\bar s,\bar \by,\widetilde s_1-\widetilde s_2,\widetilde \by_1-\widetilde \by_2)
\]
in probability in $\CalS'(\mathbb{X}\times\CalX\times\CalX)$, where the deterministic limit $\bC^{ref}$ is defined by \eqref{def:Cor_func}.
\end{proposition}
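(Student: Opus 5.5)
The plan mirrors the second step of the proof of Proposition \ref{prop:homog_ref}: a second-moment argument in $\CalS'$. For a test function $\varphi\in\CalS(\mathbb{X}\times\CalX\times\CalX)$, write $Y^\e_\varphi := \langle \CalC^{\e,ref},\varphi\rangle$. Proposition \ref{prop:exp_corr} already gives $\E[Y^\e_\varphi]\to\langle \bC^{ref},\varphi\rangle$. It therefore suffices to prove
\[
\lim_{\e\to0}\E\big[|Y^\e_\varphi|^2\big] = |\langle \bC^{ref},\varphi\rangle|^2 .
\]
The Markov (Chebyshev) inequality then gives convergence in probability of $Y^\e_\varphi$ for every $\varphi$. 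Convergence in probability in $\CalS'$ follows as in the proof of Proposition \ref{prop:random_travel_time}: tightness via \cite{mitoma}, obtained from a uniform bound on $\E[|Y^\e_\varphi|]$ in terms of a Schwartz seminorm of $\varphi$, combined with the deterministic limit of all finite-dimensional marginals.

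To compute $\E[|Y^\e_\varphi|^2]$, I would insert the leading-order representation \eqref{eq:ue_ref} four times: two copies of $S^{\e,ref}$ appear in $\CalC^{\e,ref}$, and their complex conjugates appear in $\overline{\CalC^{\e,ref}}$. I would make the same changes of variables as in the proof of Proposition \ref{prop:exp_corr}, namely $\bx'_j\to \bx_{int}+\se\,\by'_j$ at the beam scale and then the split of $\by'_j$ into a beam-scale center plus a correlation-scale offset $\e^{\gamma-1/2}\widetilde\by'_j$ within each interacting pair. The random part is then the expectation of a product of four phases $e^{\pm 2i\omega_j\bcals_0 V(\cdot)}$ evaluated at points of the form $\by'_j/\e^{\gamma-1/2}$.

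In the first-moment computation, the nonvanishing contribution came from the two scattering points of a single $\CalC^{\e,ref}$ being at correlation-scale distance, which produced $\CalA$. For the fourth moment, I would decompose the expectation according to the pairings of the four points:
\begin{itemize}
\item pairs $(1,2)$ and $(3,4)$, the diagonal pairing;
\item pairs $(1,3)$ and $(2,4)$;
\item pairs $(1,4)$ and $(2,3)$;
\item configurations in which all four points are mutually far apart or all close together.
\end{itemize}
By Lemma \ref{lem:mixing} (item 2, applied with $g_j(a,b)=e^{i v_j(a-b)}$ on each pair), in the diagonal pairing the expectation factorizes asymptotically into the product of two pair correlations. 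This yields exactly $\E[Y^\e_\varphi]\,\overline{\E[Y^\e_\varphi]}\to|\langle\bC^{ref},\varphi\rangle|^2$.

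I expect the main obstacle to be showing that all other configurations vanish in the limit. The configurations where all four points are far apart carry factors $\phi_V$. Those terms cancel the specular/coherent parts in the same way as in Proposition \ref{prop:exp_corr}, where $\bar\by\neq0$ and integration against $\varphi$ kill the specular contribution. The cross pairings $(1,3)(2,4)$ and $(1,4)(2,3)$ are the delicate terms. They force the two copies of $\CalC^{\e,ref}$, which share the same $(\bar s,\bar\by,\by)$ in the integrand but carry independent test-function arguments $(\bar s,\bar\by,\by)$ and $(\bar s',\bar\by',\by')$, to be correlated at the interface scale. The resulting phases $e^{i\omega\bq\cdot(\by-\by')/\se}$ and the $\e^{1-\gamma}$-scale factors then become rapidly oscillating in $\by-\by'$ (respectively $\bar\by-\bar\by'$). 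As a result, after integration against the smooth $\varphi$, these terms lose a power $\e^{2\gamma-1}$ compared with the diagonal term, which has the $\e^{1-2\gamma}$ prefactors squared.

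To make this rigorous, I would bound the non-diagonal remainder by $\rho$-mixing. Using \eqref{prop:equiv_mix}, I would write the covariance of the groups $\{1,2\}$ and $\{3,4\}$ as at most $\rho(d/\e^{\gamma-1/2})$, where $d$ is the distance between the groups. A dominated-convergence argument over the region where $d$ is bounded below then gives the vanishing. For the small-$d$ region, I would use the boundedness of the phases together with the fact that its Lebesgue measure, after the rescaling, vanishes relative to the normalization. The constraint $\gamma>1/2$ is used precisely here.
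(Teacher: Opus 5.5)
Your proposal follows essentially the same route as the paper: a second-moment/Markov argument in which, after pairing the two interface points of each copy of $\CalC^{\e,ref}$ at the correlation scale, the four-point expectation factorizes by Lemma~\ref{lem:mixing}, and each factor is then treated exactly as in Proposition~\ref{prop:exp_corr}; the factorization holds because the two pair centers separate on the scale $\e^\gamma$, which is where $\gamma>1/2$ enters. Your explicit $\rho$-mixing control of the non-diagonal remainder and your Mitoma tightness step are refinements the paper does not spell out, since it simply asserts the factorization for $\br'_1\neq\br'_2$ and concludes with Markov's inequality.
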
 
The same result holds for the intensity function, which also converges in probability in $\CalS'(\mathbb{X}\times\CalX\times\CalX)$:
\[
\lim_{\e\to 0}\CalI^{\e,ref}(\bar s,\bar \by, \by,\widetilde s,\widetilde \by) = \bC^{ref}(\bar s,\bar \by,0,0).
\]
The proof of Proposition \ref{prop:cv_corr} is provided in Section \ref{sec:proof_prop_cv_corr}.

\subsection{Statistics of the incoherent wave fluctuations}
\label{sec:stat_speckle}

A precise analysis of the statistics of the reflected speckle profile is carried out first in the Fourier domain, at the scales of the pulse width and the correlation length, through
\[
\hat S^{\e,ref}(\bar s,\bar \by, \by, \omega, \bp) := \e^{1-2\gamma} \iint e^{i\omega(\widetilde s - \bp \cdot \widetilde \by)}  u^{\e,ref}\big( t^\e_{obs, ref}(\bar s,\bar \by,\by,\widetilde \by) +\e\widetilde s, \bx^\e_{obs,ref}(\bar \by, \by) + \e^\gamma \widetilde \by, z=0 \big) d\widetilde s d\widetilde \by,
\]
where $\bx^\e_{obs,ref}$ and $t^\e_{obs,ref}$ are given in \eqref{def:x_obs_eps} and \eqref{def:t_obs_eps}, respectively. As described by \eqref{def:Cor_func}, the position and time at which we observe incoherent wave fluctuations, away from the specular component, can be expressed in terms of the scattered direction $\bp$:
\begin{equation}\label{eq:y_s_speckle}
\by^{ref}_{\bp} := z_{int}c_0 A_0 \bp \qquad \text{and}\qquad s^{ref}_{\bp}:=z_{int} c_0 \bp^T A_0 \bp/2.
\end{equation}

Due to the singular nature of the correlation function \eqref{def:Cor_func}, which involves a Dirac mass at $\by^{ref}_{\bp}$ and $s^{ref}_{\bp}$, we do not study $\hat S^{\e,ref}$ directly but rather a tempered (mollified) version:
\begin{equation}\label{def:speckle_func}
\hat \CalS^{\e,ref}_{\by}(\bar s,\bar \by,\omega, \bp) := \hat S^{\e,ref}(\bar s, \bar \by, \by, \omega, \bp)\frac{1}{\e^{3(\gamma-1/2)}}\varphi^{1/2}\Big(2\frac{\bar s- s^{ref}_{\bp}}{\e^{2(\gamma-1/2)}},2\frac{\bar \by-\by^{ref}_{\bp}}{\e^{2(\gamma-1/2)}}\Big),
\end{equation}
where we multiply by the square root of a symmetric mollifier in order to smooth the correlation function around $s^{ref}_{\bp}$ and $\by^{ref}_{\bp}$. This singularity originates from the critical scaling considered in this paper, in which the wavelength and the amplitude of the interface fluctuations are of the same order. This leads to technical difficulties and hence to the need to introduce the smoothed field $\hat \CalS^{\e,ref}$.

This smoothing can be interpreted as a windowing of the signal. Writing
\ba\label{eq:cov}
\bar s = s^{ref}_\bp+\e^{2(\gamma-1/2)} \widetilde s \qquad\text{and}\qquad \bar \by=\by^{ref}_\bp+\e^{2(\gamma-1/2)} \widetilde \by,
\ea
where $\widetilde s$ and $\widetilde \by$ range over the window support
in (\ref{def:speckle_func}),
the corresponding observation position and time are
\[
\bx^\e_{obs,ref}(\bp) := \bx_{obs,ref} + \e^{1-\gamma}\by^{ref}_\bp + \se\by + \e^\gamma \widetilde \by,
\]
and
\[
t^\e_{obs,ref}(\bp) := t_{obs,ref} + \e^{1-\gamma} \bk_0 \cdot \by^{ref}_\bp + \se \bk_0\cdot \by + \e^{\gamma} \bk_0 \cdot \widetilde \by + \e^{2(1-\gamma)}s^{ref}_\bp + \e\widetilde s.
\]
As a result, the windowing is at the scale of the pulse width in time and at the scale of the correlation length in space.

Since the variable $\by$, associated with the beam-width scale $\se$, plays no particular role for the speckle correlation function, we first fix $\by$ for simplicity and discuss its role later.

In this section, we view the speckle pattern as a random field indexed by the scattered direction $\bp$, and we aim to describe its statistical behavior at the spatial scale corresponding to the correlation length of the interface fluctuations and at the temporal scale corresponding to the pulse width. At leading order, $\hat \CalS^{\e,ref}_{\by}$ reads
\begin{align}\label{eq:S_eps}
\hat \CalS^{\e,ref}_{\by}(\bar s,\bar \by, \omega, \bp) \simeq & \frac{\CalR}{2(2\pi)^{2} \e^{7(\gamma-1/2)+1} }\iint e^{-i\omega \bar s /\e^{2\gamma-1}} e^{i\omega \bp \cdot \bar \by /\e^{2\gamma-1}} e^{i\omega \bp \cdot \by /\e^{\gamma-1/2}} e^{i \omega ( \bq' - \bp/\e^{\gamma-1/2}) \cdot (\bx' - \bx_{int})/\se} \\
& \hspace{3cm} \times e^{2i\omega  \bcals_0 V(\bx'/\e^\gamma)} \hat \CalU_0(\omega, \bp/\e^{\gamma-1/2}, z_{int}) \hat \CalU_0(\omega, \bq', z_{int}) \hat \Psi(\omega, \bq')\nonumber\\
& \hspace{3cm} \times \omega^2 d\bx'  d\bq'\varphi^{1/2}\Big(2\frac{\bar s-s_{\bp}^{ref} }{\e^{2(\gamma-1/2)}},2\frac{\bar \by-\by_{\bp}^{ref}}{\e^{2(\gamma-1/2)}}\Big).\nonumber
\end{align}
Note that the expectation of $\hat \CalS^{\e,ref}_{\by}$ is, at leading order,
\begin{align*}
\E\big[\hat \CalS^{\e,ref}_{\by}(\bar s,\bar \by, \omega, \bp)\big] \simeq  \frac{\CalR}{2 \e^{7(\gamma-1/2)} } & e^{-i\omega \bar s  / \e^{2\gamma-1}} e^{i\omega \bp \cdot \bar \by /\e^{2\gamma-1}} e^{i \omega \bp \cdot \by /\e^{\gamma-1/2}}  \E\big[e^{2i\omega  \bcals_0 V(0)}\big] \\
&\times \hat \CalU_0(\omega, \bp/\e^{\gamma-1/2}, 2z_{int}) \hat \Psi(\omega, \bp/\e^{\gamma-1/2})\varphi^{1/2}\Big(2\frac{\bar s-s_{\bp}^{ref}}{\e^{2(\gamma-1/2)}},2\frac{\bar \by-\by_{\bp}^{ref}}{\e^{2(\gamma-1/2)}}\Big),
\end{align*}
so that, for any test function 
$\phi\in\CalS(\CalX\times\CalX, \mathbb{C})$,
we obtain, after the changes of variables \eqref{eq:cov} 
and also $\bp\to \e^{\gamma-1/2}\bq$,
\begin{align}\label{eq:mean_Se_ref}
\E\big[ \big< \hat \CalS^{\e,ref}_{\by}, \phi \big>_{\CalS',\CalS} \big] \simeq \frac{\e^{\gamma-1/2}\CalR}{2} \iiiint & e^{-i\omega s_\bq} e^{-i\omega \widetilde s}e^{i\omega \bq\cdot\by} \E\big[e^{2i\omega  \bcals_0 V(0)}\big] \\
&\times \hat \Psi(\omega, \bq)\varphi^{1/2}(2\widetilde s,2\widetilde\by) \overline{\phi(0,0,\omega,0)} \omega^2 d\widetilde s d\widetilde \by d\omega d\bq \nonumber .
\end{align}
In particular,
\[
\lim_{\e\to 0} \E\big[ \hat \CalS^{\e,ref}_{\by}(\bar s,\bar \by,\omega,\bp) \big] = 0
\]
in $\CalS'(\CalX \times \CalX, \mathbb{C})$, and the asymptotic random field is then mean zero. Here we emphasize the complex-valued nature of the Fourier transform of the speckle by specifying $\mathbb{C}$ in the space of tempered distributions $\CalS'(\CalX \times \CalX, \mathbb{C})$.

\begin{theorem}\label{th:speckle} For any fixed $ \by \in \R^2$, the family $(\hat \CalS^{\e,ref}_{\by})_\e$ converges in law in $\CalS'(\CalX \times\CalX,\mathbb{C})$ to a complex mean-zero  Gaussian random field $\hat \CalS^{ref}$, which does not depend on $\by$, and with a covariance functions given by
\begin{align}\label{eq:cov_1}
\E[\hat \CalS^{ref}(\phi_1)\hat \CalS^{ref}(\phi_2)]& = \int\dots\int \hat \CalK_{ref}(\bar s_1, \bar s_2,\bar \by_1,\bar \by_2,\omega_1,-\omega_2,d\bp_1,d\bp_2) \nonumber \\
& \hspace{3cm} \times \overline{\phi_1(\bar s_1,\bar \by_1,\omega_1,\bp_1)\phi_2(\bar s_2,\bar \by_2,\omega_2,\bp_2)}d\bar s_1 d\bar s_2 d\bar \by_1 d\bar \by_2 d\omega_1 d\omega_2 ,\\
\label{eq:cov_2}\E[\hat \CalS^{ref}(\phi_1)\overline{\hat \CalS^{ref}(\phi_2)}]& =  \int\dots\int  \hat \CalK_{ref}(\bar s_1,\bar s_2,\bar \by_1,\bar \by_2,\omega_1,\omega_2,d\bp_1,d\bp_2) \nonumber \\
& \hspace{3cm} \times \overline{\phi_1(\bar s_1,\bar \by_1,\omega_1,\bp_1)}\phi_2(\bar s_2,\bar \by_2,\omega_2,\bp_2)d\bar s_1 d\bar s_2 d\bar \by_1 d\bar \by_2 d\omega_1 d\omega_2 ,
\end{align}
for $\phi_1,\phi_2 \in \CalS(\CalX \times\CalX,\mathbb{C})$. The kernel $ \hat \CalK_{ref}$ is given by
\begin{align*}
\hat \CalK_{ref}(\bar s_1,\bar s_2,\bar \by_1,\bar \by_2,\omega_1,\omega_2,d\bp_1,d\bp_2) & = \frac{(2\pi)^3\CalR^2}{4}  \CalA(2\bcals_0, \omega_1,\bp_1) \vert\hat\Psi\vert^2_2(\omega_1)\hat \varphi(\omega_1,\bp_1)\delta(\bar s_1-s^{ref}_{\bp_1})\delta(\bar \by_1-\by^{ref}_{\bp_1}) \\
&\hspace{2cm}  \times \delta(\omega_1-\omega_2)\delta(\bp_1-\bp_2) \delta(\bar s_1-\bar s_2) \delta(\bar \by_1 - \bar \by_2)d\bp_1 d\bp_2.
\end{align*}
\end{theorem}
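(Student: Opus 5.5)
The plan is to follow the scheme used for Proposition \ref{prop:random_travel_time}: first establish tightness of $(\hat \CalS^{\e,ref}_{\by})_\e$ in $\CalS'(\CalX\times\CalX,\mathbb{C})$, then identify the finite-dimensional limits as Gaussian, and conclude with Mitoma's theorem \cite{mitoma}. Tightness follows from a uniform second-moment bound. For $\phi\in\CalS(\CalX\times\CalX,\mathbb{C})$, the quantity $\E[|\langle \hat\CalS^{\e,ref}_{\by},\phi\rangle|^2]$ is computed from \eqref{eq:S_eps}. This is essentially the computation of Proposition \ref{prop:exp_corr} in Fourier variables. Via Lemma \ref{lem:mixing}(2), the two-point function $\E[e^{2i\omega_1\bcals_0 V(\bx'_1/\e^\gamma)}e^{-2i\omega_2\bcals_0 V(\bx'_2/\e^\gamma)}]$ splits into a diagonal part and a product part. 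The diagonal part has $\bx'_1-\bx'_2=\e^\gamma\by'$ and produces $\CalA(2\bcals_0,\omega,\bp)$ together with $\delta(\omega_1-\omega_2)\delta(\bp_1-\bp_2)$. The product part gives $|\phi_V|^2$ times the mean field, which is $O(\e^{2\gamma-1})$ by \eqref{eq:mean_Se_ref}. The mollifier $\varphi^{1/2}$ then converts the Dirac masses $\delta(\bar s-s^{ref}_\bp)\delta(\bar\by-\by^{ref}_\bp)$ into $\hat\varphi$-type weights. This yields \eqref{eq:cov_2} in the limit and, in particular, a bound uniform in $\e$.

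The same computation with $\omega_2\to-\omega_2$ gives the pseudo-covariance \eqref{eq:cov_1}. Here I will check that the diagonal contribution survives only when the frequencies match with the stated sign convention. The rapidly oscillating phases $e^{-i\omega\bar s/\e^{2\gamma-1}}$ should kill all other pairings by a nonstationary-phase argument. Independence of the limit from $\by$ comes from the fact that the phase $e^{i\omega\bp\cdot\by/\e^{\gamma-1/2}}$ cancels in the diagonal term.

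For Gaussianity I will use the method of moments on $\langle\hat\CalS^{\e,ref}_{\by},\phi\rangle$ and its conjugate, for arbitrary finite collections of test functions. The $2n$-th moment is an integral over points $\bx'_1,\dots,\bx'_{2n}$ of $\E[\prod_j e^{\pm 2i\omega_j\bcals_0V(\bx'_j/\e^\gamma)}]$. I will decompose the integration domain according to the clustering of the points at the scale $\e^\gamma$, that is, by partitions of $\{1,\dots,2n\}$ into groups whose points lie within distance $O(\e^\gamma)$ of each other. Lemma \ref{lem:mixing}, extended from pairs to clusters of bounded size using $\rho$-mixing and stationarity, factorizes the expectation across clusters. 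Singletons contribute mean-field factors, which vanish in the limit. Pairs contribute the covariance kernel. Clusters of three or more points cost extra powers $\e^{2\gamma}$ of volume, which are not compensated by the prefactor $\e^{-7(\gamma-1/2)-1}$ (the counting is the same as in the pair case). Only pair partitions survive, which gives the Wick formula and hence the Gaussian limit.

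The main obstacle is this cluster-counting step. One must show that higher-order cumulants vanish uniformly, which requires quantitative mixing rather than the qualitative limits of Lemma \ref{lem:mixing}. I would first try to bound the joint cumulants by $\rho$ evaluated at the inter-cluster distance, via the covariance inequality associated with \eqref{def:rho}. I would then use the integrability coming from the oscillatory factors and the mollifier window to control the region where the points are moderately separated, possibly after truncating the $\bx'$-integral to the beam footprint $|\bx'-\bx_{int}|\lesssim\se$.
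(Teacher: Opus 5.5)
There is a genuine gap, and it lies in the second-moment step rather than in the cluster counting you flag. Pairing $\bx'_1,\bx'_2$ at scale $\e^\gamma$ and integrating the centre $\br'$ over the $\se$-footprint produces only a nascent $\delta(\omega_1\bp_1-\omega_2\bp_2)$, of width $\e^{\gamma-1/2}$ and mass $\e^{2\gamma}$. This multiplies the two-frequency function $\int\E[e^{2i\bcals_0(\omega_1V(\by'/2)-\omega_2V(-\by'/2))}]e^{-i(\omega_1\bp_1+\omega_2\bp_2)\cdot\by'/2}d\by'$. Nothing in your scheme sets $\omega_1=\omega_2$, so $\CalA(2\bcals_0,\omega,\bp)\,\delta(\omega_1-\omega_2)\delta(\bp_1-\bp_2)$ does not follow from what you describe. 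The paper obtains this pairing by a device you do not use. It splits $J^\e$ into positive- and negative-frequency parts, orders the frequencies, and sets $\omega_{1,j},\omega_{2,j}=\omega_j\pm\e^{2\gamma-1}h_j/2$. It then asserts that the $h_j$- and $\br_j$-integrations yield $2\pi\delta(\bar s_j)$ and $(2\pi)^2\delta(\bar\by_j)/\omega_j^2$. Unbalanced and odd moments are discarded by non-stationary phase in the unpaired factors, not by spatial cluster counting.

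You should not expect to close the gap by importing that device. Carried to the end, your own bookkeeping gives a degenerate limit.

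\begin{itemize}
\item Since $\hat S^{\e,ref}$ is a Fourier transform over all $(\widetilde s,\widetilde\by)\in\R^3$, the offsets enter only through $e^{-i\omega(\bar s-\bp\cdot\bar\by)/\e^{2\gamma-1}}e^{i\omega\bp\cdot\by/\e^{\gamma-1/2}}$.
\item Integrating against $\phi$ over the mollifier window therefore gives $\e^{3(\gamma-1/2)}e^{i\omega z_{int}c_0\bp^TA_0\bp/(2\e^{2\gamma-1})}G^\e(\omega,\bp)$, with $G^\e$ smooth in $\bp$ uniformly in $\e$.
\item This quadratic phase cancels the one in $\hat\CalU_0(\omega,\bp/\e^{\gamma-1/2},z_{int})$ exactly. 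So the fast phases $e^{-i\omega\bar s/\e^{2\gamma-1}}$ you intend to use for non-stationary phase are not available.
\item The remaining integral $\int G^\e(\omega,\bp)e^{-i\omega\bp\cdot(\bx'-\se\by)/\e^\gamma}d\bp$ confines $\bx'$ to an $O(\e^\gamma)$-neighbourhood of $\se\by$.
\end{itemize}

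Hence, at the leading order of \eqref{eq:S_eps}, $|J^\e(\phi)|\le C_\phi\,\e^{\gamma-1/2}$ almost surely, which is the order of the mean \eqref{eq:mean_Se_ref}. You already found that the product-of-means part of the second moment is $O(\e^{2\gamma-1})$; the covariance part is of the same order. Only one correlation cell of $V$ is seen, so there is no averaging over the footprint for a cumulant or cluster expansion to act on. With the normalization \eqref{def:speckle_func}, a correct computation produces neither \eqref{eq:cov_2} nor a Gaussian limit.

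In the paper's computation, the missing factor $\e^{2\gamma-1}$ is hidden in the Dirac masses from the $h_j$- and $\br_j$-integrations. These treat $\bar s_j,\bar\by_j$ as unconstrained, although the windows confine them to $O(\e^{2\gamma-1})$. For instance, $\br_j=O(\e^{1/2-\gamma})$ because $\bq'_{1,j}-\bq'_{2,j}=O(1)$, so $e^{i\omega_j\br_j\cdot\bar\by_j}=1+o(1)$ on the support and no $\delta(\bar\by_j)$ arises. The obstacle you flagged, the need for quantitative mixing, is secondary to this.
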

Here, due to the presence of Dirac masses involving  $\bar s_j$ and $\bar \by_j$, $j=1,2$, the speckle behaves as a Gaussian white-noise with respect to these variables; observed at two distinct points $\bar \by_1$ and $\bar \by_2$ (or at two distinct times $\bar s_1$ and $\bar s_2$) the speckles are independent. 
Indeed, if  $\phi_1$ and $\phi_2$ have disjoint support in either variable the integrals in  \eqref{eq:cov_1} and \eqref{eq:cov_2} will be zero. The proof of Theorem \ref{th:speckle} is provided in Section \ref{sec:proof_th_speckle}.

The statement of Theorem \ref{th:speckle} holds for a fixed $\by$. However, for two different values of $\by$, one can show that
\[
\lim_{\e\to 0} \E\big[\big< \hat \CalS^{\e,ref}_{\by_1}, \phi_1 \big>_{\CalS',\CalS}\big< \hat \CalS^{\e,ref}_{\by_2}, \phi_2 \big>_{\CalS',\CalS}\big] = 0,
\]
for any $\by_1,\by_2\in\R^2$ and any test functions $\phi_1,\phi_2\in\CalS(\CalX \times\CalX,\mathbb{C})$, meaning that the two random fields are asymptotically uncorrelated. This point is discussed at the end of Section \ref{sec:proof_prop_exp_cor}. More precisely, we have the following result.
\begin{corollary}\label{cor:vector_speckle}
For $n\geq 1$ and any fixed strictly different $ \by_1,\dots,\by_n, \in \R^2$, the family $(\hat \CalS^{\e,ref}_{\by_1},\dots,\hat \CalS^{\e,ref}_{\by_n})_\e$ converges in law in $\CalS'_n$ (which is $n$ times the Cartesian product of $\CalS'(\CalX\times \CalX,\mathbb{C})$) to a limit $(\hat \CalS^{ref}_1,\dots,\hat \CalS^{ref}_n)$ made of $n$ independent copies of the Gaussian random field $\hat \CalS^{ref}$ defined in Theorem \ref{th:speckle}.
\end{corollary}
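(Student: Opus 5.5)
Since $\CalS'(\CalX\times\CalX,\mathbb{C})^n$ is a product of nuclear Fr\'echet duals, I will use Mitoma's criterion componentwise, as in the proof of Proposition~\ref{prop:random_travel_time}. Tightness of the vector follows from tightness of each component, which is already established in the proof of Theorem~\ref{th:speckle} for each fixed $\by_j$. So the whole task reduces to the convergence of finite-dimensional distributions: for test functions $\phi_1,\dots,\phi_n$, the random vector $\big(\langle \hat\CalS^{\e,ref}_{\by_j},\phi_j\rangle\big)_{j}$ (real and imaginary parts) should converge in law to a centered Gaussian vector with block-diagonal covariance, each block given by \eqref{eq:cov_1}--\eqref{eq:cov_2}. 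By the Cram\'er--Wold device, it suffices to show that every linear combination $Z^\e:=\sum_j \big(\langle \hat\CalS^{\e,ref}_{\by_j},\phi_j\rangle + \langle \overline{\hat\CalS^{\e,ref}_{\by_j}},\psi_j\rangle\big)$ converges to the corresponding scalar Gaussian.

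I will rerun the argument used for Theorem~\ref{th:speckle}, now applied to sums over the index $j$. I will write $Z^\e$ as an integral of $e^{\pm 2i\omega\bcals_0 V(\bx'/\e^\gamma)}$ against a deterministic kernel and then compute moments $\E[(Z^\e)^m]$ by the same moment or cumulant method. Expanding the product gives terms that involve $m$ fields with indices $j_1,\dots,j_m$. Lemma~\ref{lem:mixing}(2) and $\rho$-mixing make the points $\bx'_l$ decorrelate except along pairings. The combinatorics that produce Wick's formula for a single $\by$ then give Wick's formula for the sum, with pair covariances $\lim_\e \E[\langle \hat\CalS^{\e,ref}_{\by_j},\phi\rangle\langle \hat\CalS^{\e,ref}_{\by_k},\phi'\rangle]$ and their conjugate analogues. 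This shows the limit of $Z^\e$ is Gaussian. The limit of $\E[|Z^\e|^2]$ also stays bounded, uniformly, by the single-$\by$ estimates together with Cauchy--Schwarz.

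The key step, and the main obstacle, is the cross covariance for $j\neq k$. Here I will invoke the computation sketched at the end of Section~\ref{sec:proof_prop_exp_cor}. In the two-point moment, the paired contribution depends on $\by_j$ and $\by_k$ through the phase $e^{i\omega\bp\cdot(\by_j-\by_k)/\e^{\gamma-1/2}}$ coming from \eqref{eq:S_eps}. After the Dirac identifications $\omega_1=\omega_2$ and $\bp_1=\bp_2$ obtained from the mixing limit, this phase is rapidly oscillating in $\bp$ (or in $\omega$) because $\gamma>1/2$. I will show it vanishes by the Riemann--Lebesgue lemma. Justifying this is delicate, because the kernel contains the measure $\CalA$ and the mollifier window, so I must check that the $\bp$-integrand is absolutely continuous enough. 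My first approach is to integrate by parts in $\omega$ against the smooth $|\hat\Psi|_2^2$ factor, since $\by_j\neq\by_k$ makes $\bp\cdot(\by_j-\by_k)$ nonzero for $\CalA$-almost every $\bp$, with the mollifier controlling the remaining terms. The unpaired, same-sign cross terms vanish already, as in \eqref{eq:mean_Se_ref}.

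Once the cross covariances vanish, the limiting Gaussian vector is block diagonal. For jointly Gaussian fields, zero covariance (including the pseudo-covariance \eqref{eq:cov_1}) implies independence. Each block coincides with the law of $\hat\CalS^{ref}$ from Theorem~\ref{th:speckle}, which does not depend on $\by$, so the limit consists of $n$ independent copies and convergence in law in $\CalS'_n$ follows.
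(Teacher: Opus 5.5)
Your overall route is the paper's. There, the corollary follows from the moment/pairing computation of Theorem~\ref{th:speckle}, together with the remark at the end of Section~\ref{sec:proof_prop_exp_cor} that distinct beam-scale variables produce the extra factor $e^{i\omega\bp\cdot(\by_j-\by_k)/\e^{\gamma-1/2}}$, which is supposed to kill the cross-correlations. Your tightness argument, the Cram\'er--Wold reduction and the final ``uncorrelated jointly Gaussian implies independent'' step are fine.

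The step that fails as you justify it is the claim that $\bp\cdot(\by_j-\by_k)\neq 0$ for $\CalA$-almost every $\bp$. By mixing (Lemma~\ref{lem:mixing}), $\E[e^{2i\omega\bcals_0(V(\by')-V(0))}]\to|\phi_V(2\omega\bcals_0)|^2$ as $|\by'|\to\infty$. Hence $\CalA(2\bcals_0,\omega,\cdot)$ contains the atom $(2\pi/\omega)^2|\phi_V(2\omega\bcals_0)|^2\delta(\bp)$. This atom is nonzero in general (e.g.\ for Gaussian $V$), and it is the trace of the homogenized specular component of Section~\ref{sec:randint1}. At $\bp=0$ your phase is identically $1$, so neither Riemann--Lebesgue in $\bp$ nor integration by parts in $\omega$ acts on it. The formally paired cross term therefore keeps an $\e$-independent contribution concentrated at $\bp=0$, $(\bar\by,\bar s)=(0,0)$, which does not vanish for general $\phi_j,\phi_k$. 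Done literally, the pairing computation does not give vanishing cross-covariances.

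The missing idea is that this atom is an artifact of the formal pairing. The constant part $|\phi_V|^2$ of the two-point function of $e^{2i\omega\bcals_0 V}$ reproduces exactly the product of the means, which is $O(\e^{2\gamma-1})$ by \eqref{eq:mean_Se_ref}. So you should center before pairing: replace $e^{2i\omega\bcals_0V(\bx'/\e^\gamma)}$ by $e^{2i\omega\bcals_0V(\bx'/\e^\gamma)}-\phi_V(2\omega\bcals_0)$ in the moment computation. The difference is deterministic and tends to zero in $\CalS'$, so it does not affect convergence in law. After centering, only $\CalA$ with its atom removed enters, and the covariance of the centered exponentials decays by mixing. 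Your oscillation argument then goes through, provided this remainder is absolutely continuous in $\bp$ (for instance if $\E[e^{2i\omega\bcals_0(V(\by')-V(0))}]-|\phi_V(2\omega\bcals_0)|^2$ is integrable in $\by'$). Be aware that centering also removes the atom from the diagonal blocks, i.e.\ from the covariance of Theorem~\ref{th:speckle} as written. The paper's own remark at the end of Section~\ref{sec:proof_prop_exp_cor} passes over this point in the same way.
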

In other words, for a given location $\bar \by$, the speckles observed at the two points
\[
\bx_{obs, ref} + \e^{1-\gamma}\bar \by +\se \by_1 \qquad\text{and}\qquad \bx_{obs, ref} + \e^{1-\gamma}\bar \by +\se \by_2,
\]
whose distance $|\by_1 - \by_2|$   is of order the beam width are independent.   

Now, the real-valued tempered random field
\[
\CalS^{\e,ref}_{\by}(\bar s,\bar \by, \widetilde s, \widetilde \by) := \frac{1}{(2\pi)^3}\iint e^{-i\omega(\widetilde s - \bp\cdot\widetilde \by)} \hat \CalS^{\e,ref}_{\by}(\bar s,\bar \by, \omega, \bp) \omega^2 d\omega d \bp,
\]
accounts for the time and space variations at the scales of the pulse width and the correlation length. For this random field we have the following result.

\begin{corollary}\label{corr:7.2}
For $n \geq 1$ and any fixed strictly different  $\by_1,\dots,\by_n \in \R^2$, the family $(\CalS^{\e,ref}_{\by_1},\dots,\CalS^{\e,ref}_{ \by_n})_\e$ converges in law in $\CalS'_n$ to $(\CalS^{ref}_1,\dots,\CalS^{ref}_n)$ made of $n$ independent copies of a real-valued mean-zero stationary (in $(\widetilde s, \widetilde \by)$) Gaussian random field $\CalS^{ref}$ with covariance function given by 
\begin{align}\label{eq:cov_3}
\E[\CalS^{ref}(\phi_1)\CalS^{ref}(\phi_2)]& = \int\dots\int \CalK_{ref}(\bar s_1,\bar s_2,\bar \by_1,\bar \by_2,\widetilde s_1- \widetilde s_2,\widetilde \by_1 - \widetilde \by_2)\nonumber  \\
& \hspace{3cm} \times \phi_1(\bar s_1,\bar \by_1,\widetilde s_1,\widetilde \by_1)\phi_2(\bar s_2,\bar \by_2,\widetilde s_2,\widetilde \by_2)d\bar s_1 d\bar s_2 d\bar \by_1 d\bar \by_2 d\widetilde s_1 d\widetilde s_2 d\widetilde \by_1 d\widetilde \by_2
\end{align}
for $\phi_1,\phi_2 \in \CalS(\CalX \times\CalX)$. The kernel $\CalK_{ref}$ is given by
\begin{align*}
\CalK_{ref}(\bar s_1,\bar s_2,\bar \by_1,\bar \by_2,\widetilde s,\widetilde \by) & =  \frac{\CalR^2}{4(2\pi)^3} \iint e^{-i\omega(\widetilde s - \bp\cdot\widetilde \by)} \CalA(2\bcals_0,\omega,\bp) \vert\hat\Psi\vert^2_2(\omega)\hat \varphi(\omega,\bp)\\
&\hspace{2cm}  \times \delta(\bar s_1-s^{ref}_{\bp})\delta(\bar \by_1-\by^{ref}_{\bp})  \delta(\bar s_1-\bar s_2)\delta(\bar \by_1-\bar \by_2)\omega^2 d\omega d\bp\\
& = \frac{\CalR^2 c_0^2 \bcals_0^4}{4(2\pi)^3 z^2_{int}}  \int e^{-i\omega(\widetilde s - \bar \by^T_1 A_0^{-1}\widetilde \by/(z_{int} c_0))} \vert\hat\Psi\vert^2_2(\omega) \CalA\Big(2\bcals_0,\omega,\frac{A_0^{-1} \bar \by_1}{z_{int}c_0}\Big) \hat \varphi\Big(\omega,\frac{A_0^{-1}\bar \by_1}{z_{int}c_0}\Big)\omega^2 d\omega \\
&\hspace{2cm}  \times \delta\Big(\bar s_1-\frac{\bar \by^T_1 A^{-1}_0\bar \by_1}{2z_{int}c_0}\Big)\delta(\bar s_1-\bar s_2)\delta(\bar \by_1-\bar \by_2).
\end{align*}
Here, $A^{-1}_0$ is given by \eqref{def:A0_inv}
\end{corollary}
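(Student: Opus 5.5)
The corollary is deduced from Corollary \ref{cor:vector_speckle} by the continuous mapping theorem, followed by a computation of the limiting covariance. The map
\[
\hat \CalS \mapsto \CalS(\bar s,\bar \by,\widetilde s,\widetilde \by) = \frac{1}{(2\pi)^3}\iint e^{-i\omega(\widetilde s-\bp\cdot\widetilde\by)}\hat\CalS(\bar s,\bar\by,\omega,\bp)\,\omega^2 d\omega d\bp
\]
is a partial inverse Fourier transform in the last two variables, with a weight $\omega^2$. By duality, $\langle \CalS,\phi\rangle = \langle \hat\CalS, \check\phi\rangle$. Here $\check\phi(\bar s,\bar\by,\omega,\bp)$ is, up to constants, $\omega^2$ times the partial Fourier transform of $\phi$ in $(\widetilde s,\widetilde\by)$, with conjugation conventions matching the antilinear pairing used in Theorem \ref{th:speckle}. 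The weight $\omega^2$ is a polynomial, so $\phi\mapsto\check\phi$ maps $\CalS(\CalX\times\CalX)$ continuously into $\CalS(\CalX\times\CalX,\mathbb{C})$. Consequently the transpose map is continuous on $\CalS'$, and so is its $n$-fold product on $\CalS'_n$.

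By Corollary \ref{cor:vector_speckle}, $(\hat\CalS^{\e,ref}_{\by_j})_j$ converges in law to $n$ independent copies of $\hat\CalS^{ref}$. The continuous mapping theorem then gives convergence in law of $(\CalS^{\e,ref}_{\by_j})_j$ to $n$ independent copies of $\CalS^{ref}$, the image of $\hat\CalS^{ref}$. Independence is preserved because the map acts componentwise. Gaussianity and zero mean are preserved because $\langle\CalS^{ref},\phi\rangle$ is a linear functional of the complex Gaussian field $\hat\CalS^{ref}$.

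Real-valuedness should be checked at the prelimit level. $u^{\e,ref}$ is real, so $\hat S^{\e,ref}(\omega,\bp)$ satisfies the Hermitian symmetry $\overline{\hat S(\omega,\bp)}=\hat S(-\omega,\bp)$. The real, even-in-$\omega$ mollifier factor $\varphi^{1/2}$ does not affect this symmetry, since $s^{ref}_\bp$ and $\by^{ref}_\bp$ do not depend on $\omega$. Hence $\CalS^{\e,ref}_\by$ is real, and the limit is real as the law limit of real fields.

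For the covariance, I expand $\E[\langle\CalS^{ref},\phi_1\rangle\langle\CalS^{ref},\phi_2\rangle]$ using the pseudo-covariance \eqref{eq:cov_1} applied to $\check\phi_1,\check\phi_2$. By real-valuedness this equals the version obtained from \eqref{eq:cov_2}, which serves as a consistency check on signs. The factors $\delta(\omega_1-\omega_2)$ (with the $-\omega_2$ argument) and $\delta(\bp_1-\bp_2)$ in $\hat\CalK_{ref}$ collapse the double spectral integral to a single one in $(\omega,\bp)$. Undoing the partial Fourier transforms then produces the phase $e^{-i\omega(\widetilde s_1-\widetilde s_2-\bp\cdot(\widetilde\by_1-\widetilde\by_2))}$, the $\omega^2$ weight and the constant $\CalR^2/(4(2\pi)^3)$. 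This is the first expression for $\CalK_{ref}$. Stationarity in $(\widetilde s,\widetilde\by)$ is visible from its dependence on the differences only.

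The second expression follows as in \eqref{def:Cor_func_bis}. I use $\delta(\bar\by_1-z_{int}c_0A_0\bp)$ to integrate out $\bp$, which sets $\bp=A_0^{-1}\bar\by_1/(z_{int}c_0)$ and produces the Jacobian $1/\det(z_{int}c_0A_0)=c_0^2\bcals_0^4/z_{int}^2$. For this I use $\det A_0^{-1}=c_0^2\bcals_0^2(1-c_0^2|\bk_0|^2)=c_0^4\bcals_0^4$, computed from \eqref{def:A0_inv}. Substituting this $\bp$ into $s^{ref}_\bp$ gives $\bar\by_1^TA_0^{-1}\bar\by_1/(2z_{int}c_0)$.

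The main obstacle I expect is bookkeeping rather than analysis. I need to track the conjugations and the sign of $\omega_2$ between \eqref{eq:cov_1} and \eqref{eq:cov_2} so that the real covariance comes out with the stated phase. I also need to verify the Hermitian symmetry of $\CalA(2\bcals_0,\omega,\bp)$, $|\hat\Psi|_2^2$ and $\hat\varphi$ under $\omega\to-\omega$, so that the resulting kernel is indeed real.
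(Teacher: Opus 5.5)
Your route is the one the paper intends. Corollary~\ref{corr:7.2} has no separate proof there; it is meant as the image of Theorem~\ref{th:speckle} and Corollary~\ref{cor:vector_speckle} under the inverse transform. Your treatment of realness, Gaussianity, independence and the Jacobian ($\det A_0^{-1}=c_0^4\bcals_0^4$) is fine.

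The step that fails is the claim that $\phi\mapsto\check\phi$ maps $\CalS(\CalX\times\CalX)$ continuously into $\CalS(\CalX\times\CalX,\mathbb{C})$. The obstruction is not the polynomial weight $\omega^2$ but the slowness parametrization. Since the phase is $e^{-i\omega(\widetilde s-\bp\cdot\widetilde\by)}$, $\check\phi(\omega,\bp)$ is, up to signs and conjugation, $\omega^2$ times the ordinary partial Fourier transform of $\phi$ evaluated at $(\omega,\omega\bp)$. The map $(\omega,\bp)\mapsto(\omega,\omega\bp)$ degenerates at $\omega=0$. For example, let $\phi$ be a Schwartz function of $(\bar s,\bar\by)$ times $e^{-\widetilde s^2-|\widetilde\by|^2}$. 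Then the $(\omega,\bp)$-dependence of $\check\phi$ is proportional to $\omega^2e^{-\omega^2(1+|\bp|^2)/4}$. Its supremum over $\omega$ is $4e^{-1}(1+|\bp|^2)^{-1}$, attained at $\omega^2=4/(1+|\bp|^2)$, so it decays only like $|\bp|^{-2}$. Hence $\check\phi\notin\CalS$, and the transposed map is not defined on all of $\CalS'$: its pairing with the tempered function $|\bp|$ already diverges. So the continuous mapping theorem cannot be invoked as you state it.

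The missing ingredient is the band limitation of the source. Under the standing assumption on $\hat\Psi$ (compact support, bounded away from $\omega=0$), $\hat\CalS^{\e,ref}_{\by}$ vanishes for $\omega$ outside a band $\omega_{\min}\le|\omega|\le\omega_{\max}$. This holds because the medium is time-invariant and the mollifier does not depend on $\omega$; see the factor $\hat\Psi(\omega,\bq')$ in \eqref{eq:S_eps}. The same holds for $\hat\CalS^{ref}$, whose kernel contains $|\hat\Psi|^2_2(\omega_1)$. Take an even $\chi\in C_c^\infty(\R)$ equal to one near this band and vanishing near $\omega=0$. On $\mathrm{supp}\,\chi$ you have $|\bp|\le|\omega\bp|/\inf_{\mathrm{supp}\,\chi}|\omega|$, so $\phi\mapsto\chi\check\phi$ is continuous from $\CalS(\CalX\times\CalX)$ into $\CalS(\CalX\times\CalX,\mathbb{C})$. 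Its transpose is a continuous map on $\CalS'$, and componentwise on $\CalS'_n$, that agrees with your inverse transform on band-limited fields. With this replacement the rest of your argument goes through. Alternatively, you can redo the moment computations of Section~\ref{sec:proof2} directly against $\check\phi$, using that $\delta(\bar\by-\by^{ref}_\bp)$ and the decay in $\bar\by$ localize $\bp$.

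One caution on the bookkeeping you flagged. With the Lebesgue pairing of \eqref{eq:cov_1} and \eqref{def:sigma2_ref}, each $\overline{\check\phi_j}$ carries $\omega_j^2$. After $\delta(\omega_1+\omega_2)$, a literal computation therefore yields $\omega^4$, not the $\omega^2$ you assert. To land on the stated kernel (consistent with \eqref{def:Cor_func}), you must fix the normalization of $\delta(\bp_1-\bp_2)$ in slowness variables, e.g.\ read it as $\delta(\omega\bp_1-\omega\bp_2)$.
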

The covariance kernel $\CalK_{ref}$ provides a direct characterization of the Gaussian speckle pattern as it emerges in the observation plane $z=0$ and evolves in time. In particular, for a given time $\bar s$, the speckles form ellipses satisfying $\bar s=\bar \by^T A^{-1}_0\bar \by/(2z_{int}c_0)$.
  
We end this section with a comment on the difference between the correlation function \eqref{def:corr_func}, which exhibits a self-averaging property, and the speckle field \eqref{def:speckle_func} in the Fourier domain, which remains random.
For the correlation function, the speckle profile is correlated around given values of $\bar s$ and $\bar \by$.
Hence, studying the second-order moment of the correlation function in $\CalS'$ corresponds to looking at the fourth-order moment of the speckle, but with two factors localized around $(\bar s_1,\bar \by_1)$ and two others around $(\bar s_2,\bar \by_2)$.
When these two observation points are far apart, the correlated speckles around each point involve rapidly oscillating phase terms that cancel when tested against a smooth test function, so that the two contributions become statistically independent:
\[
\E[\CalC^{\e,ref}(\bar s_1, \bar \by_1)\CalC^{\e,ref}(\bar s_2, \bar \by_2)]\simeq \E[\CalC^{\e,ref}(\bar s_1, \bar \by_1)]\E[\CalC^{\e,ref}(\bar s_2, \bar \by_2)],
\] 
yielding convergence in probability in $\CalS'$. 

In this section, we instead study the speckle itself as a random field indexed by the scattered direction $\bp$, with, for each $\bp$, the corresponding observation time $\bar s = s^{ref}_\bp$ and position $\bar \by=\by^{ref}_\bp$.
In this formulation, there is no statistical stabilization (self-averaging) at the level of the field itself, and Theorem \ref{th:speckle} describes the statistical coupling between the diffracted directions $\bp$ through a Gaussian random field.
A similar distinction between the wavefield and its correlation function has been observed in wave propagation problems with random bulk fluctuations, see, e.g., \cite{bal10, bal11, bal23, gu21, papanicolaou07}.

\section{The case $\gamma>1/2$: incoherent wave fluctuations for the nonspecular transmitted contributions}\label{sec:randint3}

\begin{figure}
\begin{center}
\includegraphics*[scale=0.25]{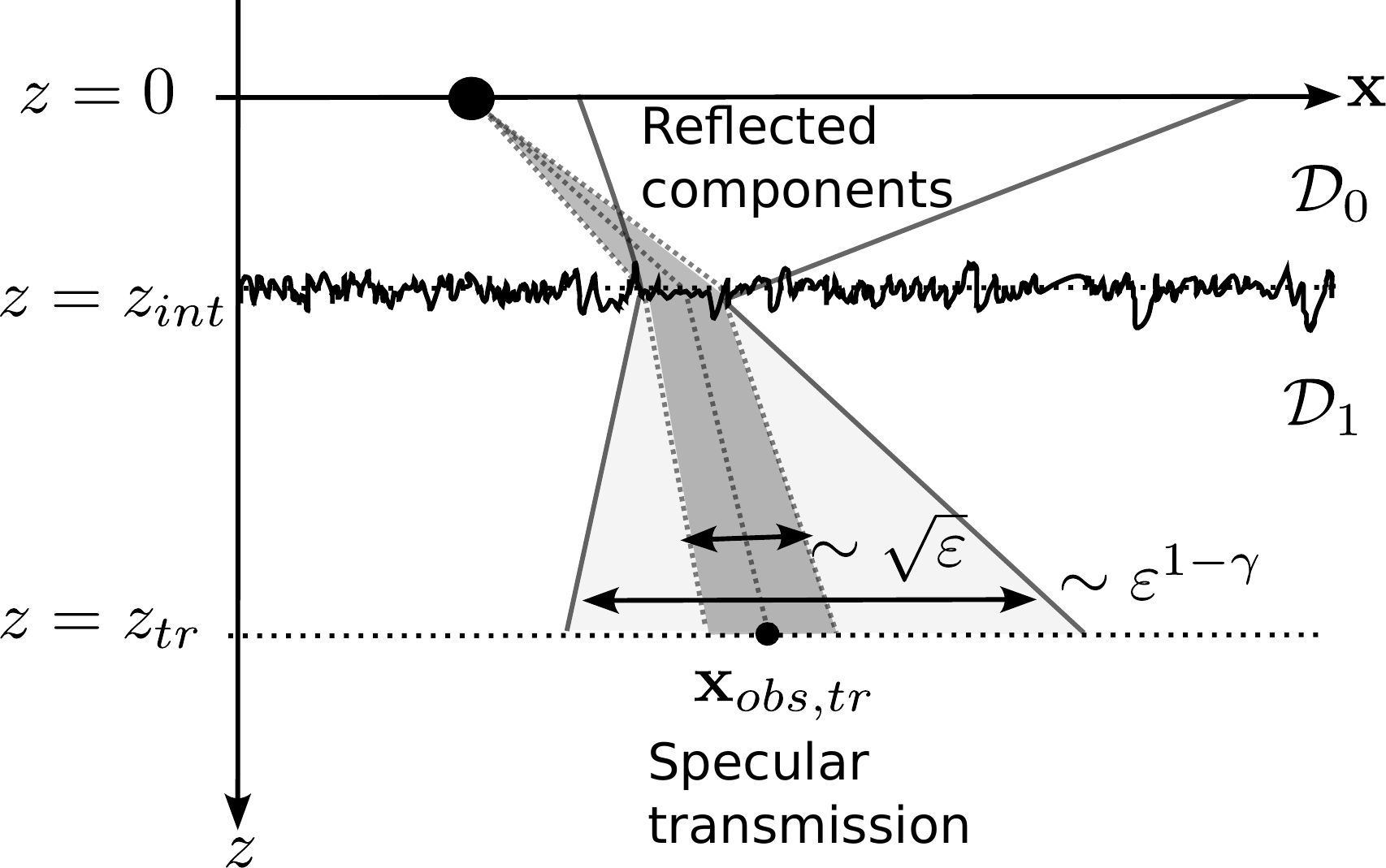}  
\end{center}
\caption{
Illustration of the zone where the transmitted speckle profile can be observed (in light gray), compared to the specular transmitted component (in gray).
}
\end{figure}

Following the same strategy as for the reflected component, the transmitted speckle profile is given by
\begin{equation*}
S^{\e,tr}(\bar s, \bar \by, \by, \widetilde s, \widetilde \by) := \e^{1 -2\gamma} u^{\e,tr}(t^\e_{obs,tr}(\bar s,\bar \by, \by, \widetilde\by) + \e \widetilde s, \bx^\e_{obs, tr}(\bar \by, \by) + \e^\gamma \widetilde\by , z=z_{tr}),
\end{equation*}
where $u^{\e,tr}$ is given by \eqref{eq:ue_tr} at the leading order,
\[
\bx^\e_{obs, tr}(\by,\bar \by) := \bx_{obs, tr} + \e^{1-\gamma} \bar \by + \se \, \by,
\]
and
\[
t^\e_{obs,tr}(\bar s,\bar \by,\by,\widetilde\by):= t_{obs,tr} + \e^{1-\gamma}\bk_0\cdot\bar \by + \se \, \bk_0\cdot \by + \e^\gamma \bk_0\cdot\widetilde \by + \e^{2(1-\gamma)}\bar s.
\]
As for the reflected wavefield, the transmitted wavefield $u^{\e,tr}$ is observed in a neighborhood of order then relative roughness parameter $\lambda/\ell_c=\e^{1-\gamma}$, which is larger than the beam width (since $\gamma>1/2$), around the location $\bx_{obs,tr}$ at which the specular transmitted component is observed. We also consider variations at the beam-width scale and fluctuations of order the correlation length through $\by$ and $\widetilde \by$, respectively (see the right panel of Figure \ref{fig:ref_wave_speckle} for an illustration). The speckle profile $S^{\e,tr}$ is observed on a time scale around the travel time $t_{obs,tr}$, with three corrections depending on the observation offsets ($\bar \by$, $\by$, and $\widetilde \by$). The overall observation window, of order $\e^{2(1-\gamma)}$ in time, is larger than the pulse width, and we also resolve fluctuations at the pulse-width scale through the variable $\widetilde s$. Finally, the prefactor $\e^{1-2\gamma}\gg 1$ characterizes the magnitude of the speckle component. Recall that as $\gamma$ increases (i.e., as the interface becomes rougher), the region in which the speckle pattern can be observed around $\bx_{obs,tr}$ increases, as does the speckle duration (captured by the term $\e^{2(1-\gamma)} \bar s$), while the speckle magnitude decreases.

As for the reflected speckle, the first moment of $S^{\e,tr}$ converges to zero in the high-frequency limit, and we focus on its two-point statistics and on the intensity.

\subsection{Correlation function and intensity.}

The correlation function of $S^{\e,tr}$ at two nearby points is defined by
\[
\CalC^{\e, tr}(\bar s,\bar \by,\by,\widetilde s_1,\widetilde \by_1, \widetilde s_2,\widetilde \by_2) : = S^{\e, tr}(\bar s,\bar \by,\by,\widetilde s_1,\widetilde \by_1)S^{\e, tr}(\bar s,\bar \by,\by,\widetilde s_2,\widetilde \by_2),
\]
and the associated intensity by
\[
\CalI^{\e, tr}(\bar s,\bar \by, \by,\widetilde s,\widetilde \by) : = \big\vert S^{\e, tr}(\bar s, \bar \by, \by,\widetilde s,\widetilde \by)\big\vert^2 = \CalC^{\e, tr}(\bar s, \bar \by, \by, \widetilde s,\widetilde \by, \widetilde s, \widetilde \by).
\]
The following result characterizes the asymptotic correlation function.

\begin{proposition}
For $\gamma \in (1/2,2]$ and $\bar \by \neq 0$, we have 
\begin{align*}
\lim_{\e\to 0}\CalC^{\e, tr}(\bar s, \bar \by, \by, \widetilde s_1, \widetilde \by_1, \widetilde s_2, \widetilde \by_2) = \bC^{tr}(\bar s, \bar \by, \widetilde s_1-\widetilde s_2, \widetilde \by_1-\widetilde \by_2),
\end{align*}
in probability in $\CalS'(\mathbb{X}\times \CalX\times \CalX)$, where
\begin{equation}
\label{def:Cor_func_tr}
\bC^{tr}(\bar s, \bar \by, \widetilde s, \widetilde \by) := \frac{\CalT^2 \bcals_0}{4\bcals_1 (2\pi)^3}\iint  e^{-i\omega (\widetilde s - \bp\cdot\widetilde \by)}   \CalA(\bcals_0-\bcals_1, \omega, \bp) |\hat \Psi|^2_2(\omega) \delta(\bar s - s^{tr}_\bp) \delta(\bar \by- \by^{tr}_\bp) \omega^2 d\omega  d\bp,
\end{equation}
with $\CalA$ defined by \eqref{def:CalA}, $|\hat \Psi|^2_2$ by \eqref{def:hPsi2}, and
\[s^{tr}_\bp= (z_{tr}-z_{int})c_1 \bp^T A_1 \bp/2 \geq 0,\qquad\text{and}\qquad \by^{tr}_\bp= (z_{tr}-z_{int})c_1 A_1 \bp.\]
\end{proposition}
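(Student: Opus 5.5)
The plan is to mirror the treatment of the reflected speckle, combining Propositions \ref{prop:exp_corr} and \ref{prop:cv_corr}. The first step is a mean computation identifying $\bC^{tr}$. The second is a second-moment (variance) estimate giving convergence in probability in $\CalS'(\mathbb{X}\times\CalX\times\CalX)$ via Markov's inequality, exactly as in the proof of Proposition \ref{prop:homog_ref}.

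\textbf{Step 1 (representation).} Insert the observation point and time into the leading-order representation \eqref{eq:ue_tr} of Lemma \ref{lem:tech}, and change variables $\bx'\to \bx_{int}+\se\,\by'$ and $\bq\to \bp/\e^{\gamma-1/2}$. The phase $e^{i\omega\bq\cdot(\bx-\bx_{obs,tr})/\se}$ then splits into three factors, $e^{i\omega\bp\cdot\bar\by/\e^{2\gamma-1}}$, $e^{i\omega\bp\cdot\by/\e^{\gamma-1/2}}$ and $e^{i\omega\bp\cdot\widetilde\by}$. Expanding the slownesses with \eqref{eq:exp_lambda} for $j=1$, the propagator becomes
\[
\hat\CalU_1(\omega,\bp/\e^{\gamma-1/2},z_{tr}-z_{int})=e^{-i\omega (z_{tr}-z_{int})c_1\bp^TA_1\bp/(2\e^{2\gamma-1})}.
\]
Its phase is compensated by the time shift $\e^{2(1-\gamma)}\bar s$, since the latter contributes $e^{-i\omega\bar s/\e^{2\gamma-1}}$ once the $1/\e$ of the phase is accounted for. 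The linear terms in $\bk_0$ cancel against the corrections $\bk_0\cdot(\cdot)$ built into $t^\e_{obs,tr}$, exactly as in the reflected case.

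\textbf{Step 2 (mean).} Write $\E[S^{\e,tr}S^{\e,tr}]$ as a double integral containing
\[
\E\big[e^{i(\bcals_0-\bcals_1)(\omega_1V(\bx'_1/\e^\gamma)-\omega_2V(\bx'_2/\e^\gamma))}\big].
\]
Change variables to $\bx'_2=\bx'_1-\e^\gamma\by''$ and split the expectation into a near-diagonal part ($|\by''|=O(1)$) and a product-of-means part.

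The product-of-means part involves $\phi_V$ evaluated at separated points by Lemma \ref{lem:mixing}. It only contributes a specular term concentrated at $\bar\by=0$, and after testing it is $O(\e^{\gamma-1/2})$ small, as in \eqref{eq:mean_Se_ref}. This is where the restriction $\bar\by\neq0$ is used.

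In the near-diagonal part, the $\by''$ integral produces $\CalA(\bcals_0-\bcals_1,\omega,\bp)$. Stationary phase in $(\omega_1-\omega_2,\bp_1-\bp_2)$ then forces $\omega_1=\omega_2$ and $\bp_1=\bp_2$. Integration over the large-scale variable $\bx'_1$ at scale $\se$ produces $|\hat\Psi|_2^2(\omega)$, using Plancherel together with the unitarity of $\hat\CalU_0$. The remaining large phases in $\bar s$ and $\bar\by$ become $\delta(\bar s-s^{tr}_\bp)\,\delta(\bar\by-\by^{tr}_\bp)$ when tested against Schwartz functions. This is the same computation as in Section \ref{sec:proof_prop_exp_cor}, with $2\bcals_0$ replaced by $\bcals_0-\bcals_1$, the pair $(A_0,z_{int})$ replaced by $(A_1,z_{tr}-z_{int})$ in the outgoing propagator, and the prefactor $\CalR^2$ replaced by $\CalT^2\bcals_0/\bcals_1$.

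\textbf{Step 3 (variance).} Show that
\[
\E\big[\langle\CalC^{\e,tr},\phi\rangle^2\big]\to\langle\bC^{tr},\phi\rangle^2.
\]
This is a fourth-order moment of $e^{i(\bcals_0-\bcals_1)\omega_jV}$ at four points $\bx'_1,\dots,\bx'_4$. I expect this to be the main obstacle. The plan is to use the $\rho$-mixing bound \eqref{prop:equiv_mix} together with Lemma \ref{lem:mixing}(2) to factorize the expectation as a product over the pairs $(1,2)$ and $(3,4)$ whenever these pairs are separated on the scale $\e^\gamma$. The remaining pairings, $(1,3)(2,4)$ and $(1,4)(2,3)$, should be negligible: they force $\bp_1=\bp_3$, which conflicts with the free large-scale test variables, and the resulting phases are of order $e^{i\cdot/\e^{2\gamma-1}}$ and oscillate out. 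Configurations in which three or four points cluster carry a small measure, $O(\e^{2(\gamma-1/2)})$ relative to the main term. The conclusion then follows by Markov's inequality.
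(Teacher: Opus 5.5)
Your proposal follows essentially the paper's route. The paper omits this proof and defers to Propositions \ref{prop:exp_corr} and \ref{prop:cv_corr}: the mean computation with $\CalR^2$, $2\bcals_0$, $(A_0,z_{int})$ replaced by $\CalT^2\bcals_0/\bcals_1$, $\bcals_0-\bcals_1$, $(A_1,z_{tr}-z_{int})$, then a mixing factorization of the fourth moment and Markov's inequality. Your explicit handling of the cross pairings and clustered configurations is, if anything, more careful than the paper's.

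One correction in Step 2: the product-of-means (coherent) part is not $O(\e^{\gamma-1/2})$ after testing. The analogy with \eqref{eq:mean_Se_ref} is misplaced, because that estimate concerns the mollified Fourier-domain field. Under your normalization this part is $O(1)$. It is exactly the contribution of the atom of the measure $\CalA$ at $\bp=0$, concentrated at $(\bar s,\bar\by)=(0,0)$. So it is the support condition $\bar\by\neq0$ alone that disposes of it; alternatively, it can simply be kept inside $\bC^{tr}$ with $\CalA$ read as a measure.
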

The proof is omitted since it follows closely the proofs of Propositions \ref{prop:exp_corr} and \ref{prop:cv_corr}.

As for the reflected wavefield, the correlation function can be rewritten as 
\begin{align*}
\bC^{tr}(\bar s,\bar \by, \widetilde s,\widetilde \by) = \frac{\CalT^2 c_1^2 \bcals_0\bcals_1^3 }{4(2\pi)^3(z_{tr}-z_{int})^2}\delta\Big(\bar s - \frac{\bar \by^T  A_1^{-1}\bar \by}{2(z_{tr}-z_{int})c_1}\Big) \int & e^{-i\omega (\widetilde s - \bar \by ^T A_1^{-1} \widetilde \by/((z_{tr}-z_{int})c_1))}  |\hat \Psi|^2_2(\omega)  \\
&\times\CalA\Big(\bcals_0-\bcals_1,\omega,\frac{A_1^{-1}\bar \by}{(z_{tr}-z_{int})c_1}\Big)  \omega^2 d\omega,
\end{align*}
with 
\begin{equation}\label{def:A1_inv}
A_1^{-1} = c_1 \bcals_1(\mathbf{I}_2 - c_1^2 \bk_0 \otimes \bk_0).
\end{equation}
Thus, for a given location $\bar \by$, the observation time at which the speckle is observed is
\[
\bar s = \bar \by^T  A_1^{-1}\bar \by/(2(z_{tr}-z_{int})c_1)\geq 0,
\] 
and the temporal correlation length is of order the initial pulse duration $\e$, through the variable $\widetilde s$. Likewise, for a given time $\bar s$, the incoherent wave fluctuations are supported on an ellipse defined by $\bar s = \bar \by^T  A_1^{-1}\bar \by/(2(z_{tr}-z_{int})c_1)$. The correlation function decays on a temporal scale corresponding to the pulse duration and on a spatial scale corresponding to the correlation length of the interface fluctuations, through $\widetilde s$ and $\widetilde \by$, respectively. The stationarity property is again observed with respect to $(\widetilde s,\widetilde \by)$, and $\bC^{tr}$ does not depend on $\by$, which corresponds to variations at the beam-width scale. Similarly, the mean intensity carried by the speckle profile satisfies
\[
\lim_{\e\to 0}\CalI^{\e, tr}(\bar s,\bar \by,\by,\widetilde s,\widetilde \by) = \bC^{tr}(\bar s,\bar \by,0,0),
\] 
in probability. This limit does not depend on $\by$ nor on the small-scale variables $\widetilde s$ and $\widetilde \by$. The intensity is therefore uniform over these scales and varies only with $\bar s$ and $\bar \by$.

\paragraph{Generalized Snell's law of transmission.} As already mentioned, in \eqref{def:Cor_func_tr}, the spatial window of the speckle cone is larger than the beam width. For a given direction $\bp$, the speckle observed at position $\bar \by = \by^{tr}_\bp$ occurs at the corresponding time $\bar s= s^{tr}_\bp \geq 0$, which is nonnegative by \eqref{def:Aj} and the assumption $|\bk_0|<1/c_0<1/c_1$. Note that for $\gamma=1$, corresponding to a correlation length of order the wavelength, the speckle can be observed at distances of order one from $\bx_{obs,tr}$.

In terms of the transmission angle, for a given frequency $\omega$ and nonzero slowness vector $\bk_0$ (i.e.\ $\theta_{inc}>0$), we have, after some algebra, the following relation between the transmission angle and the specular transmission angle $\theta^0_{tr}$ at leading order in $\e$:
\begin{align*}
\frac{\tan(\theta^\e_{tr}(\bp))}{\tan(\theta^0_{tr})} & = \frac{|\bx_{obs,tr}-\bx_{int} + \e^{1-\gamma}\by^{tr}_\bp|}{|\bx_{obs,tr}-\bx_{int}|}\\
& = \sqrt{\Big(1+\xi_\e\frac{\bp\cdot\bk_0}{\cos^2(\theta^0_{tr})}\Big)^2 + \xi^2_\e(\bp\cdot \bk_0^\perp)^2},\qquad \xi_\e = \e^{1-\gamma}\frac{c_0^2}{\sin^2(\theta_{inc})},
\end{align*}
for $\bp$ distributed according to the scattering distribution $\CalA(\bcals_0-\bcals_1,\omega, \cdot)$. From this relation, the standard Snell's law for transmission can be generalized as follows:
\[
\frac{\sin(\theta^\e_{tr}(\bp))}{c_1} = \frac{\sin(\theta_{inc})}{c_0}+|\bk_0|\left(\sqrt{\frac{\Xi^\e_{tr}(\bp)}{1+\sin^2(\theta^0_{tr})\Big( \Xi^\e_{tr}(\bp)-1\Big)}}-1\right), \qquad \Xi^\e_{tr}(\bp) = \Big(1+\xi_\e\frac{\bp\cdot\bk_0}{\cos^2(\theta^0_{tr})}\Big)^2 + \xi^2_\e(\bp\cdot \bk_0^\perp)^2,
\] 
which is equivalent to \eqref{eq:tr_formula_intro}. For $\gamma \in (1/2,1)$, this yields the perturbative relation
\[
\frac{\sin(\theta^\e_{tr}(\bp))}{c_1} = \frac{\sin(\theta_{inc})}{c_0}+\e^{1-\gamma} \frac{\bp\cdot \widehat \bk_0}{\cos^2(\theta^0_{tr})} +\CalO(\e^{2(1-\gamma)}|\bp|^2),
\] 
with $\widehat \bk_0 = \bk_0 / |\bk_0|$. 
This provides the approximation
\[
\theta^\e_{tr}(\bp) = \theta^0_{tr} + \e^{1-\gamma}\frac{c_1\bp\cdot \widehat \bk_0}{\cos^3(\theta^0_{tr})} + \CalO(\e^{2(1-\gamma)}|\bp|^2),
\]
which yields \eqref{eq:correction2} recalling    \eqref{eq:rel_eps} .
 
Finally, for a null slowness vector $\bk_0=0$ (i.e.\ $\theta_{inc}=0$), the transmission angle satisfies $\tan(\theta^\e_{tr}(\bp)) = \e^{1-\gamma}c_1 |\bp|$, so that
\ba\label{eq:tant}
\theta^\e_{tr}(\bp) = \arctan(\e^{1-\gamma}c_1 |\bp|).
\ea

\begin{figure}
\begin{center}
\includegraphics*[scale=0.35]{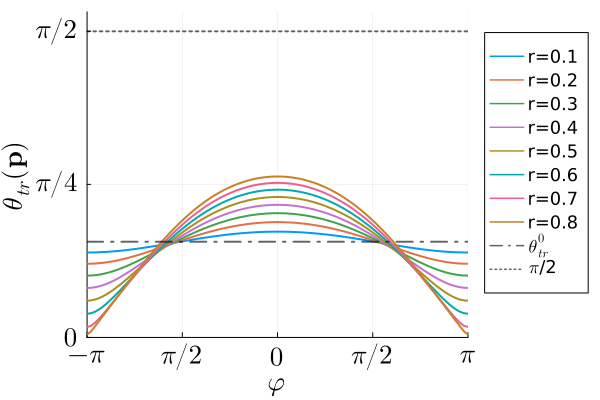} \hspace{1.5cm}
\includegraphics*[scale=0.35]{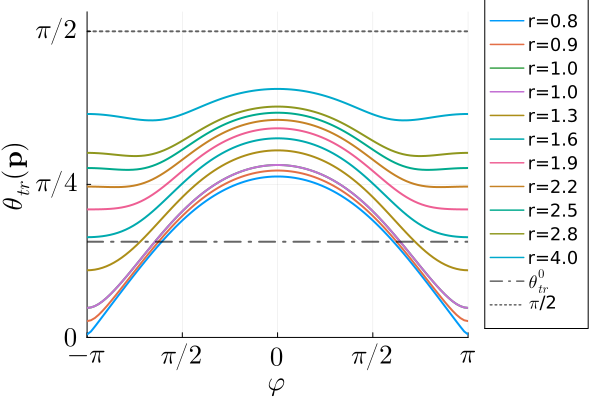} 
\end{center}
\caption{
Illustration of the specular transmission angle $\theta^0_{tr}$ (dash-dot line) and the transmission angle $\theta_{tr}(\bp)\in (0,\pi/2)$, for an incident angle $\theta_{inc}=\pi/4$, $\gamma=1$, $c_0=1.5$, $c_1=1$, and $\bp = \beta_1 \bk_0 + \beta_2 \bk_0^\perp$ with $(\beta_1,\beta_2)=r (\cos(\varphi),\sin(\varphi))$.
}
\end{figure}

\subsection{Statistics of the incoherent wave fluctuations}

The transmitted speckle in the Fourier domain is given by
\[
\hat S^{\e,tr}(\bar s,\bar \by, \by, \omega, \bp) := \e^{1-2\gamma} \iint e^{i\omega(\widetilde s - \bp \cdot \widetilde \by)}  u^{\e,tr}\big( t^\e_{obs, tr}(\bar s,\bar \by, \by,\widetilde \by) +\e\widetilde s, \bx^\e_{obs,tr}(\bar \by, \by) + \e^\gamma \widetilde \by \big) d\widetilde s d\widetilde \by.
\]
As described by \eqref{def:Cor_func_tr}, the position and time at which we observe the incoherent wave fluctuations away from the specular component can be expressed in terms of the scattered direction $\bp$:
\begin{equation}\label{eq:y_s_speckle_tr}
\by^{tr}_{\bp} := (z_{tr}-z_{int})c_1 A_1 \bp \qquad \text{and}\qquad s^{tr}_{\bp}:=\bp\cdot\by^{tr}_\bp/2 = (z_{tr}-z_{int}) c_1 \bp^T A_1 \bp/2.
\end{equation}
As for the reflected speckle profile, due to the singular nature of the correlation function, which involves Dirac masses at the observation time and position, we instead study a tempered version of $\hat S^{\e,tr}$:
\[
\hat \CalS^{\e,tr}_{\by}(\bar s,\bar \by,\omega, \bp) := \hat S^{\e,tr}(\bar s, \bar \by, \by, \omega, \bp)\frac{1}{\e^{3(\gamma-1/2)}}\varphi^{1/2}\Big(2\frac{\bar s- s^{tr}_{\bp}}{\e^{2(\gamma-1/2)}},2\frac{\bar \by-\by^{tr}_{\bp}}{\e^{2(\gamma-1/2)}}\Big),
\]
for a fixed $\by$, where we multiply by the square root of a symmetric mollifier in order to smooth the correlation function around $\bar s=s^{tr}_{\bp}$ and $\bar \by=\by^{tr}_{\bp}$. The results characterizing the transmitted speckle are as follows.

\begin{theorem}\label{th:speckle_tr} 
For $n \geq 1$ and any fixed strictly different $\by_1,\dots,\by_n \in \R^2$, the family  $(\hat \CalS^{\e,tr}_{\by_1},\dots, \hat \CalS^{\e,tr}_{\by_n})_\e$ converges in law in $\CalS'_n$ to $(\hat \CalS^{tr}_{\by_1},\dots, \hat \CalS^{tr}_{\by_n})$ made of $n$ independent copies of a complex mean-zero  Gaussian random field $\hat \CalS^{tr}$ with covariance functions similar to \eqref{eq:cov_1} and \eqref{eq:cov_2} but with kernel
\begin{align*}
\hat \CalK_{tr}(\bar s_1,\bar s_2,\bar \by_1,\bar \by_2,\omega_1,\omega_2,d\bp_1,d\bp_2) & = \frac{(2\pi)^3\CalT^2 \bcals_0}{4\bcals_1}  \CalA(\bcals_0-\bcals_1, \omega_1, \bp_1) \vert\hat\Psi\vert^2_2(\omega_1)\hat \varphi(\omega_1,\bp_1)\\
&\hspace{0.5cm}  \times \delta(\omega_1-\omega_2)\delta(\bp_1-\bp_2)   \delta(\bar s_1-\bar s_2)\delta(\bar \by_1-\bar \by_2)\delta(\bar s_1-s^{tr}_{\bp_1})  \delta(\bar \by_1-\by^{tr}_{\bp_1})d\bp_1 d\bp_2.
\end{align*} 
\end{theorem}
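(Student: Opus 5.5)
The argument mirrors the reflected case, Theorem \ref{th:speckle} and Corollary \ref{cor:vector_speckle}. The only changes are: $2\bcals_0$ is replaced by $\bcals_0-\bcals_1$ in the phase $e^{i\omega\tau V(\bx'/\e^\gamma)}$; the outgoing propagator $\hat\CalU_0(\omega,\bp/\e^{\gamma-1/2},z_{int})$ becomes $\hat\CalU_1(\omega,\bp/\e^{\gamma-1/2},z_{tr}-z_{int})$; and the prefactor $\CalR$ becomes $\CalT\sqrt{\bcals_0/\bcals_1}$. Starting from the leading-order representation \eqref{eq:ue_tr} of Lemma \ref{lem:tech}, I write $\hat\CalS^{\e,tr}_{\by}$ in the analogue of \eqref{eq:S_eps}. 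For test functions $\phi_1,\dots,\phi_n$, the pairing $\langle\hat\CalS^{\e,tr}_{\by_j},\phi_j\rangle$ is then an integral over $\bx'$ of $e^{i\omega(\bcals_0-\bcals_1)V(\bx'/\e^\gamma)}$ against a deterministic kernel that oscillates at the beam-width scale. The mean-zero property in the limit follows exactly as in \eqref{eq:mean_Se_ref}: the mean carries an extra factor $\e^{\gamma-1/2}$.

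Convergence in law in $\CalS'_n$ will be proved via Mitoma's criterion \cite{mitoma}, as in Proposition \ref{prop:random_travel_time}. Tightness follows from a uniform second-moment bound $\sup_\e\E|\langle\hat\CalS^{\e,tr}_{\by_j},\phi\rangle|^2\le C\|\phi\|^2$ for a suitable Schwartz seminorm, which I take from the covariance computation. For that computation I follow the proof of Proposition \ref{prop:exp_corr}. In $\E[e^{i\omega_1\tau V(\bx'_1/\e^\gamma)}e^{-i\omega_2\tau V(\bx'_2/\e^\gamma)}]$, I substitute $\bx'_2=\bx'_1+\e^\gamma\by'$. Lemma \ref{lem:mixing}(ii) together with stationarity then gives $\E[e^{i\omega\tau(V(\by')-V(0))}]$ when $\omega_1=\omega_2$. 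Integrating in $\by'$ produces $\CalA(\bcals_0-\bcals_1,\omega,\bp)$. The remaining factor $|\E e^{i\omega\tau V}|^2$ is the specular part, and it is negligible after the $\e^{\gamma-1/2}$ scaling. Stationary phase in $(\bx'_1,\bq')$, combined with the expansion \eqref{eq:exp_lambda} for $\bcals_1$ and the time/space shifts in $t^\e_{obs,tr},\bx^\e_{obs,tr}$, localizes at $\bar\by=\by^{tr}_\bp$ and $\bar s=s^{tr}_\bp$. The mollifier $\varphi^{1/2}$ squares to produce $\hat\varphi$, and the $\bq'$ integral yields $|\hat\Psi|_2^2$. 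Together these give $\hat\CalK_{tr}$. The pseudo-covariance \eqref{eq:cov_1} is handled the same way with $(\omega_1,-\omega_2)$. For $\by_j\neq\by_l$ the cross term carries the phase $e^{i\omega\bp\cdot(\by_j-\by_l)/\e^{\gamma-1/2}}$, which vanishes by Riemann–Lebesgue and gives asymptotic uncorrelatedness.

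The main obstacle is Gaussianity, i.e. convergence of the finite-dimensional distributions of the joint vector. I would use the method of moments. I compute $\E[\prod_{k}\langle\hat\CalS^{\e,tr}_{\by_{j_k}},\phi_k\rangle\prod_l\overline{\langle\cdots\rangle}]$ by splitting the $2m$ interface points $\bx'_k/\e^\gamma$ into clusters. Lemma \ref{lem:mixing}, applied with $\rho$-mixing at separations $\gg\e^\gamma$, factorizes the expectation over clusters. Three cases then arise. Singleton clusters are $O(\e^{\gamma-1/2})$, as for the mean. Pair clusters give the Wick pairings with the covariances above. Clusters of size $\ge3$ are negligible: they constrain two or more relative positions to the $\e^\gamma$ scale while the normalization $\e^{1-2\gamma}\e^{-3(\gamma-1/2)}$ only compensates one, which leaves a positive power of $\e$. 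Controlling these higher cumulants uniformly needs a quantitative version of Lemma \ref{lem:mixing} (an integrable $\rho(r)$, or dominated convergence over cluster configurations), and I expect this bookkeeping to be the main technical difficulty. Since the Wick structure identifies complex Gaussian moments, which are determined by their moments, the finite-dimensional laws converge to those of $n$ independent copies of $\hat\CalS^{tr}$. Mitoma's criterion then completes the proof.
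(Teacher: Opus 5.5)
Your outline is sound. The substitutions you list ($2\bcals_0\to\bcals_0-\bcals_1$ in the random phase, $\hat\CalU_1(\omega,\cdot,z_{tr}-z_{int})$ on the outgoing leg, and the prefactor $\CalT\sqrt{\bcals_0/\bcals_1}$, whose square gives the constant in $\hat\CalK_{tr}$) are what the paper means when it says the proof repeats that of Theorem~\ref{th:speckle} plus the beam-scale decorrelation behind Corollary~\ref{cor:vector_speckle}.

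Where you differ is in how the moments are organized. The paper splits $J^\e=J^\e_++J^\e_-$ by the sign of $\omega$. Any product with unequal numbers of $J^\e_+$ and $J^\e_-$ factors, including every odd moment, vanishes because the phases $e^{\mp i\omega s^{tr}_{\bp}/\e^{2\gamma-1}}$ cannot all be compensated. In the balanced term the frequencies are ordered, so the $j$-th positive frequency can only pair with the $j$-th negative one. The interface points are then paired as $\br'_j\pm\e^\gamma\by'_j/2$, and Lemma~\ref{lem:mixing} is applied at distinct $\br'_j$. You instead cluster the interface points in space and use CLT-type counting, so frequency matching and the covariance/pseudo-covariance distinction enter only inside each pair computation. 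The ordering device gives a canonical pairing with no sum over partitions and makes the vanishing of unbalanced terms transparent. However, it silently ignores configurations with three or more interface points within $\e^\gamma$ of one another, and it uses the mixing lemma only pointwise. Your bookkeeping confronts exactly those configurations: each extra point confined to an $\e^\gamma$-neighbourhood costs $\e^{2\gamma-1}$ relative to free placement in the $\se$-footprint. You are also right that an integrable mixing rate, not the qualitative Lemma~\ref{lem:mixing}, is what justifies passing to the limit under the integrals. The price of your route is a sum over partitions.

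One point to make consistent. Setting aside the product of means $\phi_V(\omega_1\tau)\overline{\phi_V(\omega_2\tau)}$, with $\tau=\bcals_0-\bcals_1$, removes exactly $\E\langle\hat\CalS^{\e,tr}_{\by},\phi_1\rangle\,\overline{\E\langle\hat\CalS^{\e,tr}_{\by},\phi_2\rangle}=\CalO(\e^{2\gamma-1})$. What the $\by'$-integral then yields is the diffuse part of $\CalA(\bcals_0-\bcals_1,\omega,\cdot)$, i.e.\ $\CalA$ without its atom at $\bp=0$, not the full measure. Your Riemann--Lebesgue step for $\by_j\neq\by_l$ needs exactly this, because $e^{i\omega\bp\cdot(\by_j-\by_l)/\e^{\gamma-1/2}}$ does not oscillate at $\bp=0$. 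Under an integrable mixing rate the diffuse part has a continuous density. So in $\hat\CalK_{tr}$ the measure $\CalA$ must be read with its specular atom removed; with that reading your argument gives the stated limit.
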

The proof of Theorem \ref{th:speckle_tr} is omitted since it follows the same steps as the proof of Theorem \ref{th:speckle} (see Section \ref{sec:proof_th_speckle}) together with Corollary \ref{cor:vector_speckle}. In particular, speckles observed at distances of order at least the beam width are independent. 

For the real-valued tempered random field
\[
\CalS^{\e,tr}_{\by}(\bar s,\bar \by, \widetilde s, \widetilde \by) := \frac{1}{(2\pi)^3}\iint e^{-i\omega(\widetilde s - \bp\cdot\widetilde \by)} \hat \CalS^{\e,tr}_{\by}(\bar s,\bar \by, \omega, \bp) \omega^2 d\omega d \bp,
\]
which represents the speckle signal around times $s^{tr}_{\bp}$ and positions $\by^{tr}_{\bp}$, we have the following result.
\begin{corollary}
For $n \geq 1$ and any fixed $\by_1,\dots,\by_n \in \R^2$, the family  $(\CalS^{\e,tr}_{\by_1},\dots, \CalS^{\e,tr}_{\by_n})_\e$ converges in law in $\CalS'_n$ to $( \CalS^{tr}_1,\dots, \CalS^{tr}_n)$ made of $n$ independent copies of a real valued mean-zero  Gaussian random field $\CalS^{tr}$ with a covariance function similar to \eqref{eq:cov_3}, but with kernel
\begin{align*}
\CalK_{tr}(\bar s_1, \bar s_2, \bar \by_1, \bar \by_2,\widetilde s,\widetilde \by) & =  \frac{\CalT^2 \bcals_0}{4\bcals_1(2\pi)^3} \iint e^{-i\omega(\widetilde s - \bp\cdot\widetilde \by)} \CalA(\bcals_0-\bcals_1, \omega, \bp) \vert\hat\Psi\vert^2_2(\omega)\hat \varphi(\omega,\bp)\\
&\hspace{2cm}  \times \delta(\bar s_1-s^{tr}_{\bp})  \delta(\bar \by_1-\by^{tr}_{\bp}) \delta(\bar s_1-\bar s_2)\delta(\bar \by_1-\bar \by_2)\omega^2 d\omega d\bp\\
&= \frac{\CalT^2 c_1^2 \bcals_0\bcals_1^3 }{4(2\pi)^3(z_{tr}-z_{int})^2} \int e^{-i\omega (\widetilde s - \bar \by ^T A_1^{-1} \widetilde \by/((z_{tr}-z_{int})c_1))}  |\hat \Psi|^2_2(\omega) \\
& \hspace{3cm}\times \CalA\Big(\bcals_0-\bcals_1,\omega,\frac{A_1^{-1} \bar \by}{(z_{tr}-z_{int})c_1}\Big) \hat \varphi\Big(\omega,\frac{A_1^{-1}\bar \by}{(z_{tr}-z_{int})c_1}\Big)  \omega^2 d\omega\\
& \hspace{3cm}\times \delta\Big(\bar s_1 - \frac{\bar \by^T_1  A_1^{-1}\bar \by_1}{2(z_{tr}-z_{int})c_1}\Big)\delta(\bar s_1-\bar s_2)\delta(\bar \by_1-\bar \by_2).
\end{align*}
Here, $A^{-1}_1$ given by \eqref{def:A1_inv}.
\end{corollary}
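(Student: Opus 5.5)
This corollary for the transmitted speckle should follow from Theorem \ref{th:speckle_tr}, exactly as Corollary \ref{corr:7.2} follows from Theorem \ref{th:speckle} and Corollary \ref{cor:vector_speckle}. The field $\CalS^{\e,tr}_{\by}$ is the image of $\hat \CalS^{\e,tr}_{\by}$ under a fixed linear map, namely the inverse scaled Fourier transform in $(\omega,\bp)\mapsto(\widetilde s,\widetilde\by)$. So the plan is to transfer convergence in law through this map and then compute the image of the covariance kernel $\hat\CalK_{tr}$.

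\emph{Step 1: duality.} For a real test function $\phi\in\CalS(\CalX\times\CalX)$ I will write
\[
\langle \CalS^{\e,tr}_{\by},\phi\rangle = \langle \hat\CalS^{\e,tr}_{\by},\check\phi\rangle, \qquad \check\phi(\bar s,\bar\by,\omega,\bp):=\frac{\omega^2}{(2\pi)^3}\iint e^{i\omega(\widetilde s-\bp\cdot\widetilde\by)}\phi(\bar s,\bar\by,\widetilde s,\widetilde\by)\,d\widetilde s\, d\widetilde\by .
\]
Here $\check\phi$ must be a legitimate test function, or at least the pairing must extend continuously. The factor $\omega^2$ and the scaling $\omega\bp$ create a degeneracy at $\omega=0$. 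I would handle it either by using that $\hat\Psi(\omega,\cdot)$ vanishes near $\omega=0$, since it is bounded away from zero in $\omega$ only on its compact support, so that the prelimit fields are supported away from $\omega=0$; or by restricting to test functions whose image is Schwartz and invoking density together with the uniform second-moment bounds available from the proof of Theorem \ref{th:speckle}. This is the main technical obstacle. The rest is algebra.

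\emph{Step 2: convergence in law.} The map $\phi\mapsto\check\phi$ is linear and continuous on the relevant class. Hence the joint convergence in law of $(\hat\CalS^{\e,tr}_{\by_1},\dots,\hat\CalS^{\e,tr}_{\by_n})$ from Theorem \ref{th:speckle_tr} implies convergence of all finite-dimensional distributions $(\langle\CalS^{\e,tr}_{\by_j},\phi_j\rangle)_j$. Tightness in $\CalS'$ transfers as well, through Mitoma's criterion, since it only requires tightness of each pairing. The limit is therefore $(\CalS^{tr}_j)_j$ with $\CalS^{tr}_j(\phi)=\hat\CalS^{tr}_j(\check\phi)$. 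These limits are Gaussian and mean zero, and they are independent because the $\hat\CalS^{tr}_j$ are independent. Although the corollary does not require the $\by_j$ to be distinct, I would read it with distinct points as in the theorem.

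\emph{Step 3: real-valuedness and covariance.} The symmetry $\overline{\check\phi(\omega,\bp)}=\check\phi(-\omega,\bp)$ for real $\phi$, together with the reality of $u^{\e,tr}$, gives that $\CalS^{tr}$ is real. I would then plug $\check\phi_1,\check\phi_2$ into \eqref{eq:cov_2}, the analogue of the covariance formula with kernel $\hat\CalK_{tr}$. The Dirac masses $\delta(\omega_1-\omega_2)\delta(\bp_1-\bp_2)$ collapse the double Fourier integrals into a single one, $\iint e^{-i\omega((\widetilde s_1-\widetilde s_2)-\bp\cdot(\widetilde\by_1-\widetilde\by_2))}\cdots\omega^2 d\omega\, d\bp$. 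This yields the first expression of $\CalK_{tr}$ and shows stationarity in $(\widetilde s,\widetilde\by)$. Finally I would use the Dirac mass $\delta(\bar\by_1-\by^{tr}_\bp)$ with the change of variables $\bp=A_1^{-1}\bar\by_1/((z_{tr}-z_{int})c_1)$. The Jacobian is $\det(A_1^{-1})/((z_{tr}-z_{int})c_1)^2$. From \eqref{def:A1_inv}, $\det A_1^{-1}=c_1^2\bcals_1^2(1-c_1^2|\bk_0|^2)=c_1^4\bcals_1^4$. Combined with the prefactor $\bcals_0/\bcals_1$, this gives the constant $\CalT^2c_1^2\bcals_0\bcals_1^3/(4(2\pi)^3(z_{tr}-z_{int})^2)$. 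Substituting into $s^{tr}_\bp$ gives the ellipse Dirac mass, which completes the second expression.
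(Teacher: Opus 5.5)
Your proposal is correct and follows the route the paper implicitly takes. The paper gives no separate proof; the corollary is the image of Theorem \ref{th:speckle_tr} under the inverse scaled Fourier transform $(\omega,\bp)\mapsto(\widetilde s,\widetilde\by)$, followed by collapsing the Dirac masses and substituting $\bp=A_1^{-1}\bar\by_1/((z_{tr}-z_{int})c_1)$. Your Jacobian $c_1^4\bcals_1^4/((z_{tr}-z_{int})c_1)^2$ and the resulting constant are correct. Your points about cutting off near $\omega=0$ using the $\omega$-support of $\hat\Psi$, and about needing distinct $\by_j$, are also well taken.

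One bookkeeping caveat concerns the power of $\omega$. Your $\check\phi$ carries a factor $\omega^2$, so inserting $\check\phi_1,\check\phi_2$ into $\hat\CalK_{tr}$ exactly as printed gives $\omega^4\,d\omega\,d\bp$, not $\omega^2\,d\omega\,d\bp$. The printed $\CalK_{tr}$ is recovered only if the Dirac mass in $\hat\CalK_{tr}$ is read as $\delta(\omega_1\bp_1-\omega_2\bp_2)=\omega_1^{-2}\delta(\bp_1-\bp_2)$, which is what stationarity in $(\widetilde s,\widetilde\by)$ produces under this scaled transform. This normalization slip is already in the paper (the same mismatch appears between Theorem \ref{th:speckle} and Corollary \ref{corr:7.2}), so track the power of $\omega$ explicitly rather than asserting $\omega^2$.
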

The covariance $\CalK_{tr}$ also provides a direct characterization of the Gaussian speckle as it reaches the observation plane $z=z_{tr}$ as a function of time, and it is supported on the ellipse
$\bar s=\bar \by^T A^{-1}_1\bar \by/(2(z_{tr}-z_{int})c_1)$.

\section{Proof of Propositions \ref{prop:exp_corr} and \ref{prop:cv_corr}}\label{sec:proof1}

This section is devoted to the proof of the convergence of the expected correlation function, as well as to its statistical stability. The proof is presented in a way that naturally justifies the scalings introduced in \eqref{def:Se_ref}--\eqref{def:t_obs_eps}.

\subsection{Proof of Proposition \ref{prop:exp_corr}}
\label{sec:proof_prop_exp_cor}

Recalling \eqref{eq:ue_ref}, the expected correlation function for the reflected wavefield reads, at the leading order,
\begin{align*}
C_\e & := \E[u^{\e,ref}(t_1,\bx_1, z=0)u^{\e,ref}(t_2,\bx_2, z=0)] \\
& \simeq \frac{\CalR^2}{4(2\pi)^{10} \e^2 }\int\dots\int e^{-i\omega_1(t_1 - 2\bcals_0z_{int}-\bx_1\cdot \bk_0 ) /\e} e^{i\omega_2(t_2 - 2\bcals_0z_{int}-\bx_2\cdot \bk_0 ) /\e} \\
& \hspace{3cm} \times \E\big[ e^{2i\omega_1 \bcals_0 V(\bx'_1/\e^\gamma)}e^{-2i\omega_2 \bcals_0 V(\bx'_2/\e^\gamma)} \big] \\
& \hspace{3cm} \times e^{i \omega_1 \bq_1 \cdot (\bx_1 - \bx'_1 - \bx_{int})/\se} e^{i \omega_1 \bq'_1 \cdot (\bx'_1 - \bx_{int})/\se}\\
& \hspace{3cm} \times e^{-i \omega_2 \bq_2 \cdot (\bx_2 - \bx'_2 - \bx_{int})/\se} e^{-i \omega_2 \bq'_2 \cdot (\bx'_2 - \bx_{int})/\se}\\
& \hspace{3cm} \times \hat \CalU_0(\omega_1,\bq_1, z_{int}) \hat \CalU_0(\omega_1,\bq'_1, z_{int})\overline{\hat \CalU_0(\omega_2,\bq_2, z_{int})\hat \CalU_0(\omega_2,\bq'_2, z_{int})} \\
& \hspace{3cm} \times \hat \Psi(\omega_1, \bq'_1) \overline{\hat \Psi(\omega_2, \bq'_2)} \omega^4_1 \omega^4_2 d\omega_1 d\omega_2 d\bx'_1 d\bx'_2 d\bq'_1 d\bq'_2 d\bq_1 d\bq_2.
\end{align*}
For notational convenience, we denote by $L_j$, $j=1,\dots,6$, the factors corresponding to each line of the above expression, so that
\ba\label{eq:Ls}
C_\e \simeq \frac{\CalR^2}{4(2\pi)^{10} \e^2 }\int\dots\int L_1 \times L_2 \times L_3 \times L_4 \times L_5 \times L_6 \,\omega^4_1 \omega^4_2 d\omega_1 d\omega_2 d\bx'_1 d\bx'_2 d\bq'_1 d\bq'_2 d\bq_1 d\bq_2.
\ea
To obtain a non-trivial limit for the factor $L_2$, we introduce the change of variables
\begin{equation*}
\bx'_1 \to \bx_{int} + \br' + \e^\gamma \by'/2 \qquad\text{and}\qquad \bx'_2 \to \bx_{int} + \br' - \e^\gamma \by'/2,
\end{equation*}
so that by stationarity 
\begin{equation}\label{eq:T2}
L_2 = \e^{2\gamma}\E\Big[ e^{2i\omega_1 \bcals_0 V((\bx_{int}+\br')/\e^\gamma +\by'/2)}e^{-2i\omega_2 \bcals_0 V((\bx_{int}+\br')/\e^\gamma -\by'/2)} \Big] = \e^{2\gamma}\E\Big[ e^{2i\omega_1 \bcals_0 V(\by'/2)}e^{-2i\omega_2 \bcals_0 V(-\by'/2)} \Big],
\end{equation}
where the term $\e^{2\gamma}$ comes from the changes of variables. With these two changes of variables, the product $L_3 \times L_4$ becomes
\begin{align*}
L_3 \times L_4 & = e^{i \br' \cdot (\omega_1 (\bq'_1-\bq_1)-\omega_2(\bq'_2-\bq_2))/\se} \\
& \hspace{0.5cm}\times  e^{i \e^{\gamma-1/2} \by' \cdot (\omega_1 (\bq'_1-\bq_1) + \omega_2(\bq'_2-\bq_2))/2} \\
&  \hspace{0.5cm} \times e^{i \omega_1 \bq_1 \cdot (\bx_1 - \bx_{obs,ref})/\se}  e^{-i \omega_2 \bq_2 \cdot (\bx_2 - \bx_{obs,ref})/\se}.
\end{align*}

Since $\br'$ appears only in the first exponential factor $e^{i \br' \cdot (\omega_1 (\bq'_1-\bq_1)-\omega_2(\bq'_2-\bq_2))/\se}$, integration with respect to $\br'$ yields the Dirac distribution
\[
(2\pi)^2 \e \delta(\omega_1 (\bq'_1-\bq_1)-\omega_2(\bq'_2-\bq_2)).
\]
In order to retain the $\by'$-dependence appearing in $L_2$, we further introduce the change of variables
\begin{equation*}
\bq_j \to \bq'_j - \bp_j/\e^{\gamma-1/2} \qquad j=1,2,
\end{equation*}
which yields
\begin{align*}
\int L_3 \times L_4 \, d\br' & = (2\pi)^2\e^{-2\gamma+2} \delta(\omega_1 \bp_1 -\omega_2 \bp_2 ) \, e^{i \by' \cdot (\omega_1 \bp_1 + \omega_2 \bp_2)/2}e^{- i (\omega_1 \bp_1 \cdot \bx_1 - \omega_2 \bp_2 \cdot \bx_2 )/\e^\gamma} \\
& \hspace{0.5cm} \times e^{i \omega_1 \bq'_1 \cdot (\bx_1 - \bx_{obs,ref})/\se} e^{-i \omega_2 \bq'_2 \cdot (\bx_2 - \bx_{obs,ref})/\se},
\end{align*}
using the presence of the Dirac mass to write
\[
e^{- i \omega_1 \bp_1 \cdot (\bx_1 - \bx_{obs,ref})/\e^\gamma}e^{ i \omega_2 \bp_2 \cdot (\bx_2 - \bx_{obs,ref})/\e^\gamma} = e^{- i \omega_1 \bp_1 \cdot \bx_1 /\e^\gamma}e^{ i \omega_2 \bp_2 \cdot \bx_2 /\e^\gamma}.  
\]
We then perform the change of variables $\omega_j \bp_j \mapsto \bp_j$, so that 
$\omega_1^4 \omega_2^4$ in \eqref{eq:Ls} is replaced by  $\omega_1^2 \omega_2^2$, and obtain
\begin{align*}
\int L_3 \times L_4 \, d\br' & = (2\pi)^2\e^{-2\gamma+2}  \delta(\bp_1 -\bp_2 ) \, e^{i \by' \cdot  \bp_1 }e^{- i \bp_1 \cdot (\bx_1 - \bx_2 )/\e^\gamma} \\
& \hspace{0.5cm} \times e^{i \omega_1 \bq'_1 \cdot (\bx_1 - \bx_{obs,ref})/\se} e^{-i \omega_2 \bq'_2 \cdot (\bx_2 - \bx_{obs,ref})/\se}.
\end{align*}
The rapid phase $e^{- i \bp_1 \cdot (\bx_1 - \bx_2 )/\e^\gamma}$ in the above equation and
the corresponding phase term involving $t_j$ in $L_1$ motivates the choice
\begin{equation}\label{eq:scaling1}
\bx_j = \bx + \e^\gamma \widetilde \by_j,\qquad\text{and}\qquad t_j = t + \e^\gamma\bk_0\cdot \widetilde \by_j + \e \widetilde s_j\qquad j=1,2,
\end{equation}
where we will specify  $\bx$  and $t$  below. With this choice, we obtain
\begin{align*}
\int L_3 \times L_4 \, d\br' & = (2\pi)^2 \e^{-2\gamma+2}  \delta(\bp_1 -\bp_2 ) \, e^{i \bp_1 \cdot   \by' }e^{- i \bp_1 \cdot (\widetilde \by_1 - \widetilde \by_2 )} \\
& \hspace{0.5cm} \times e^{i  (\bx - \bx_{obs,ref})\cdot (\omega_1 \bq'_1-\omega_2 \bq'_2 )/\se}  e^{i \e^{\gamma-1/2} ( \omega_1 \bq'_1 \cdot \widetilde \by_1 - \omega_2 \bq'_2 \cdot \widetilde \by_2)},
\end{align*}
while the factor $L_1$ reduces to
\[
L_1 = e^{-i(\omega_1-\omega_2)(t - 2\bcals_0z_{int} -\bx\cdot \bk_0 ) /\e}  e^{-i(\omega_1 \widetilde s_1 -\omega_2 \widetilde s_2)}.
\]
Regarding the term $L_5$, after all the changes of variables introduced above, we obtain
\begin{align*}
L_5 & = \hat \CalU_0\Big(\omega_1,\bq'_1-\frac{\bp_1}{\omega_1\e^{\gamma-1/2}}, z_{int} \Big) \hat \CalU_0(\omega_1,\bq'_1, z_{int})\\
&\hspace{0.5cm}\times \overline{\hat \CalU_0\Big(\omega_2,\bq'_2-\frac{\bp_1}{\omega_2\e^{\gamma-1/2}}, z_{int}\Big) \hat \CalU_0(\omega_2,\bq'_2, z_{int})}\\
& = e^{-i z_{int}c_0 \bp_1^T A_0 \bp_1 (1/\omega_1 - 1/\omega_2)/\e^{2\gamma-1} } e^{2iz_{int}c_0 (\bq'_1-\bq'_2)^T A_0 \bp_1/\e^{\gamma-1/2}} \, \hat \CalU_0(\omega_1,\bq'_1, 2z_{int})\overline{\hat \CalU_0(\omega_2,\bq'_2, 2z_{int})}.
\end{align*}
The presence of the two rapidly oscillating phase factors suggests introducing the changes of variables
\begin{equation}\label{eq:chg_var3}
\omega_1 = \omega + \e^{2\gamma-1}h/2,\qquad \omega_2 = \omega - \e^{2\gamma-1}h/2,
\end{equation}
and
\begin{equation*}
\bq'_1 = \bq + \e^{\gamma-1/2}\br/2,\qquad \bq'_2 = \bq - \e^{\gamma-1/2}\br/2.
\end{equation*}
At the leading order in $\e$, this yields
\begin{align} \label{def:term34}
L_5\times L_6 & \simeq \e^{4\gamma-2} e^{i h z_{int}c_0 \bp_1^T A_0 \bp_1 /\omega^2} e^{2iz_{int}c_0 \br^T A_0 \bp_1} |\hat \Psi(\omega,\bq)|^2 \omega^4,\nonumber\\
L_2 & \simeq \e^{2\gamma }\E\big[ e^{2i\omega \bcals_0 ( V(\by') - V(0)) } \big], \nonumber \\
\int L_3 \times L_4 \, d\br' & \simeq (2\pi)^2\e^{-2\gamma+2}  \delta(\bp_1 -\bp_2 ) \, e^{i  \bp_1 \cdot   \by'}e^{- i \bp_1 \cdot (\widetilde \by_1 - \widetilde \by_2 )}  e^{i \omega (\bx - \bx_{obs,ref})\cdot \br /\e^{1-\gamma}}  e^{i h \e^{2\gamma-3/2} (\bx - \bx_{obs,ref})\cdot \bq},\\
L_1 & \simeq  e^{-ih(t - 2\bcals_0z_{int} -\bx\cdot \bk_0 ) /\e^{2(1-\gamma)}}e^{-i\omega (\widetilde s_1 -\widetilde s_2)}.\nonumber
\end{align}
Here, we have used the stationarity of $V$ in \eqref{eq:T2} to simplify the expression of $L_2$. The last two relations naturally motivate the choice of observation variables
\begin{equation*}\label{eq:scaling2}
\bx = \bx_{obs,ref} + \e^{1-\gamma} \bar \by + \se\, \by \qquad\text{and}\qquad t = t_{obs,ref} + \e^{1-\gamma} \bk_0 \cdot \bar \by+ \se\, \bk_0 \cdot \by + \e^{2(1-\gamma)} \bar s.
\end{equation*}
Gathering all terms, and after the change of variable $\bp \mapsto -\omega \bp$, we obtain
\begin{align*}
C_\e & \simeq \frac{ \e^{4(\gamma-1/2)} \CalR^2  }{4(2\pi)^{8} }\int\dots\int  e^{-i\omega (\widetilde s_1 - \widetilde s_2 - \bp\cdot(\widetilde \by_1 - \widetilde \by_2))} e^{i\omega \br \cdot\bar \by}e^{-i h \bar s}  e^{- i\omega \bp \cdot \by'} \E\big[ e^{2i\omega \bcals_0 ( V(\by') - V(0)) } \big]\\
& \hspace{4cm} \times e^{-2 i \omega z_{int}c_0 \br^T A_0\bp} e^{i h z_{int}c_0 \bp^T A_0 \bp } |\hat \Psi(\omega, \bq )|^2 \,\omega^6 dh d\omega d \bq d\by'  d\br d\bp.
\end{align*} 
after the change of variable $\bp \to -\omega \bp$. As a consequence, the limit does not depend on the variable $\by$. Therefore, for any test functions $\varphi$, $\psi$, and $\phi$, we obtain
\begin{align*}
\lim_{\e\to 0}\iiint \E\big[ & \big \langle S^{\e,ref}(\bar s,\bar\by,\by),\varphi\big\rangle_{\CalS',\CalS} \big \langle S^{\e,ref}(\bar s,\bar\by,\by),\psi\big\rangle_{\CalS',\CalS} \big] \phi(\bar s,\bar\by,\by) d\bar s d\bar \by d\by \\
& = \int\dots\int \bC^{ref}(\bar s,\bar \by,\widetilde s_1-\widetilde s_2,\widetilde \by_1-\widetilde \by_2) \varphi(\widetilde s_1,\widetilde \by_1)\psi(\widetilde s_2,\widetilde \by_2)\phi(\bar s,\bar \by,\by) d\bar s d\bar \by d\by d\widetilde s_1 d\widetilde s_2 d\widetilde \by_1 d\widetilde \by_2,
\end{align*}
where $\bC^{ref}$ is defined by \eqref{def:Cor_func}. This concludes the proof of Proposition~\ref{prop:exp_corr}.

Let us finally remark that if we consider two different beam-scale variables $\by_1$ and $\by_2$ 
so that in \eqref{eq:scaling1} and \eqref{eq:scaling1} $\bx_1-\bx_2= \CalO(\se)$,    then  an additional oscillatory factor
\[
e^{i\bp_1\cdot (\by_1-\by_2)/\e^{\gamma-1/2}}
\]
appears in \eqref{def:term34}. In this  case, we obtain asymptotic decorrelation at the scale of the beam width:
\begin{align*}
\lim_{\e\to 0}\int\dots\int & \E\big[ S^{\e,ref}(\bar s,\bar\by,\by_1,\widetilde s_1,\widetilde \by_1) S^{\e,ref}(\bar s,\bar\by,\by_2,\widetilde s_2,\widetilde \by_2)\big] \\
& \times \varphi(\by_1,\widetilde s_1,\widetilde \by_1)\psi(\by_2,\widetilde s_2,\widetilde \by_2)\phi(\bar s,\bar \by,\by) d\bar s d\bar \by d\by_1 d\by_2 d\widetilde s_1 d\widetilde s_2 d\widetilde \by_1 d\widetilde \by_2 = 0.
\end{align*}
This absence of correlation at the scale of the beam width underlies the independence properties stated in Corollaries \ref{cor:vector_speckle} and \ref{corr:7.2}, as well as the corresponding results for the transmitted speckle.

\subsection{Proof of Proposition \ref{prop:cv_corr}}
\label{sec:proof_prop_cv_corr}

In this section we prove the statistical stabilization of the correlation function, namely
\[
\lim_{\e\to 0} \E\big[\CalC^{\e,ref}(\phi,\varphi,\psi)^2\big] = \CalC^{ref}(\phi,\varphi,\psi)^2,
\]
where
\[
\CalC^{ref}(\bar s,\bar \by,\by,\widetilde s_1,\widetilde \by_1,\widetilde s_2,\widetilde \by_2)= \bC^{ref}(\bar s,\bar \by,\widetilde s_1-\widetilde s_2,\widetilde \by_1-\widetilde \by_2).
\] 
As a consequence of the Markov inequality, for any $\eta>0$,
\[
\lim_{\e\to 0} \Pro\big( \big| \CalC^{\e,ref}(\phi,\varphi,\psi) - \CalC^{ref}(\phi,\varphi,\psi) \big| > \eta \big) = 0, 
\]
which yields convergence in probability. The proof follows closely the one of Proposition \ref{prop:exp_corr}, so we only highlight the key additional ingredients. 

At leading order in $\e$, the reflected speckle profile can be written as
\begin{align*}
S^{\e,ref}(\bar s,\bar \by, \by,\widetilde s, \widetilde \by) \simeq & \frac{\CalR}{2(2\pi)^{5} \e^{2\gamma} }\iiiint e^{-i\omega \bar s /\e^{2\gamma-1}} e^{-i\omega \widetilde s } e^{i \omega \bq \cdot \bar \by /\e^{\gamma-1/2}} e^{i \omega \bq \cdot \by } e^{i \e^{\gamma-1/2} \omega \bq \cdot  \widetilde \by } \\
& \hspace{3cm} \times  e^{i \omega (\bq' - \bq) \cdot (\bx' - \bx_{int})/\se}e^{2i\omega  \bcals_0 V(\bx'/\e^\gamma)} \nonumber\\
& \hspace{3cm} \times \hat \CalU_0(\omega, \bq, z_{int}) \hat \CalU_0(\omega, \bq', z_{int}) \hat \Psi(\omega, \bq')  \omega^4 d\omega  d\bx'  d\bq'  d\bq.\nonumber
\end{align*}
Therefore, for test functions $\phi\in\CalS(\mathbb{X})$ and $\varphi,\psi\in\CalS(\CalX)$,
\begin{align*}
\CalC^{\e,ref}(\phi,\varphi, \psi) := & \frac{\CalR^2}{4(2\pi)^{10} \e^{4\gamma} }\int\cdots\int e^{-i(\omega_1-\omega_2) \bar s /\e^{2\gamma-1}} e^{-i(\omega_1  \widetilde s_1-\omega_2 \widetilde s_2) } \\
& \hspace{3cm} \times  e^{i (\omega_1 \bq_1-\omega_2 \bq_2) \cdot \bar \by /\e^{\gamma-1/2}}  e^{i (\omega_1 \bq_1-\omega_2 \bq_2) \cdot \by} e^{i \e^{\gamma-1/2} (\omega_1 \bq_1 \cdot  \widetilde \by_1 -  \omega_2 \bq_2 \cdot \widetilde \by_2) } \\
& \hspace{3cm} \times  e^{i \omega_1 (\bq'_1 - \bq_1) \cdot (\bx'_1 - \bx_{int})/\se} e^{-i \omega_2 (\bq'_2 - \bq_2) \cdot (\bx'_2 - \bx_{int})/\se}\\
& \hspace{3cm} \times  e^{2i  \bcals_0 (\omega_1 V(\bx'_1/\e^\gamma)-\omega_2 V(\bx'_2/\e^\gamma))} \\
& \hspace{3cm} \times \hat \CalU_0(\omega_1, \bq_1, z_{int}) \hat\CalU_0(\omega_1, \bq'_1, z_{int})  \overline{\hat \CalU_0(\omega_2, \bq_2, z_{int}) \hat \CalU_0(\omega_2, \bq'_2, z_{int})  } \\
& \hspace{3cm} \times \hat \Psi(\omega_1, \bq'_1)\overline{\hat \Psi(\omega_2, \bq'_2)} \phi(\bar s,\bar \by,\by)\varphi(\widetilde s_1,\widetilde \by_1) \psi(\widetilde s_2, \widetilde \by_2)\\
& \hspace{3cm} \times \omega^4_1 \omega^4_2 d\omega_1 d\omega_2 d\bar s d\widetilde s_1 d\widetilde s_2  d\bx'_1  d\bq'_1  d\bq_1 d\bx'_2  d\bq'_2  d\bq_2 d\bar \by d\by d\widetilde \by_1 d\widetilde \by_2.
\end{align*}
Taking the expectation of $\CalC^{\e,ref}(\phi,\varphi,\psi)^2$ yields 
\[
E_\e := \E\Big[ e^{2i  \bcals_0 \sum_{j=1}^2 \omega_{1,j} V(\bx'_{1,j}/\e^\gamma)-\omega_{2,j} V(\bx'_{2,j}/\e^\gamma)}\Big]. 
\]
Introducing the changes of variables, for $j=1,2$,
\[
\bx'_{1,j} \to \bx_{int} + \br'_j + \e^\gamma \by'_j/2 \qquad\text{and}\qquad \bx'_{2,j} \to \bx_{int} + \br_j' - \e^\gamma \by'_j/2,\qquad j=1,2,
\]
and using stationarity of $V$, we can rewrite $E_\e$ as
\begin{align*}
E_\e & =  \E\Big[ e^{2i  \bcals_0 \sum_{j=1}^2 \omega_{1,j} V((\bx_{int} + \br'_j)/\e^\gamma+\by'_j/2)-\omega_{2,j} V((\bx_{int} + \br'_j)/\e^\gamma-\by'_j/2)}\Big]\\
& = \E\Big[ e^{2i  \bcals_0 ( \omega_{1,1} V( \br'_1/\e^\gamma+\by'_1/2)-\omega_{2,1} V(\br'_1/\e^\gamma-\by'_1/2))}e^{2i  \bcals_0 ( \omega_{1,2} V( \br'_2/\e^\gamma+\by'_2/2)-\omega_{2,2} V(\br'_2/\e^\gamma-\by'_2/2))}\Big].
\end{align*}
Since $\gamma>1/2$, the points $\br'_1/\e^\gamma$ and $\br'_2/\e^\gamma$ become asymptotically far apart (for $\br'_1\neq \br'_2$), and by Lemma \ref{lem:mixing} we obtain the factorization
\[
E_\e\underset{\e\to 0}{\simeq} \E\Big[ e^{2i  \bcals_0 ( \omega_{1,1} V( \br'_1/\e^\gamma+\by'_1/2)-\omega_{2,1} V(\br'_1/\e^\gamma-\by'_1/2))}\Big]\E\Big[e^{2i  \bcals_0 ( \omega_{1,2} V( \br'_2/\e^\gamma+\by'_2/2)-\omega_{2,2} V(\br'_2/\e^\gamma-\by'_2/2))}\Big].
\]
At this stage, one can repeat for each factor the same sequence of changes of variables and oscillatory-integral arguments as in the proof of Proposition~\ref{prop:exp_corr}. This yields, for each $j=1,2$, the convergence of the corresponding contribution toward $\CalC^{ref}(\phi,\varphi,\psi)$, and therefore
\[
\lim_{\e\to 0} \E\big[\CalC^{\e,ref}(\phi,\varphi,\psi)^2\big] = \CalC^{ref}(\phi,\varphi,\psi)^2.
\]
This concludes the proof of Proposition \ref{prop:cv_corr}.

\section{Proof of Theorem \ref{th:speckle}}\label{sec:proof_th_speckle}\label{sec:proof2}

The proof follows the strategy developed in \cite[Section~9.3.4]{fouque}. We characterize the limiting law of the speckle by studying the convergence of all finite-dimensional moments. Tightness then follows from uniform bounds on the second-order moment.

Let $\phi \in \CalS(\CalX \times \CalX,\mathbb{C})$ be a test function. Applying $\phi$ to \eqref{eq:S_eps} yields

\begin{align*}
 J^{\e}(\phi) & := \big< \hat \CalS^{\e,ref}_{\by}, \phi \big>_{\CalS',\CalS} \\
& \simeq \frac{\CalR}{2(2\pi)^{2} \e^{7(\gamma-1/2)+1} }\int\dots\int e^{-i \omega \bar s/\e^{2\gamma-1}} e^{i \omega \bp \cdot \bar \by /\e^{2\gamma-1}} e^{i \omega \bp \cdot \by /\e^{\gamma-1/2}}e^{i \omega( \bq' - \bp/\e^{\gamma-1/2}) \cdot (\bx' - \bx_{int})/\se} \\
& \hspace{3cm} \times e^{2i\omega  \bcals_0 V(\bx'/\e^\gamma)}\hat \CalU_0(\omega, \bp/\e^{\gamma-1/2}, z_{int}) \hat \CalU_0(\omega, \bq', z_{int}) \hat \Psi(\omega, \bq')  \\
& \hspace{3cm}  \times \varphi^{1/2}\Big(2\frac{\bar s-s^{ref}_{\bp}}{\e^{2(\gamma-1/2)}},2\frac{\bar \by-\by^{ref}_{p}}{\e^{2(\gamma-1/2)}}\Big) \overline{\phi(\bar s,\bar \by,\omega,\bp)} \omega^2 d\omega d\bar s d\bx'  d\bq'  d\bar \by   d\bp.
\end{align*}
Using the stationarity of the random interface $V$, we may assume without loss of generality that
\[
\bx_{int}=0.
\]
The proof relies on computing the limits of all moments of $J^{\e}(\phi)$. The key mechanism is the pairing of the spatial integration variables $\bx'$ so as to cancel the rapidly oscillating phases, combined with the mixing property stated in Lemma~\ref{lem:mixing}. To organize these pairings, we order the variables according to the associated frequencies $\omega$.

We decompose $J^{\e}(\phi)$ into its positive and negative frequency contributions by writing
\[
J^{\e}_{\pm} := \int_0^\infty \CalJ^{\e}(\pm \omega) d\omega,
\]
with
\begin{align*}
\CalJ^{\e}(\omega)  := & \frac{\CalR}{2(2\pi)^{2} \e^{7(\gamma-1/2)+1}}\int\cdots\int  e^{-i\omega \bar s / \e^{2\gamma-1}} e^{i \omega \bp \cdot \bar \by /\e^{2\gamma-1}} e^{i \omega \bp \cdot \by /\e^{\gamma-1/2}}e^{ i\omega (  \bq' - \bp/\e^{\gamma-1/2}) \cdot \bx'/\se} e^{2i\omega  \bcals_0 V(\bx'/\e^\gamma)} \nonumber\\
& \hspace{4cm} \times \hat \CalU_0(\omega, \bp/\e^{\gamma-1/2}, z_{int}) \hat \CalU_0(\omega, \bq', z_{int}) \hat \Psi(\omega, \bq') \varphi^{1/2}\Big(2\frac{\bar s-s^{ref}_{\bp}}{\e^{2(\gamma-1/2)}},2\frac{\bar \by-\by^{ref}_{p}}{\e^{2(\gamma-1/2)}}\Big)  \nonumber\\
& \hspace{4cm}  \times   \overline{\phi(\bar s,\bar \by,\omega,\bp)} \omega^2 d\bar s d\bx' d\bq' d\bp d\bar \by.
\end{align*}

\paragraph{Even moments.}

We begin by considering the even moments of $J^{\e}(\phi)$. Decomposing the integral with respect to $\omega$ into positive and negative frequencies yields
\[
\E\big[J^{\e}(\phi)^{2n}\big] = \sum_{l=0}^{2n} \binom{2n}{l} M_\e(l,2n-l),
\]
where
\begin{equation}\label{def:Me}
M_{\e}(l,2n-l) := \E\big[ (J^{\e}_{+})^l (J^{\e}_{-})^{2n-l}\big].
\end{equation}
Here we have used the change of variable $\omega\mapsto -\omega$ to rewrite the contribution of negative frequencies in terms of $\CalJ^{\e}(-\omega)$.

The analysis of the asymptotic behavior of $M_{\e}(l,2n-l)$ relies on pairing the spatial variables associated with equal frequencies and exploiting the decorrelation induced by the mixing property of the random interface. As will be shown below, only fully frequency paired configurations contribute in the limit $\e\to 0$, leading to Gaussian statistics.

\paragraph{The case $l=n$.} We now analyze the moment $M_\e(n,n)$, which will provide the only nontrivial contribution as $\e\to 0$.
By definition, we have
\[
M_{\e}(n,n)  =  \int_0^\infty \cdots \int_0^\infty  \E\Big[\prod_{j=1}^n\CalJ^{\e}(\omega_{1,j}) \CalJ^{\e}(-\omega_{2,j})\Big]d \omega_{1,j}d \omega_{2,j}.
\]
Using symmetry with respect to the $\omega$-variables, we may restrict to ordered frequencies:
\[
M_{\e}(n,n)  = n!^2 \int_{\{0<\omega_{1,1}<\dots<\omega_{1,n}\}}\int_{\{0<\omega_{2,1}<\dots<\omega_{2,n}\}}  \E\Big[\prod_{j=1}^n \CalJ^{\e}(\omega_{1,j}) \CalJ^{\e}(-\omega_{2,j})\Big]d\omega_{1,j}d\omega_{2,j}.
\]
We then pair the positive and negative frequencies through the change of variables
\[
\omega_{1,j} \to \omega_{j} + \e^{2\gamma-1} h_j/2 \qquad\text{and}\qquad  \omega_{2,j} \to \omega_{j} - \e^{2\gamma-1} h_j/2,
\]
and introduce $H_n^\e$ the image of the ordering constraints in the $h$-variables:
\[
H^\e_n = \Big\{ (h_1,\dots,h_n):\quad h_j \in \Big(2\frac{\omega_{j} - \omega_{j+1}}{\e^{2\gamma-1}} + h_{j+1}, 2\frac{\omega_{j} - \omega_{j-1}}{\e^{2\gamma-1}} + h_{j-1} \Big),\qquad j\in \{1,\dots,n\} \Big\},
\]
where the convention $\omega_{n+1}=\infty$, $h_{n+1}=\omega_{0}=h_0=0$ is used. In particular, the domain $H^\e_n$ becomes $\R^n$ in the limit $\e\to 0$. Therefore, we obtain
\begin{align*}
M_{\e}(n,n) & = \e^{n(2\gamma-1)} n!^2 \int_{\{0<\omega_{1}<\dots<\omega_{n}\}}\int_{H^\e_n}  \E\Big[\prod_{j=1}^n \CalJ^{\e}(\omega_{j}+\e^{2\gamma-1}h_j/2) \CalJ^{\e}(-\omega_{j}+\e^{2\gamma-1}h_j/2)\Big]d\omega_{j}dh_{j} \nonumber\\
& = \frac{n!^2 \CalR^{2n}}{4^n(2\pi)^{4n} \e^{n(12(\gamma-1/2)+2)}}\nonumber\\
&\hspace{2cm}\times \int_{\{0<\omega_{1}<\dots<\omega_{n}\}}\int_{H^\e_n} \int\cdots\int \E\Big[\prod_{j=1}^n M_\e^{j} \Big]\, \prod_{j=1}^n \prod_{l=1,2} d\omega_{j} dh_j d\bar s_{l,j} d\bx'_{l,j} d\bq'_{l,j}  d\bp_{l,j} d\bar \by_{l,j},
\end{align*}
where
\begin{align*}
M_\e^{j} & :=  e^{-i\omega_j(\bar s_{1,j}-\bar s_{2,j})/\e^{2\gamma-1}} e^{-ih_j (\bar s_{1,j}+\bar s_{2,j})/2} \\
&\hspace{0.5cm}\times  e^{i ((\omega_j+\e^{2\gamma-1}h_j/2) \bp_{1,j} \cdot (\bar \by_{1,j} + \e^{\gamma-1/2} \by ) -  (\omega_j-\e^{2\gamma-1}h_j/2) \bp_{2,j}\cdot (\bar \by_{2,j} + \e^{\gamma-1/2} \by ) /\e^{2\gamma-1}}   \\
&\hspace{0.5cm}\times  e^{i ((\omega_j+\e^{2\gamma-1}h_j/2) \bq'_{1,j} \cdot (\bar \by_{1,j} + \e^{\gamma-1/2} \by ) -  (\omega_j-\e^{2\gamma-1}h_j/2) \bq'_{2,j}\cdot (\bar \by_{2,j} + \e^{\gamma-1/2} \by ) /\e^{\gamma-1/2}}  \\
&\hspace{0.5cm}\times e^{ - i  ((\omega_j+\e^{2\gamma-1}h_j/2) \bp_{1,j}\cdot \bx'_{1,j} -  (\omega_j-\e^{2\gamma-1}h_j/2) \bp_{2,j}\cdot \bx'_{2,j})  /\e^{\gamma}}     \\
&\hspace{0.5cm}\times e^{ 2i\bcals_0((\omega_j+\e^{2\gamma-1}h_j/2)V(\bx'_{1,j}/\e^\gamma) - (\omega_{j}-\e^{2\gamma-1}h_j/2)V(\bx'_{2,j}/\e^\gamma))}\\
&\hspace{0.5cm}\times  \hat \CalU_0(\omega_j+\e^{2\gamma-1}h_j/2,\bp_{1,j}/\e^{\gamma-1/2}+\bq'_{1,j},z_{int}) \overline{\hat \CalU_0(\omega_{j}-\e^{2\gamma-1}h_j/2,\bp_{2,j}/\e^{\gamma-1/2}+\bq'_{2,j},z_{int})} \\
& \hspace{0.5cm}\times \hat \CalU_0(\omega_j+\e^{2\gamma-1}h_j/2,\bq'_{1,j},z_{int}) \overline{\hat \CalU_0(\omega_{j}-\e^{2\gamma-1}h_j/2,\bq'_{2,j},z_{int})}  \\
& \hspace{0.5cm}\times \hat \Psi(\omega_j+\e^{2\gamma-1}h_j/2, \bq'_{1,j}) \overline{ \hat \Psi( \omega_j-\e^{2\gamma-1}h_j/2, \bq'_{2,j})} \\
& \hspace{0.5cm}\times \varphi^{1/2}\Big(2\frac{\bar s_{1,j}-s^{ref}_{\bp_{1,j}} }{\e^{2(\gamma-1/2)}},2\frac{\bar \by_{1,j}-\by^{ref}_{\bp_{1,j}}}{\e^{2(\gamma-1/2)}}\Big) \varphi^{1/2}\Big(2\frac{\bar s_{2,j}-s^{ref}_{\bp_{2,j}} }{\e^{2(\gamma-1/2)}},2\frac{\bar \by_{2,j}-\by^{ref}_{\bp_{2,j}}}{\e^{2(\gamma-1/2)}}\Big) \\
& \hspace{0.5cm}\times \overline{\phi(\bar s_{1,j}, \bar \by_{1,j}, \by,\omega_j+\e^{2\gamma-1}h_j/2, \bp_{1,j}+\e^{\gamma-1/2}\bq'_{1,j})}\\
&\hspace{0.5cm}\times \overline{\phi(\bar s_{2,j},\bar \by_{2,j},\by,-\omega_{j}+\e^{2\gamma-1} h_j/2, \bp_{2,j}+\e^{\gamma-1/2}\bq'_{2,j})}\\
& \hspace{0.5cm}\times (\omega_j+\e^{2\gamma-1}h_j/2)^2(\omega_j-\e^{2\gamma-1}h_j/2)^2 
\end{align*}
after the change of variables
\[
\bp_{1,j} \to \e^{\gamma-1/2}\bq'_{1,j} + \bp_{1,j} \qquad\text{and}\qquad \bp_{2,j} \to \e^{\gamma-1/2}\bq'_{2,j} + \bp_{2,j}.
\]

Next, we perform the change of variables in space:
\[
\bx'_{1,j}\to \br'_j + \e^\gamma\by'_j/2 \qquad\text{and}\qquad \bx'_{2,j}\to \br'_j - \e^\gamma\by'_j/2,
\]
Taking expectation in the random phase term in $M_\e^{j}$ yields
\begin{align*}
\E\Big[\prod_{j=1}^n  e^{ 2i\bcals_0  ( (\omega_{j}+\e^{2\gamma-1} h_j/2)V(\br_{j}/\e^\gamma + \by'_j/2)  -(\omega_{j}-\e^{2\gamma-1} h_j/2)  V(\br_{j}/\e^\gamma - \by'_j/2))}\Big] & \simeq  \prod_{j=1}^n\E\big[ e^{ 2i\omega_j \bcals_0  (V(\br_{j}/\e^\gamma+\by'_j/2)  -  V(\br_{j}/\e^\gamma - \by'_j/2))}\big]\\
& \simeq  \prod_{j=1}^n\E\big[ e^{ 2i\omega_j\bcals_0 ( V(\by'_j/2) - V( - \by'_j/2))}\big]\\
& \simeq  \prod_{j=1}^n\E\big[ e^{ 2i\omega_j\bcals_0 (V(\by'_j) -  V(0))}\big]
\end{align*} 
as $\e\to 0$. Here we used Lemma\ref{lem:mixing} for the factorization, and stationarity of $V$ in the last two steps. Since the resulting expression no longer depends on $\br'_j$, the integration over the $\br'_j$ variables produces Dirac masses. Indeed, the only remaining dependence in $\br'_j$ appears through
\[
e^{i((\omega_j+\e^{2\gamma-1}h_j/2)\bp_{1,j}-(\omega_j-\e^{2\gamma-1}h_j/2)\bp_{2,j})\cdot\br'_j/e^\gamma}
\]
hence integration in $\br'_j$ yields 
\[
(2\pi)^{2n}\e^{2n\gamma} \prod_{j=1}^n \delta((\omega_j+\e^{2\gamma-1}h_j/2)\bp_{1,j} - (\omega_j-\e^{2\gamma-1}h_j/2)\bp_{2,j}).
\]
We then use these constraints to reduce to a single scattered-direction variable $\bp_j$ for each $j$. At leading order in $\e$, this gives
\begin{align*}
\E\Big[\prod_{j=1}^n M_\e^{j} \Big] & \simeq \prod_{j=1}^n e^{-i\omega_j(\bar s_{1,j}-\bar s_{2,j})/\e^{2\gamma-1}} e^{-ih_j((\bar  s_{1,j}+ \bar s_{2,j})/2 - s^{ref}_{\bp_j})}  e^{ih_j \bp_j\cdot (\bar \by_{1,j}-\bar \by_{2,j})/2} e^{i \omega_j\bp_{j} \cdot (  \bar \by_{1,j}   - \bar \by_{2,j}  ) /\e^{2\gamma-1}}\\
&\hspace{0.5cm}\times  e^{i \omega_j ( \bq'_{1,j} \cdot (\bar \by_{1,j}- \by^{ref}_{\bp_j})  -   \bq'_{2,j}\cdot (\bar \by_{2,j}-\by^{ref}_{\bp_j}) ) /\e^{\gamma-1/2}} e^{i \omega_j( \bq'_{1,j}   -  \bq'_{2,j})\cdot \by}  \\
&\hspace{0.5cm}\times \E\big[ e^{ 2i\omega_j\bcals_0 (V(\by'_j) -  V(0))}\big]e^{- i \omega_j \by'_j \cdot \bp_{j} }\\
& \hspace{0.5cm}\times \hat \CalU_0(\omega_j,\bq'_{1,j},2z_{int}) \overline{\hat \CalU_0(\omega_{j},\bq'_{2,j},2z_{int})}   \hat \Psi(\omega_j, \bq'_{1,j}) \overline{ \hat \Psi( \omega_j, \bq'_{2,j})} \\
& \hspace{0.5cm}\times \varphi^{1/2}\Big(2\frac{\bar s_{1,j}-s^{ref}_{\bp_{j}}}{\e^{2\gamma-1}},2\frac{\bar \by_{1,j}-\by^{ref}_{\bp_{j}}}{\e^{2(\gamma-1/2)}}\Big) \varphi^{1/2}\Big(2\frac{\bar s_{2,j}-s^{ref}_{\bp_{j}}}{\e^{2\gamma-1}},2\frac{\bar \by_{2,j}-\by^{ref}_{\bp_{j}}}{\e^{2(\gamma-1/2)}}\Big) \\
& \hspace{0.5cm}\times \overline{\phi(\bar s_{1,j}, \bar \by_{1,j},\omega_j, \bp_{j})}\overline{\phi(\bar s_{2,j},\bar \by_{2,j},-\omega_{j}, \bp_{j})} \omega_j^2.
\end{align*}
Now, making the changes of variables
\[
\bar s_{1,j} \to s^{ref}_{\bp_j} + \bar s_j + \e^{2\gamma-1} \widetilde s_j/2, \qquad\text{and}\qquad \bar s_{2,j} \to s^{ref}_{\bp_j}  +  \bar s_j - \e^{2\gamma-1} \widetilde s_j/2,
\]
the resulting term  $e^{-ih_j(\bar s_j - s^{ref}_{\bp_j})}$ yields upon integration in $h_j$,
\[
2\pi \delta(\bar s_j).
\]
Similarly, set
\[
\bar \by_{1,j} \to \by^{ref}_{\bp_j} + \bar \by_j + \e^{2(\gamma-1/2)}\widetilde \by_j/2,  \qquad\text{and}\qquad \bar \by_{2,j} \to \by^{ref}_{\bp_j} + \bar \by_j - \e^{2(\gamma-1/2)}\widetilde \by_j/2
\]
together with
\[
\bq'_{1,j}\to \bq_j + \e^{\gamma-1/2}\br_j/2, \qquad\text{and}\qquad \bq'_{2,j}\to \bq_j - \e^{\gamma-1/2}\br_j/2,
\]
Then the oscillatory factor in $\bq'_{1,j}-\bq'_{2,j}$ produces
\[
e^{i\omega_j(\bq'_{1,j}-\bq'_{2,j})\cdot\bar\by_j/\e^{\gamma-1/2}}
=
e^{i\omega_j \br_j\cdot\bar\by_j},
\]
so that integration in $\br_j$ yields
\[(2\pi)^2\delta(\bar \by_j)/\omega_j^2.\]
Therefore, the integrals in $(h_j,\bar s_j,\br_j,\bar\by_j)$ collapse onto $\bar s_j=0$ and $\bar\by_j=0$.

Gathering all these reductions and passing to the limit $\e\to 0$ yields
\begin{align*}
\lim_{\e\to 0}M_{\e}(n,n)  =  \frac{(2\pi)^{n} n!^2 \CalR^{2n}}{4^n}  \int_{\{0<\omega_{1}<\dots<\omega_{n}\}}  \int\cdots\int \prod_{j=1}^n &  e^{-i\omega_j \widetilde s_j}e^{i\omega_j \bp_j \cdot \widetilde\by_j}\\
& \times \E\big[ e^{ 2i\omega_j\bcals_0 ( V( \by'_j)- V(0))}\big] e^{- i \omega_j \by'_j \cdot \bp_{j} } \\
&\times  \vert \hat \Psi( \omega_j, \bq_{j}) \vert^2 \varphi^{1/2}(\widetilde s_j,\widetilde \by_j)\varphi^{1/2}(-\widetilde s_j,-\widetilde \by_j)\\
& \times \overline{\phi(s^{ref}_{\bp_j}, \by^{ref}_{\bp_j}, \omega_j,\bp_{j})} \overline{\phi(s^{ref}_{\bp_j},\by^{ref}_{\bp_j}, -\omega_j, \bp_{j})}\\
&\times  d\omega_j d\widetilde s_jd\widetilde \by_j d\bq_{j} d\by'_j d\bp_{j},
\end{align*}
and by symmetry of the $\omega_j$'s
\[
\lim_{\e\to 0} M_{\e}(n,n) =\frac{n!}{2^{n}} \Big(\frac{\CalR^{2}}{2}\int_0^\infty \CalJ^0(\omega) d\omega\Big)^n.
\]
Finally, using stationarity of $V$ and the symmetry of the mollifier $\varphi$ (so that
$\varphi^{1/2}(x)\varphi^{1/2}(-x)=\varphi(x)$)
\begin{align}\label{def:sigma2_ref}
\sigma^2_{ref}:=\frac{\CalR^{2}}{2}\int_0^\infty\CalJ^0(\omega) d\omega=\frac{\CalR^{2}}{4}\int\CalJ^0(\omega) d\omega
& =  \frac{(2\pi)^3\CalR^{2}}{4} \int\hat \varphi(\omega,\bp) \CalA(2\bcals_0, \omega,\bp) \vert \hat \Psi \vert^2_2(\omega)  \nonumber\\
&\hspace{2.5cm}\times \overline{\phi(s^{ref}_{\bp}, \by^{ref}_{\bp}, \omega,\bp)} \overline{\phi(s^{ref}_{\bp},\by^{ref}_{\bp}, -\omega, \bp)}  d\omega d\bp,
\end{align}
where $\vert \hat \Psi \vert^2_2$ is defined in \eqref{def:hPsi2} and $\CalA$ in \eqref{def:CalA}.

\paragraph{The case $l\neq n$.} In this situation, we   have 
\[
\lim_{\e \to 0} M_\e(l,2n-l) = 0,
\]
for any $l\neq n$, where $M_\e(l,2n-l)$ is defined by \eqref{def:Me} and
\[
M_\e(l,2n-l)=\int_0^\infty \cdots \int_0^\infty  \E\Big[\prod_{j=1}^l\CalJ^{\e}(\omega_{1,j})d \omega_{1,j} \prod_{j'=1}^{2n-l} \CalJ^{\e}(-\omega_{2,j'})d \omega_{2,j'}\Big].
\]

The key observation is that, when $l\neq n$, the number of positive-frequency factors $J^\e_+$ and negative-frequency factors $J^\e_-$ is unbalanced. Consequently, after ordering the frequencies and attempting to pair them as in the case
$l=n$, there necessarily remain at least $|l-n|$ unpaired factors. Each unpaired factor produces a rapidly oscillating phase of the form $e^{-i\omega_j s_{\bp_j}^{ref}/\e^{2\gamma-1}}$ which cannot be compensated by any conjugate phase. Since all frequencies $\omega$ are strictly positive, these oscillations persist after all possible changes of variables and do not cancel. As a result, the integrand contains highly oscillatory factors with frequency of order
$\e^{-(2\gamma-1)}$, and standard non-stationary phase arguments imply that
\[
\lim_{\e \to 0} M_\e(l,2n-l) = 0.
\]
Combining this result with the computation of the dominant contribution $M_\e(n,n)$ yields
\begin{equation}\label{eq:lim_moment}
\lim_{\e\to 0} \E\big[J^{\e}(\phi)^{2n}\big] = \frac{n!}{2^n} \sigma^{2n}_{ref} = \E[\big< \hat \CalS^{ref}_{\by}, \phi \big>^{2n}_{\CalS',\CalS}].
\end{equation}

\paragraph{Odd moments.} For any odd moment, there is necessarily an unbalanced number of $J^\e_+$ and $J^\e_-$ in $M_\e(l,2n+1-l)$, for any $l\in\{0,\dots,2n+1\}$. As in the case of even moments with $l\neq n$, not all the rapid phase factors can be compensated. As a consequence, the corresponding oscillatory integrals vanish in the limit $\e\to 0$, and we obtain
\[
\lim_{\e\to 0} \E\big[J^{\e}(\phi)^{2n+1}\big] = 0.
\]
The case of the expectation (moment of order $1$) is given by \eqref{eq:mean_Se_ref}. 

\paragraph{Tightness.} Tightness follows from the tightness of the real and imaginary parts of
$\hat \CalS^{\e,ref}_{\by}$ when applied to test functions
$\phi\in\CalS(\CalX\times\CalX)$ taking real values.
These tightness properties are obtained from the convergence of the second-order moment
of $J^{\e}$ applied to the test functions
\[
\phi_r(s,\by,\omega,\bp) = \frac{1}{2}(\phi(s,\by,\omega,\bp)+\phi(s,\by,-\omega,\bp))\qquad\text{and}\qquad \phi_i(s,\by,\omega,\bp)=\frac{1}{2i}(\phi(s,\by,\omega,\bp)-\phi(s,\by,-\omega,\bp)).
\] 
In fact, for these two test functions we have
\[
J^{\e}(\phi_r) = \big< Re(\hat \CalS^{ref}_{\by}), \phi \big>_{\CalS',\CalS}\qquad\text{and}\qquad
J^{\e}(\phi_i) =\big< Im(\hat \CalS^{ref}_{\by}), \phi \big>_{\CalS',\CalS}.
\]

\paragraph{Covariance formulas \eqref{eq:cov_1} and \eqref{eq:cov_2}.}
These formulas follow from \eqref{def:sigma2_ref} and \eqref{eq:lim_moment} for $n=1$, together with the standard polarization identities
\[
J^\e(\phi_1)J^\e(\phi_2) = \frac{1}{4}\Big((J^\e(\phi_1+\phi_2))^2-(J^\e(\phi_1-\phi_2))^2\Big),
\]
and
\[
J^\e(\phi_1)\overline{J^\e(\phi_2)} = \frac{1}{4i}\Big((J^\e(\phi_1+i\overline{\phi_2^-}))^2-(J^\e(\phi_1-i\overline{\phi_2^-}))^2\Big),
\]
where $\phi_2^-(\omega) = \phi_2(-\omega)$. This concludes the proof of Theorem \ref{th:speckle}.

\appendix

\section{Proof of the jump conditions and continuity relations} \label{sec:proof_continuity}

This section is devoted to the justification of the jump conditions across the plane $z=0$ induced by the source term, as well as to the continuity relations satisfied by the wavefield across the randomly perturbed interface.

To derive these relations, the solution $u$ of the wave equation
\eqref{eq:wave_equation} is decomposed as
\begin{equation}\label{eq:dec_u}
u(t,\bx,z) = u^-(t,\bx,z)\1_{(-\infty,0)}(z)+u^+(t,\bx,z)\1_{(0,\infty)}(z),
\end{equation} 
where $u^+$ satisfies 
\[
\Delta u^+ - \frac{1}{c^2(\bx,z)}\partial^2_{tt} u^+ = 0 \qquad (t,\bx,z)\in \R\times \R ^2 \times \R,
\]
and $u^-$ satisfies 
\[
\Delta u^- - \frac{1}{c^2_0}\partial^2_{tt} u^- = 0 \qquad (t,\bx,z)\in \R\times \R ^2 \times \R.
\]

\subsection{Jump conditions across the plan $z=0$ of the source location}

Substituting the decomposition \eqref{eq:dec_u} into the wave equation
\eqref{eq:wave_equation} yields, in the sense of distributions,
\begin{align*}
\Psi\Big(\frac{t-\bk_0\cdot \bx}{T_0},\frac{\bx}{r_0}\Big) \delta'(z)   = \Delta u - \frac{1}{c^2(\bx,z)}\partial^2_{tt} u  = \big(&\partial_z u^+(t,\bx,z=0)-\partial_z u^-(t,\bx,z=0)\big)\delta(z)\\
& + \big( u^+(t,\bx,z=0)- u^-(t,\bx,z=0)\big)\delta'(z).
\end{align*}
Identifying the coefficients of $\delta(z)$ and $\delta'(z)$, we obtain the jump conditions
\begin{align}\label{eq:jump_cond}
u(t,\bx,z=0^+)- u(t,\bx,z=0^-) & = \Psi\Big(\frac{t-\bk_0\cdot \bx}{T_0},\frac{\bx}{r_0}\Big),\\
\partial_z u(t,\bx,z=0^+)- \partial_z u(t,\bx,z=0^-) & = 0.\nonumber
\end{align}

\subsection{Continuity relations across the randomly perturbed interface} 

The continuity relations at the randomly perturbed interface are derived for $u^+$, and then extended to $u$, following the same approach as in the previous subsection. The term $u^+$ is decomposed as 
\begin{equation}\label{eq:dec_u+}
u^+(t,\bx,z) = u^+_0(t,\bx,z)\1_{(-\infty,0)}(z-z_{int}(\bx))+u^+_1(t,\bx,z)\1_{(0,\infty)}(z-z_{int}(\bx)),
\end{equation} 
where
\[z_{int}(\bx) := z_{int}+\sigma V(\bx/\ell_c).\]
The function $u^+_0$ satisfies
\[
\Delta u^+_0 - \frac{1}{c^2_0}\partial^2_{tt} u^+_0 = 0 \qquad (t,\bx)\in \R\times \R ^2\qquad\text{for}\qquad z<z_{int}(\bx),
\]
while $u^+_1$ satisfies
\[
\Delta u^+_1 - \frac{1}{c^2_1}\partial^2_{tt} u^+_1 = 0 \qquad (t,\bx)\in \R\times \R ^2\qquad\text{for}\qquad z>z_{int}(\bx).
\]
To flatten the interface, we introduce the shifted field
\[
U(t,\bx,Z) = u^+(t,\bx, Z + z_{int}(\bx))
\]
which satisfies
\begin{equation}\label{eq:wave_shift}
\Delta_\perp U + \Big(1+\frac{\sigma^2}{\ell_c^2} |\nabla_\perp V(\bx/\ell_c)|^2\Big)\partial^2_{ZZ} U - \frac{1}{c^2(\bx, Z+z_{int}(\bx))}\partial^2_{tt} U -\frac{\sigma}{\ell_c^2}(\Delta_\perp V(\bx/\ell_c))\partial_Z U - \frac{\sigma}{\ell_c} \nabla_\perp V(\bx/\ell_c) \cdot \nabla_\perp \partial_Z U  = 0.
\end{equation}
In the new variables, the decomposition \eqref{eq:dec_u+} becomes
\begin{equation}\label{eq:dec_u_shift}
U(t,\bx,Z) = U_0(t,\bx,Z)\1_{(-\infty,0)}(Z)+U_1(t,\bx,Z)\1_{(0,\infty)}(Z),
\end{equation}
where $U_0$ and $U_1$ satisfy
\[
 \Delta_\perp U_j +\Big(1+\frac{\sigma^2}{\ell_c^2} |\nabla_\perp V(\bx/\ell_c)|^2\Big)\partial^2_{ZZ} U_j - \frac{1}{c^2_j}\partial^2_{tt} U_j -\frac{\sigma}{\ell_c^2}(\Delta_\perp V(\bx/\ell_c))\partial_Z U_j - \frac{\sigma}{\ell_c} \nabla_\perp V(\bx/\ell_c) \cdot \nabla_\perp \partial_Z U_j  = 0,
\]
for $j=0,1$. Substituting the decomposition \eqref{eq:dec_u_shift} into
\eqref{eq:wave_shift} yields
\begin{align*}
0 & =\delta(Z) \Big(\Big(1+\frac{\sigma^2}{\ell_c^2} |\nabla_\perp V(\bx/\ell_c)|^2\Big)\Big(\partial_Z U_1(Z=0)-\partial_Z U_0(Z=0)\Big) \\
& \hspace{2cm}-\frac{\sigma}{\ell_c^2}(\Delta_\perp V(\bx/\ell_c)) (U_1(Z=0)-U_0(Z=0)) \\
& \hspace{2cm} -\frac{\sigma}{\ell_c}\nabla_\perp V(\bx/\ell_c)\cdot (\nabla_\perp U_1(Z=0)-\nabla_\perp U_0(Z=0))\Big)\\
& + \delta'(Z)\Big(1+\frac{\sigma^2}{\ell_c^2} |\nabla_\perp V(\bx/\ell_c)|^2\Big)\Big(U_1(Z=0)-U_0(Z=0)\Big)
\end{align*}
The term in $\delta'$ provides $U_1(Z=0)=U_0(Z=0)$, and then $\nabla_\perp U_1(Z=0)=\nabla_\perp U_0(Z=0)$, so that the term in $\delta$ provides only $\partial_Z U_1(Z=0) = \partial_Z U_0(Z=0)$. Going back to the original variables, we obtain the relations
\[
u^+_0(z=z_{int}(\bx)) = u^+_1(z=z_{int}(\bx))\qquad\text{and}\qquad \partial_z u^+_0(z=z_{int}(\bx)) = \partial_z u^+_1(z=z_{int}(\bx)),
\]
leading to the continuity relations for $u$
\[
u(z=z_{int}(\bx)^+) = u(z=z_{int}(\bx)^-)\qquad\text{and}\qquad \partial_z u(z=z_{int}(\bx)^+) = \partial_z u(z=z_{int}(\bx)^-).
\]

\section{Proof of Lemma \ref{lem:mixing}}\label{proof:lem_mix}

The proof is carried out by induction on $n$. We focus on the second assertion of the lemma, as the first one follows from the same arguments by considering functions $g_j$ of a single argument and setting $\by_j=0$. The relation is obvious for $n=1$. Assume that the statement holds for $n$ terms, and we prove it for $n+1$ terms. To this end, we write
\begin{align*}
\E[X^\eta_1 X^\eta_2] & = \E[X^\eta_1 X^\eta_2] - \E[X^\eta_1]\E[X^\eta_2] + \E[X^\eta_1]\E[X^\eta_2] \\
& = Corr(X^\eta_1, X^\eta_2) \sqrt{Var(X^\eta_1)Var(X^\eta_2)} + \E[X^\eta_1]\E[X^\eta_2], 
\end{align*}
with
\[X^\eta_1:= g_{n+1}\Big(V\Big(\frac{\bx_{n+1}}{\eta}+\frac{\by_{n+1}}{2}\Big),V\Big(\frac{\bx_{n+1}}{\eta}-\frac{\by_{n+1}}{2}\Big)\Big)\qquad\text{and}\qquad X^\eta_2:= \prod_{j=1}^n g_j\Big(V\Big(\frac{\bx_j}{\eta}+\frac{\by_j}{2}\Big),V\Big(\frac{\bx_j}{\eta}-\frac{\by_j}{2}\Big)\Big),\]
By the induction hypothesis, the desired factorization property holds for $X^\eta_2$. Therefore, we only have to prove that
\[
\lim_{\eta\to 0} Corr(X^\eta_1, X^\eta_2) = 0,
\]
since 
\[
Var(X^\eta_1)Var(X^\eta_2) \leq \sup_{j,v_1,v_2} |g_j(v_1,v_2)|^{2(n+1)}<\infty.
\]
Now, using \eqref{def:rho}, we obtain
\[
Corr(X^\eta_1, X^\eta_2)  \leq \rho\Big(\min_{\substack{j=1,\dots, n \\ \nu_1,\nu_2=\pm}}\Big\vert\frac{\bx_{n+1}-\bx_j}{\eta} +  \nu_1\frac{\by_{n+1}}{2} +\nu_2 \frac{\by_{j}}{2}\Big\vert\Big),
\]
with 
\[
\min_{\substack{j=1,\dots, n \\ \nu_1,\nu_2=\pm}}\Big\vert\frac{\bx_{n+1}-\bx_j}{\eta} +  \nu_1\frac{\by_{n+1}}{2} +\nu_2 \frac{\by_{j}}{2}\Big\vert \geq \min_{j=1,\dots, n}\frac{|\bx_{n+1}-\bx_j|}{\eta} -  \frac{\vert\by_{n+1}\vert}{2} - \max_{j=1,\dots,n} \frac{\vert\by_{j}\vert}{2} \underset{\eta\to 0}{\longrightarrow} \infty,
\]
since all the $\bx_j$'s pairwise distinct. Using the decay assumption \eqref{def:alpha_mix} together with the equivalence result \eqref{prop:equiv_mix}, we conclude that
\[
\lim_{\eta\to 0} Corr(X^\eta_1, X^\eta_2) = 0.
\]
Finally, we obtain
\begin{align*}
\lim_{\eta \to 0}\E\Big[\prod_{j=1}^{n+1} g_j\Big(V\Big(\frac{\bx_j}{\eta}+\frac{\by_j}{2}\Big),V\Big(\frac{\bx_j}{\eta}-\frac{\by_j}{2}\Big)\Big) \Big] & = \lim_{\eta \to 0}\prod_{j=1}^{n+1} \E\Big[ g_j\Big(V\Big(\frac{\bx_j}{\eta}+\frac{\by_j}{2}\Big),V\Big(\frac{\bx_j}{\eta}-\frac{\by_j}{2}\Big)\Big) \Big]\\
&=\prod_{j=1}^{n+1} \E\Big[ g_j\Big(V\Big(\frac{\by_j}{2}\Big),V\Big(-\frac{\by_j}{2}\Big)\Big) \Big],
\end{align*}
where the last equality follows from the stationarity of the random field $V$. This completes the proof of Lemma \ref{lem:mixing}.

\section{Proof of Lemma \ref{lem:tech}}\label{sec:proof_lem_tech}

The proof of this lemma relies on a sequence of transformations that make explicit the role of the interface fluctuations in the reflection and transmission processes. These transformations consist of:
(\emph{i}) taking the Fourier transform with respect to time and lateral variables, (\emph{ii}) rescaling the dual variables according to the source scaling, and (\emph{iii}) performing a leading-order expansion of the vertical wavenumbers.

We first take the Fourier transform of the continuity relations \eqref{eq:continuity} with respect to time and the transverse variables. We then introduce the change of variables $\bk \to \bk_0/\se + \bq$ corresponding to the scaling of the source profile \eqref{def:a0}. This change is applied to all Fourier variables appearing in \eqref{eq:continuity}. In addition, we use the expansion of the vertical wavenumbers given in \eqref{eq:exp_lambda}, namely
\begin{equation}\label{eq:approx_lambda}
\bcals^\e_j\Big(\bq+\frac{\bk_0}{\se}\Big) = \bcals_j +\CalO(\se) \qquad\text{for }j=0,1. 
\end{equation}
Finally, reverting to the original dual variable via $\bq \to \bk-\bk_0/\se$, we obtain, at leading order in $\e$, the following continuity relations at the random interface:
\begin{align}\label{eq:continuity_leading}
\frac{1}{\sqrt{\bcals_0}}\check u^{\e,inc}_\omega & (\bx,z_{int})e^{i\omega \bcals_0V(\bx/\e^\gamma)} + \frac{1}{\sqrt{\bcals_0 }}\check u^{\e,ref}_\omega(\bx, z_{int})e^{-i\omega \bcals_0V(\bx/\e^\gamma)} = \frac{1}{\sqrt{\bcals_1 }} \check u^{\e,tr}_\omega(\bx, z_{int})e^{i\omega \bcals_1 V(\bx/\e^\gamma)},\\
\sqrt{\bcals_0 } \check u^{\e,inc}_\omega & (\bx, z_{int}) e^{i\omega \bcals_0 V(\bx/\e^\gamma)} -\sqrt{ \bcals_0 } \check u^{\e,ref}_\omega(\bx, z_{int})e^{-i\omega \bcals_0 V(\bx/\e^\gamma)} = \sqrt{\bcals_1 } \check u^{\e,tr}_\omega(\bx, z_{int})e^{i\omega \bcals_1 V(\bx/\e^\gamma)},\nonumber
\end{align}
where the incident, reflected, and transmitted Fourier components are defined by
\begin{align*}
\check u^{\e,inc}_\omega(\bx,z) & := \frac{\omega^2}{(2\pi)^2 \e } \int e^{i\omega \bx\cdot\bk/\se} \frac{ \hau(\bk)}{\sqrt{\omega}} e^{i\omega\lwu(\bk)(z-z_{int})/\e} d\bk, \\
\check u^{\e,ref}_\omega(\bx,z) & := \frac{\omega^2}{(2\pi)^2 \e }  \int e^{i\omega \bx\cdot\bk/\se} \frac{\hbr(\bk)}{\sqrt{\omega}} e^{-i\omega\lwu(\bk)(z-z_{int})/\e} d\bk,\\
\check u^{\e,tr}_\omega(\bx,z) & := \frac{\omega^2}{(2\pi)^2 \e } \int e^{i\omega \bx\cdot\bk/\se} \frac{\hatr(\bk)}{\sqrt{\omega}} e^{i\omega\lwd(\bk)(z-z_{int})/\e} d\bk.
\end{align*}
The expansion \eqref{eq:approx_lambda} is crucial for identifying the leading-order expressions of the mode amplitudes $\hbr(\bk)$ and $\hatr(\bk)$. Solving the system \eqref{eq:continuity_leading} yields
\begin{align*}
\check u^{\e,ref}_\omega(\bx, z_{int}) & = \CalR \, \check u^{\e,inc}_\omega(\bx,z_{int})e^{2i\omega \bcals_0 V(\bx/\e^\gamma)},\\
\check u^{\e,tr}_\omega(\bx,z_{int}) & =  \CalT \, \check u^{\e,inc}_\omega(\bx,z_{int})e^{i\omega (\bcals_0 - \bcals_1)V(\bx/\e^\gamma)},
\end{align*}
giving  for the reflection and transmission amplitudes:
\begin{align*}
\hbr(\bk) &= \frac{\e \CalR \omega^2}{2(2\pi)^2} \, \iint e^{-i\omega \bx' \cdot (\bk-\bk') /\se} e^{i\omega\lwu(\bk')z_{int}/\e} e^{2i\omega \bcals_0 V(\bx'/\e^\gamma)} \\
&\hspace{3cm}\times\sqrt{\omega \lwu(\bk')} \hat \Psi\Big(\omega, \bk' - \frac{\bk_0}{\se}\Big) d\bx' d\bk',\\
\hatr(\bk)&= \frac{\e \CalT \omega^2}{2(2\pi)^2} \, \iint e^{-i\omega \bx' \cdot (\bk-\bk') /\se} e^{i\omega\lwu(\bk')z_{int}/\e} e^{i\omega (\bcals_0-\bcals_1) V(\bx'/\e^\gamma)} \\
&\hspace{3cm}\times\sqrt{\omega \lwu(\bk')} \hat \Psi\Big(\omega, \bk' - \frac{\bk_0}{\se}\Big) d\bx' d\bk'.
\end{align*}

To remove   the source profile scaling, we introduce the change of variables
\[  \bk \to \frac{\bk_0}{\se} + \bq \qquad\text{and}\qquad  \bk' \to \frac{\bk_0}{\se} + \bq'. \]
Using again \eqref{eq:approx_lambda}, we obtain at the leading order in $\e$ the reflected and transmitted wavefields:
\begin{align*}
u^{\e,ref}(t,\bx, z=0) & \simeq \frac{\CalR}{2(2\pi)^5 \e }\iiiint e^{-i\omega(t -2\bcals_0 z_{int}-\bx\cdot \bk_0 ) /\e}   e^{2i\omega  \bcals_0 V(\bx'/\e^\gamma)} \nonumber \\
& \hspace{3cm} \times e^{i \omega \bq \cdot (\bx - \bx' - \bx_{int})/\se} e^{i \omega \bq' \cdot (\bx' - \bx_{int})/\se} \\
& \hspace{3cm} \times \hat \CalU_0(\omega,\bq,z_{int}) \hat \CalU_0(\omega,\bq',z_{int}) \hat \Psi(\omega, \bq') \omega^4 d\omega d\bx' d\bq' d\bq,\nonumber
\end{align*}
and
\begin{align*}
u^{\e,tr}(t,\bx, z=z_{tr}) & \simeq \frac{\CalT}{2(2\pi)^5 \e } \sqrt{\frac{\bcals_0}{\bcals_1}} \iiiint e^{-i\omega(t -\bcals_0 z_{int}-\bcals_1(z_{tr}-z_{int}))-\bx\cdot \bk_0 ) /\e}   e^{i\omega  (\bcals_0-\bcals_1) V(\bx'/\e^\gamma)} \nonumber \\
& \hspace{4cm} \times e^{i \omega \bq \cdot (\bx - \bx' - \bx_{tr})/\se} e^{i \omega \bq' \cdot (\bx' - \bx_{int})/\se}  \\
& \hspace{4cm} \times \hat \CalU_1(\omega,\bq,z_{tr}-z_{int}) \hat \CalU_0(\omega,\bq',z_{int})\hat \Psi(\omega, \bq') \omega^4 d\omega d\bx' d\bq' d\bq.\nonumber
\end{align*}
These expressions can be rewritten in terms of the scattering operator $K^\e$ defined in \eqref{def:scat_op}, which completes the proof of the lemma.

\end{document}